\newcommand{\ben}{\begin{enumerate}}
\newcommand{\een}{\end{enumerate}}
\title[Transfer learning in High-dimensional Regression]{Transfer Learning for High-dimensional Linear Regression: Prediction, Estimation, and Minimax Optimality}
	\author[Sai Li]{Sai Li}
	\address{Department of Biostatistics, Epidemiology and Informatics, Perelman School of Medicine, University of Pennsylvania, Philadelphia, PA 19104 }
\author[T. Tony Cai]{T. Tony Cai}
	\address{Department of Statistics, The Wharton School, University of Pennsylvania, Philadelphia, PA 19104}
	\author[Li, Cai and Li]{Hongzhe Li}
		\address{Department of Biostatistics, Epidemiology and Informatics, Perelman School of Medicine, University of Pennsylvania, Philadelphia, PA 19104}
\newcommand{\bel}{\begin{eqnarray}\label}
\newcommand{\eel}{\end{eqnarray}}
\newcommand{\bes}{\begin{eqnarray*}}
\newcommand{\ees}{\end{eqnarray*}}
\newcommand{\bei}{\begin{itemize}}
\newcommand{\beiftnt}{\begin{itemize}\footnotesize}
\newcommand{\eei}{\end{itemize}}
\def\benu{\begin{enumerate}}
\def\eenu{\end{enumerate}}
\def\argmin{\mathop{\rm arg\, min}}
\def\real{{\mathbb{R}}}
\def\R{{\real}}
\def\E{{\mathbb{E}}}
\def\P{{\mathbb{P}}}
\def\complex{\mathop{{\rm I}\kern-.58em\hbox{\rm C}}\nolimits}
\def\eps{\epsilon}
\def\lam{\lambda}
\def\Sig{\Sigma}
\def\sig{\sigma}
\def\gam{\gamma}
\def\hT{\widehat{T}}
\def\mA{\mathcal{A}}
\def\hQ{\widehat{Q}}
 \newtheorem{theorem}{Theorem}
\newtheorem{lemma}{Lemma}
\newtheorem{corollary}{Corollary}
\newtheorem{remark}{Remark}
\newtheorem{condition}{Condition}
\newtheorem{theoremA}{Theorem}
\begin{document}
\maketitle
\begin{abstract}
This paper considers the estimation and prediction of  a high-dimensional linear regression in the setting of  transfer learning, using samples from the target model as well as auxiliary samples from different but possibly related regression models.  When the set of ``informative" auxiliary samples is known, an estimator and a predictor are proposed and their optimality is established. The optimal rates of convergence for prediction and estimation are faster than the corresponding rates without using the auxiliary samples. This implies that knowledge from the informative auxiliary samples can be transferred to improve the learning performance of the target problem. In the case that the set of informative auxiliary samples is unknown, we propose a data-driven procedure for transfer learning, called Trans-Lasso, and reveal its robustness to non-informative auxiliary samples and its efficiency in knowledge transfer.  The proposed procedures are demonstrated in numerical studies and are applied to a dataset concerning the associations among gene expressions. It is shown that Trans-Lasso leads to improved performance in gene expression prediction in a target tissue by incorporating the data from multiple different tissues as auxiliary samples. 
\end{abstract}

\section{Introduction}
\label{sec1}


Modern scientific research is characterized by a collection of massive and diverse data sets.  One of the most important goals is to integrate these different data sets for making better predictions and statistical inferences. Given a target problem to solve, transfer learning \citep{Torrey10} aims at transferring the knowledge from different but related samples to improve the learning performance of the target problem.  A typical example  of transfer learning is that one can improve the accuracy of recognizing cars by using not only the labeled data for cars  but some labeled data for trucks \citep{weiss2016survey}. 
Besides classification, another relevant class of transfer learning problems is linear regression using auxiliary samples. In health-related studies, some biological or clinical  outcomes are hard to obtain due to ethical or cost issues, in which case transfer learning  can be leveraged to boost the prediction and estimation performance of these outcomes by gathering information from different but related biological outcomes.

Transfer learning has been applied to problems in medical and biological applications, including   predictions of protein localization \citep{Mei11}, biological imaging diagnosis \citep{Shin16}, drug sensitivity prediction \citep{Turki17} and integrative analysis of``multi-omics'' data, see, for instance, \citet{Sun16}, \citet{Hu19}, and \citet{Wang19}.  It has also been applied to  natural language processing \citep{Daume07} and recommendation systems \citep{Pan13} in machine learning literature.  
The application that motivated the research in this paper is to integrate gene expression data sets measured in different issues to understand the gene regulations using the Genotype-Tissue Expression (GTEx) data (\url{https://gtexportal.org/}). 
 These datasets are always high-dimensional with relatively small sample sizes. When studying the gene regulation relationships  of a specific tissue or cell-type, it is possible to borrow information from other issues  in order to enhance the learning accuracy.  This motivates us to consider transfer learning in high-dimensional linear regression. 

\subsection{Transfer Learning in High-dimensional Linear Regression}
\label{sec1-1}
Regression analysis is one of  the most widely used statistical methods to understand the association of an outcome with a set of covariates. In many modern applications, the dimension of the covariates is usually very high as compared to the sample size. Typical examples include the genome-wide association and gene expression studies. 
In this paper, we consider transfer learning in high-dimensional linear regression models. Formally, our target model can be written as
\begin{equation}
\label{m0}
    y^{(0)}_i=(x_i^{(0)})^{\intercal}\beta+\eps^{(0)}_i,~i=1,\dots,n_0,
\end{equation}
where $((x_i^{(0)})^{\intercal}, y_i^{(0)}), ~i=1,\dots, n_0$, are independent samples, $\beta\in\R^p$ is the regression coefficient of interest, and $\eps_i^{(0)}$ are independently distributed random noises such that $\E[\eps_i^{(0)}|x_i^{(0)}]=0$. In the high-dimensional regime, where $p$ can be larger and much larger than $n_0$, $\beta$ is often assumed to be sparse such that the number of nonzero elements of $\beta$, denoted by $s$, is much smaller than $p$. 

In the context of transfer learning, we observe  additional samples from $K$ auxiliary studies,  That is, we observe $((x_i^{(k)})^{\intercal}, y_i^{(k)})$ generated from the auxiliary model 
\begin{align}
y_i^{(k)}=(x_i^{(k)})^{\intercal}w^{(k)}+\eps^{(k)}_i,~i=1,\dots,n_k,~k=1,\dots, K,
\end{align}
where $w^{(k)}\in\R^p$ is the true coefficient vector for the $k$-th study, and $\eps^{(k)}_i$ are the random noises such that $\E[\eps^{(k)}_i|x_i^{(k)}]=0$. 
 The regression coefficients $w^{(k)}$ are unknown and different from our target $\beta$ in general. The number of auxiliary studies, $K$, is allowed to grow but practically $K$ may not be too large. We will study the estimation and prediction of target model (\ref{m0}) utilizing the primary data $((x_i^{(0)})^{\intercal}, y_i^{(0)}), ~i=1,\dots, n_0,$ as well as the data from $K$ auxiliary studies $((x_i^{(k)})^{\intercal}, y_i^{(k)}),~ i=1,\dots, n_k,~k=1,\dots, K$.

If useful information can be borrowed from the auxiliary samples, the target model and some of the auxiliary models need to possess a certain level of similarity. If an auxiliary model is ``similar'' to the target model, we say that this auxiliary sample/study is informative. In this work, we characterize the informative level of the $k$-th auxiliary study  using the sparsity of the difference between $w^{(k)}$ and $\beta$. Let $\delta^{(k)}=\beta-w^{(k)}$ denote the contrast between $w^{(k)}$ and $\beta$. The set of informative auxiliary samples are those whose contrasts are sufficiently sparse:
\begin{equation}
\label{def-A}
\mA_q=\{1\leq k\leq K: \|\delta^{(k)}\|_q\leq h\}, 
\end{equation}
for some $q\in[0,1]$. That is, the set $\mA_q$, which contains the auxiliary studies  whose contrast vectors have $\ell_q$-sparsity at most $h$, is called the {\it informative set}. It will be seen later that as long as $h$ is relatively small to the sparsity of $\beta$, the studies in $\mA_q$ can be useful in improving the prediction and estimation of $\beta$. In the case of $q=0$, the set $\mA_q$ corresponds to the auxiliary samples whose contrast vectors have at most $h$ nonzero elements. We also consider approximate sparsity constraints $(q\in (0,1])$, which allows all of the coefficients to be nonzero but their absolute magnitude decays at a relatively rapid rate. 
For any $q\in[0,1]$, smaller $h$ implies that the auxiliary samples in $\mA_q$ are more informative;  larger cardinality of $\mA_q$ ($|\mA_q|$) implies that a larger number of informative auxiliary samples. Therefore, smaller $h$ and larger $|\mA_q|$ should be favorable. We allow $\mA_q$ to be empty in which case none of the auxiliary samples are informative. For the auxiliary samples outside of $\mA_q$, we do not assume sparse $\delta^{(k)}$ and hence $w^{(k)}$ can be very different from $\beta$ for $k\notin \mA_q$.

There is a paucity of methods and fundamental theoretical results for high-dimensional linear regression in the transfer learning setting. In the case where the set of informative auxiliary samples $\mA_q$ is known, there is a lack of rate optimal estimation and prediction methods. A closely related topic is multi-task learning \citep{Ando05, LMT09}, where the goal is to simultaneously estimate multiple models using multiple response data. The multi-task learning considered in \citet{LMT09} estimates multiple high-dimensional sparse linear models under the assumption that the support of all the regression coefficients are the same. The goal of transfer learning is however different, as one is only interested in estimating the target model and this remains to be a largely unsolved problem.   
 \citet{CW19} studied the  minimax and adaptive methods for nonparametric classification in the transfer learning setting under similarity assumptions on all the auxiliary samples to the target distribution \citep[Definition 5]{CW19}. 
In the more challenging setting where the  set $\mA_q$ is unknown as is typical in real applications, it is unclear how to avoid the effects of adversarial auxiliary samples.  Additional challenges include the heterogeneity among the design matrices, which does not arise in the conventional high-dimensional regression problems and hence requires novel proposals.

\subsection{Our Contributions}

In the setting where the informative set $\mA_q$ is known, we propose a transfer learning algorithm, called Oracle Trans-Lasso, for estimation and prediction of the target regression vector and prove its minimax optimality under mild conditions. The result demonstrates a faster rate of convergence when $\mA_q$ is non-empty and $h$ is sufficiently smaller than $s$, in which case the knowledge from the informative auxiliary samples can be optimally transferred to substantially help solve the regression problem under the target model. 

In the more challenging setting where $\mA_q$ is unknown a priori, we introduce a data-driven algorithm, called Trans-Lasso, to adapt to the unknown $\mA_q$. The adaption is achieved by aggregating a number of candidate estimators. The desirable properties of the aggregation method guarantee that the Trans-Lasso is not much worse than the best one among the candidate estimators. We carefully construct the candidate estimators and, leveraging the properties of aggregation, demonstrate the robustness and the efficiency of Trans-Lasso under mild conditions. In terms of robustness, the Trans-Lasso is guaranteed to be not much worse than the Lasso estimator using only the primary samples no matter how adversarial the auxiliary samples are. In terms of efficiency, the knowledge from a subset of the informative auxiliary samples can be transferred to the target problem under proper conditions.  Furthermore, If  the contrast vectors in the informative samples are sufficiently sparse,  the Trans-Lasso estimator performs as if the informative set $\mA_q$ is known.

When the distributions of the designs are distinct in different samples, the effects of heterogeneous designs are studied. The performance of the proposed algorithms is justified theoretically and numerically in various settings.

\subsection{Related Literature}
Methods for incorporating auxiliary information into statistical inference have received much recent interest. In this context, \cite{Tony19} and \cite{Xia2020GAP} studied  the two-sample larges-scale multiple testing problems. \cite{Ban19} considered the high-dimensional sparse estimation and \cite{Mao19} focused on matrix completion. 
The auxiliary information in the aforementioned papers is given as some extra covariates while we have some additional raw data, which are high-dimensional, and it is not trivial to find the best way to summarize the information. \citet{Bastani18} studied estimation and prediction in high-dimensional linear models with one informative auxiliary study, where the sample size of the auxiliary study is larger than the number of covariates. This work considers more general scenarios under weaker assumptions. Specifically, the sample size of auxiliary samples can be smaller than the number of covariates and some auxiliary studies can be non-informative, which is more practical in applications. 

The problem we study here is certainly related to the high-dimensional prediction and estimation in the conventional settings where only samples from the target model are available.
Several $\ell_1$ penalized or constrained minimization methods have been proposed for prediction and estimation for high-dimensional linear regression; see, for example, \citet{Lasso, FL01, Zou06, CT07, Zhang10}. The minimax optimal rates for estimation and prediction are studied in \citet{Raskutti11} and \citet{Ver12}. 

\subsection{Organization and Notation}
The rest of this paper is organized as follows. Section \ref{sec2} focuses on the setting where the informative set $\mA_q$ is known and with the sparsity in \eqref{def-A} measured in $\ell_1$-norm.  A transfer learning algorithm is proposed for estimation and prediction of the target regression vector and its minimax optimality is established. In Section \ref{sec3}, we study the estimation and prediction of the target model when $\mA_q$ is unknown for $q=1$. In Section \ref{sec4}, we justify the theoretical performance of our proposals under heterogeneous designs and extend our main algorithms to deal with $\ell_q$-sparse contrasts for $q\in[0,1)$. In Section \ref{sec-simu}, the numerical performance of the proposed methods is studied in various settings. In Section \ref{sec-data}, the proposed algorithms are applied to an analysis of a Genotype-Tissue Expression (GTEx) dataset to investigate  the association of gene expression of one gene with other genes in a target tissue by leveraging data measured on other related tissues or cell types. 

We finish this section with notation. Let $X^{(0)}\in\R^{n_0\times p}$ and $y^{(0)}\in\R^{n_0}$ denote the design matrix and the response vector for the primary data, respectively. Let $X^{(k)}\in\R^{n_k\times p}$ and $y^{(k)}\in\R^{n_k}$ denote the design matrix and the response vector for the $k$-th sample, respectively. For a class of matrices $R_l\in\R^{n_l\times p_0}$, $l\in \mathcal{L}$, we use $\{R_l\}_{l\in \mathcal{L}}$ to denote $R_l$, $l\in \mathcal{L}$. Let $n_{\mA_q}=\sum_{k\in \mA_q}n_k$.  
For a generic semi-positive definite matrix $\Sig\in\R^{m\times m}$, let $\Lambda_{\max}(\Sig)$ and $\Lambda_{\min}(\Sig)$ denote the largest and smallest eigenvalues of $\Sig$, respectively. Let $\textrm{Tr}(\Sig)$ denote the trace of $\Sig$. Let $e_j$ be such that its $j$-th element is 1 and all other elements are zero. Let $a\vee b$ denote $\max\{a,b\}$ and $a\wedge b$ denote $\min\{a,b\}$. We use $c,c_0,c_1,\dots$ to denote generic constants which can be different in different statements. Let $a_n=O(b_n)$ and $a_n\lesssim b_n$ denote $|a_n/b_n|\leq c<\infty$ for some constant $c$ when $n$ is large enough. Let $a_n\asymp b_n$ denote $a_n/b_n\rightarrow c$ for some positive constant $c$ as $n\rightarrow \infty$. Let $a_n=O_P(b_n)$ and $a_n\lesssim_{\P} b_n$ denote $\P(|a_n/b_n|\leq c)\rightarrow 1$ for some constant $c<\infty$. Let $a_n=o_P(b_n)$ denote $\P(|a_n/b_n|>c)\rightarrow 0$ for any constant $c>0$.
\section{Estimation with Known Informative Auxiliary Samples}
\label{sec2}

In this section, we consider transfer learning for high-dimensional linear regression when the  informative set $\mA_q$ is known. We focus on the $\ell_1$-sparse characterization of the contrast vectors and leave the $\ell_q$-sparsity, $q\in[0,1)$, to Section \ref{sec4}. The notation $\mA_1$ will be abbreviated as $\mA$ in the sequel without special emphasis.

\subsection{Oracle Trans-Lasso Algorithm}
\label{sec2-1}
We propose a transfer learning algorithm, called {\it Oracle Trans-Lasso},  for estimation and prediction when $\mA$ is known.  As an overview, we first compute an initial estimator using the primary sample and all the informative auxiliary samples. 
However, its probabilistic limit is biased from $\beta$ as $w^{(k)}\neq \beta$ in general. We then correct its bias using the primary data in the second step.  Algorithm \ref{algo1} formally presents our proposed Oracle Trans-Lasso algorithm. 


\vspace{0.1in}\begin{algorithm}
 \SetKwInOut{Input}{Input}
    \SetKwInOut{Output}{Output}
\SetAlgoLined
 \Input{Primary data $(X^{(0)},y^{(0)})$ and informative auxiliary samples $\{X^{(k)},y^{(k)}\}_{k\in\mA}$}
 \Output{$\hat{\beta}$}

\underline{Step 1}. Compute
\begin{align}
\label{eq-hbeta-init}
\hat{w}^{\mA}&=\argmin_{w\in\R^p} \Big\{\frac{1}{2(n_{\mA}+n_0)}\sum_{k \in \mA\cup\{0\}}\|y^{(k)}-X^{(k)}w\|_2^2+\lam_w\|w\|_1\Big\}\;
\end{align}
for $\lam_w=c_1\sqrt{\log p/(n_0+n_{\mA})}$ with some constant $c_1$.

\underline{Step 2}. Let 
\begin{equation}
\label{eq-hbeta}
\hat{\beta}=\hat{w}^{\mA}+\hat{\delta}^{\mA},
\end{equation}
 where
\begin{equation}
\label{eq-hgam}
  \hat{\delta}^{\mA}=\argmin_{\delta\in\R^p} \left\{\frac{1}{2n_0}\|y^{(0)}-X^{(0)}(\hat{w}^{\mA}+\delta)\|_2^2+\lam_{\delta}\|\delta\|_1\right\}
\end{equation}
for $\lam_{\delta}=c_2\sqrt{\log p/n_0}$ with some constant $c_2$.
 \caption{\textbf{Oracle Trans-Lasso algorithm}} \label{algo1}
\end{algorithm}

In Step 1, $\hat{w}^{\mA}$ is realized based on the Lasso \citep{Lasso} using the primary sample and all the informative auxiliary samples. Its probabilistic limit is $w^{\mA}$, which can be defined via the following moment condition
\[
   \E\left[\sum_{k \in \mA\cup\{0\}} (X^{(k)})^{\intercal}(y^{(k)}-X^{(k)}w^{\mA})\right]=0.
\]
Denoting $\E[x_i^{(k)}(x_i^{(k)})^{\intercal}]=\Sig^{(k)}$, $w^{\mA}$ has the following explicit form:
\begin{align}
\label{eq-wA}
w^{\mA}=\beta+\delta^{\mA}
\end{align}
for $\delta^{\mA}=\sum_{k \in \mA}\alpha_k\delta^{(k)}$ and $\alpha_k=n_k/(n_{\mA}+n_0)$, if $\Sig^{(k)}=\Sig^{(0)}$ for all $k\in \mA$. 
That is, the probabilistic limit of $\hat{w}^{\mA}$, $w^{\mA}$, has bias $\delta^{\mA}$, which is a weighted average of $\delta^{(k)}$. Step 1 is related to the approach for high-dimensional misspecified models \citep{Bu15} and moment estimators. The estimator $\hat{w}^{\mA}$ converges relatively fast as the sample size used in Step 1 is relatively large. Step 2 corrects the bias, $\delta^{\mA}$, using the primary samples. In fact, $\delta^{\mA}$ is a sparse high-dimensional vector whose $\ell_1$-norm is no larger than $h$. 
Hence, the error of step 2 is under control for a relatively small $h$. 
The choice of the tuning parameters $\lam_w$ and $\lam_{\delta}$ will be further specified in Theorem \ref{thm0-l1}.



\subsection{Theoretical Properties of Oracle Trans-Lasso}
Formally, the parameter space we consider can be written as
\begin{equation}
\label{eq-Thetaq}
  \Theta_{q}(s,h)=\left\{(\beta,\delta^{(1)},\dots,\delta^{(K)} ): \|\beta\|_0\leq s, ~\max_{k \in \mA_q} \|\delta^{(k)}\|_q\leq h\right\}
\end{equation}
for $\mA_q\subseteq \{1,\dots, K\}$ and $q\in[0,1]$. 
We study the rate of convergence for the Oracle Trans-Lasso algorithm under the following two conditions. 
\begin{condition}
\label{cond1}{\rm
For each $k\in \mA\cup\{0\}$, each row of $X^{(k)}$ is \textit{i.i.d.} Gaussian distributed with mean zero and covariance matrix $\Sig$. The smallest and largest eigenvalues of $\Sig$ are bounded away from zero and infinity, respectively. 
}\end{condition}

\begin{condition}
\label{cond2}{\rm
For each $k\in\mA\cup\{0\}$, the random noises $\eps^{(k)}_i$ are \textit{i.i.d.} sub-Gaussian distributed mean zero and variance $\sig^2_k$.   For some constant $C_0$, it holds that $\max_{\mA\cup\{0\} }\E[\exp\{t\eps^{(k)}_i\}]\leq \exp\{t^2C_0\}$ for all $t\in \R$ and $\max_{0\leq k\leq K} \E[(y_i^{(k)})^2]$ is bounded away from infinity. 
}\end{condition}

Condition \ref{cond1} assumes random designs with Gaussian distribution. The Gaussian assumption provides convenience for bounding the restricted eigenvalues of sample Gram matrices. Moreover, the designs are identically distributed for $k\in \mA\cup\{0\}$. This assumption is for simplifying some technical conditions and will be relaxed in Section \ref{sec4}. Without loss of generality, we also assume the design matrices are normalized such that $\|X^{(k)}_{.,j}\|_2^2=n_k$ and $\Sig_{j,j}=1$ for all $1\leq j\leq p$, $k\in \mA\cup\{0\}$. Condition \ref{cond2} assumes sub-Gaussian random noises for primary and informative auxiliary samples and the second moment of the response vector is finite. Conditions \ref{cond1} and \ref{cond2} put no assumptions on the non-informative auxiliary samples as they are not used  in the Oracle Trans-Lasso algorithm.
In the next theorem, we prove the convergence rate of the Oracle Trans-Lasso.

\begin{theorem}[Convergence Rate of Oracle Trans-Lasso]
\label{thm0-l1}
\text{  }\text{ }
Assume that Condition \ref{cond1} and Condition \ref{cond2} hold true. We take
$\lam_w=$\text{ } $\max_{k\in\mA\cup\{0\}}c_1\sqrt{\E[(y^{(k)}_i)^2]\log p/(n_{\mA}+n_0})$ and $\lam_\delta=c_2\sqrt{\log p/n_0}$ for some sufficiently large constants $c_1$ and $c_2$ only depending on $C_0$.
If $s\log p/(n_{\mA}+n_0)+h(\log p/n_0)^{1/2}=o((\log p/n_0)^{1/4})$, then it holds that
\begin{align}
&\sup_{\beta\in\Theta_1(s,h)}\frac{1}{n_0}\|X^{(0)}(\hat{\beta}-\beta)\|_2^2\vee\|\hat{\beta}-\beta\|_2^2\nonumber\\
&=O_P\left(\frac{s\log p}{n_{\mA}+n_0}+\frac{s\log p}{n_0}\wedge h\sqrt{\frac{\log p}{n_0}}\wedge h^2\right) \label{l1-re1}.
\end{align}
\end{theorem}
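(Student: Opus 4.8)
The plan is to reduce everything to the analysis of the two Lasso fits and then to combine their errors, working throughout on a high-probability ``good event.'' First I would set up that event: by Condition \ref{cond1} and Gaussian concentration, both the pooled design (the rows of $X^{(k)}$, $k\in\mA\cup\{0\}$) and the primary design $X^{(0)}$ satisfy a restricted-eigenvalue/compatibility condition with constant bounded away from zero; and by Condition \ref{cond2} the empirical scores $\frac{1}{n_{\mA}+n_0}\sum_{k\in\mA\cup\{0\}}(X^{(k)})^{\intercal}(y^{(k)}-X^{(k)}w^{\mA})$ and $\frac{1}{n_0}(X^{(0)})^{\intercal}\eps^{(0)}$ have $\ell_\infty$-norm at most $\lam_w/2$ and $\lam_\delta/2$ respectively. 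Because these are uniform noise bounds, all the basic inequalities below hold deterministically on this event, which sidesteps the dependence created by reusing the primary sample in both steps.

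Step 1. I would treat \eqref{eq-hbeta-init} as a Lasso whose population target is $w^{\mA}=\beta+\delta^{\mA}$, which is $s$-sparse up to the $\ell_1$-bounded perturbation $\delta^{\mA}=\sum_{k\in\mA}\alpha_k\delta^{(k)}$ with $\|\delta^{\mA}\|_1\le h$ (using $\sum_{k\in\mA}\alpha_k\le 1$ and $\|\delta^{(k)}\|_1\le h$). The oracle inequality for an approximately sparse signal, taking the support $S=\supp(\beta)$ so that $\|w^{\mA}_{S^c}\|_1\le h$, yields $\|\hat{w}^{\mA}-w^{\mA}\|_1\lesssim_{\P}s\lam_w+h$ and $\frac{1}{n_{\mA}+n_0}\sum_{k\in\mA\cup\{0\}}\|X^{(k)}(\hat{w}^{\mA}-w^{\mA})\|_2^2\lesssim_{\P}s\lam_w^2+h\lam_w$. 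The delicate point is that the ``noise'' driving Step 1 is the misspecification residual $y^{(k)}-X^{(k)}w^{\mA}=X^{(k)}(w^{(k)}-w^{\mA})+\eps^{(k)}$; under Condition \ref{cond1} the definition of $w^{\mA}$ makes its population score vanish, and the stated $\lam_w\asymp\max_k\sqrt{\E[(y^{(k)}_i)^2]\log p/(n_{\mA}+n_0)}$ is exactly what controls the $\ell_\infty$-norm of this empirical score via sub-Gaussianity of the residuals.

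Step 2 and the three-way minimum. Writing $\delta^{\ast}=\beta-\hat{w}^{\mA}$, the Step 2 residual is $y^{(0)}-X^{(0)}\hat{w}^{\mA}=X^{(0)}\delta^{\ast}+\eps^{(0)}$ and $\hat{\beta}-\beta=\hat{\delta}^{\mA}-\delta^{\ast}$, so \eqref{eq-hgam} is a primary-sample Lasso estimating $\delta^{\ast}$. I would extract three competing bounds and then take their minimum. (a) Comparing the Step 2 objective at $\hat{\delta}^{\mA}$ with its value at $\delta=0$ gives the ``no-harm'' inequality $\frac{1}{n_0}\|X^{(0)}(\hat{\beta}-\beta)\|_2^2\le\frac{1}{n_0}\|X^{(0)}(\hat{w}^{\mA}-\beta)\|_2^2$; transferring the Step 1 prediction error to the primary design (legitimate since all designs share $\Sig$) and bounding $\frac{1}{n_0}\|X^{(0)}\delta^{\mA}\|_2^2\lesssim\Lambda_{\max}(\Sig)\|\delta^{\mA}\|_2^2\le\Lambda_{\max}(\Sig)h^2$ produces the $\frac{s\log p}{n_{\mA}+n_0}+h^2$ branch. (b) The slow-rate bound $\frac{1}{n_0}\|X^{(0)}(\hat{\beta}-\beta)\|_2^2\lesssim\lam_\delta\|\delta^{\ast}\|_1\lesssim\lam_\delta(h+s\lam_w)$ produces the $h\sqrt{\log p/n_0}$ branch. (c) The fast-rate oracle inequality for Step 2 run with the support $S$ of $\beta$, using $\|\delta^{\ast}_{S^c}\|_1\lesssim s\lam_w+h$, produces the $\frac{s\log p}{n_0}$ branch. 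Finally I would convert the prediction bound to the $\ell_2$ bound $\|\hat{\beta}-\beta\|_2^2$ through the primary-design restricted eigenvalue, the error vector satisfying the requisite cone condition.

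The main obstacle is making the three branches collapse exactly to $\frac{s\log p}{n_{\mA}+n_0}+\big(\frac{s\log p}{n_0}\wedge h\sqrt{\log p/n_0}\wedge h^2\big)$ rather than a messier expression. The cross terms, chiefly $\lam_\delta s\lam_w=s\log p/\sqrt{n_0(n_{\mA}+n_0)}$ from (b) and (c), must be absorbed: by AM--GM it is at most $\frac{s\log p}{n_{\mA}+n_0}+\frac{s\log p}{n_0}$, and in branch (a) the analogous $h\lam_w$ is at most $\tfrac12 h^2+\tfrac12\frac{s\log p}{n_{\mA}+n_0}$. The remaining higher-order products are forced to be negligible precisely by the smallness hypothesis $s\log p/(n_{\mA}+n_0)+h\sqrt{\log p/n_0}=o((\log p/n_0)^{1/4})$. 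Verifying that within each regime the selected branch dominates its own cross terms, together with the restricted-eigenvalue transfer between the pooled and primary designs and the $\ell_\infty$ control of the misspecified Step 1 score, is where essentially all of the work lies.
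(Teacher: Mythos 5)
Your overall architecture matches the paper's: a high-probability event combining restricted-eigenvalue bounds (for both the pooled and the primary Gram matrices, on $\ell_1$-balls whose radii are controlled by the hypothesis $s\log p/(n_{\mA}+n_0)+h\sqrt{\log p/n_0}=o((\log p/n_0)^{1/4})$) with $\ell_\infty$ bounds on the two empirical scores; a Step-1 analysis of $\hat w^{\mA}$ as a Lasso whose target $w^{\mA}=\beta+\delta^{\mA}$ is $s$-sparse plus an $\ell_1$-small perturbation, yielding $\|\hat u^{\mA}\|_1\lesssim s\lam_w+h$ and prediction error $\lesssim s\lam_w^2+h\lam_w$; and a Step-2 analysis producing several competing bounds whose minimum is taken. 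This is exactly Lemmas~\ref{tlem1}--\ref{tlem2} and the proof of Theorem~\ref{thm1-l1}.

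The genuine gap is in your branch (b), and it is precisely the cross term you flag as "the main obstacle." Centering the Step-2 slow-rate oracle inequality at $\delta^{\ast}=\beta-\hat w^{\mA}$ forces you to pay $\lam_\delta\|\delta^{\ast}\|_1\lesssim\lam_\delta h+\lam_\delta s\lam_w$, and your proposed absorption $\lam_\delta s\lam_w\le \tfrac12\bigl(\tfrac{s\log p}{n_{\mA}+n_0}+\tfrac{s\log p}{n_0}\bigr)$ injects an additive $\tfrac{s\log p}{n_0}$ into branch (b), making it no better than branch (c) up to constants. The three-way minimum then collapses to $\tfrac{s\log p}{n_0}\wedge h^2$, and the middle branch $h\sqrt{\log p/n_0}$ is lost: for example with $h=\sqrt{s}\,(\log p/n_0)^{1/2}$ the theorem asserts the rate $\tfrac{s\log p}{n_{\mA}+n_0}+\sqrt{s}\,\tfrac{\log p}{n_0}$, while your branches deliver only $\tfrac{s\log p}{n_{\mA}+n_0}+s\,\tfrac{\log p}{n_0}$, a loss of $\sqrt{s}$. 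There is no choice of weights in the AM--GM step that repairs this, because $s\lam_w(\lam_\delta-\lam_w)\le h\lam_\delta$ is false in general. The paper avoids ever forming the product $\lam_\delta\cdot s\lam_w$: it centers the Step-2 inequality at the deterministic vector $\delta^{\mA}$ (so the only $\lam_\delta\times\ell_1$ term is $\lam_\delta\|\delta^{\mA}\|_1\le\lam_\delta h$) and treats the propagated Step-1 error as noise measured in prediction norm, bounding the cross term by Young's inequality as
\begin{equation*}
\frac{1}{n_0}\bigl|\langle X^{(0)}\hat v^{\mA},\,X^{(0)}\hat u^{\mA}\rangle\bigr|\le \frac{1}{n_0}\|X^{(0)}\hat u^{\mA}\|_2^2+\frac{1}{4n_0}\|X^{(0)}\hat v^{\mA}\|_2^2,
\end{equation*}
followed by a case analysis on whether $\lam_\delta\|\delta^{\mA}\|_1$ or $\frac{1}{n_0}\|X^{(0)}\hat u^{\mA}\|_2^2\lesssim s\lam_w^2+h\lam_w$ dominates; both cases land inside $\frac{s\log p}{n_{\mA}+n_0}+h\sqrt{\log p/n_0}$ since $\lam_w\le\lam_\delta$. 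You should restructure Step 2 along these lines (your branch (c) has the same contamination through $\|\delta^{\ast}_{S^c}\|_1\gtrsim h$, which likewise prevents it from cleanly delivering the $\frac{s\log p}{n_0}$ branch when $h\gg s\sqrt{\log p/n_0}$).
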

Theorem \ref{thm0-l1} provides the convergence rate of $\hat{\beta}$ for any $\beta\in\Theta_1(s,h)$.  
In the trivial case where $\mA$ is empty, the right-hand side in (\ref{l1-re1}) is $O_P(s\log p/n_0)$, which is the convergence rate for the Lasso only using primary samples. When $\mathcal{A}$ is non-empty, the right-hand side of (\ref{l1-re1}) is sharper than $s\log p/n_0$ if  $h\sqrt{\log p/n_0}\ll s$ and $n_{\mA}\gg n_0$. That is, if the informative auxiliary samples have contrast vectors sufficiently sparser than $\beta$ and the total sample size is significantly larger than the primary sample size, then the knowledge from the auxiliary samples can significantly improve the learning performance of the target model. In practice, even if $n_{\mA}$ is comparable to $n_0$, the Oracle Trans-Lasso can still improve the empirical performance as shown by some numerical experiments provided in Section \ref{sec-simu}.

The sample size requirement in Theorem \ref{thm0-l1} guarantees the lower restricted eigenvalues of the sample Gram matrices in Step 1 and Step 2 are bounded away from zero with high probability. The proof of Theorem \ref{thm0-l1} involves an error analysis of $\hat{w}^{\mA}$ and that of $\hat{\delta}^{\mA}$. While $w^{\mA}$ may be  neither $\ell_0$- nor $\ell_1$-sparse, it can be decomposed into an $\ell_0$-sparse component plus an $\ell_1$-sparse component as illustrated in (\ref{eq-wA}). Exploiting this sparse structure is a key step in proving Theorem \ref{thm0-l1}.
Regarding the choice of tuning parameters, $\lam_w$ depends on the second moment of $y_i^{(k)}$, which can be consistently estimated by $\|y^{(k)}\|_2^2/n_k$. The other tuning parameter $\lam_{\delta}$ depends on the noise levels, which can be estimated by the scaled Lasso \citep{scaled-lasso}.  In practice, cross validation can be performed for selecting tuning parameters.

 We now establish the minimax lower bound for estimating $\beta$ in the transfer learning setup, which shows the minimax optimality of the  Oracle Trans-Lasso algorithm in $\Theta_1(s,h)$.
\begin{theorem}[Minimax lower bound for $q=1$]
\label{thm2-low}
Assume Condition \ref{cond1} and Condition \ref{cond2}. If $\max\{s\log p/(n_{\mA}+n_0),~h(\log p/n_0)^{1/2}\}=o(1)$, then
{\small
\begin{align*}
&\P\left(\inf_{\hat{\beta}}\sup_{\beta\in \Theta_1(s,h)} \|\hat{\beta}-\beta\|_2^2\geq c_1\frac{s\log p}{n_{\mA}+n_0}+ c_2 \frac{s\log p}{n_0}\wedge h\left(\frac{\log p}{n_0}\right)^{1/2}\wedge h^2\right)\geq \frac{1}{2}
\end{align*}
}
for some positive constants $c_1$ and $c_2$.
\end{theorem}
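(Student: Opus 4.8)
The plan is to establish the two terms of the lower bound separately and then combine them via $a+b\le 2(a\vee b)$, so that a lower bound on the maximum of the two terms reproduces the stated sum up to constants. Both halves will be instances of Fano's method: I construct a finite subfamily of parameters inside $\Theta_1(s,h)$, reduce estimation to a multi-way hypothesis test, and use the fact that small estimation error forces a correct test. Since Conditions~\ref{cond1} and~\ref{cond2} only constrain the model class, it suffices to exhibit the hard subfamilies with Gaussian noise $N(0,\sig^2)$, which is admissible and makes the Kullback--Leibler computations explicit: for one primary observation, $\mathrm{KL}(\P_{\beta_1}\|\P_{\beta_2})=\frac{1}{2\sig^2}(\beta_1-\beta_2)^{\intercal}\Sig(\beta_1-\beta_2)\le \frac{\Lambda_{\max}(\Sig)}{2\sig^2}\|\beta_1-\beta_2\|_2^2$, with contributions adding over independent samples. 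The ``with probability at least $1/2$'' form is exactly the in-probability conclusion of Fano's inequality, where the testing error is bounded below by a constant I will arrange to exceed $1/2$.

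For the first term $s\log p/(n_{\mA}+n_0)$, I would set $\delta^{(k)}=0$ for every $k\in\mA$, so that the primary sample and all informative auxiliary samples are i.i.d.\ draws from the target model; effectively there are $N:=n_{\mA}+n_0$ i.i.d.\ observations of an $s$-sparse $\beta$. Using the Varshamov--Gilbert bound I take a packing $\{\beta_1,\dots,\beta_M\}$ of $s$-sparse vectors with entries in $\{0,\pm\rho\}$, $\log M\gtrsim s\log(p/s)\asymp s\log p$, and pairwise separation $\|\beta_i-\beta_j\|_2^2\gtrsim s\rho^2$. Choosing $\rho^2\asymp \log p/N$ makes the pooled KL at most $N\cdot s\rho^2\cdot\Lambda_{\max}(\Sig)/(2\sig^2)\lesssim s\log p$, a small fraction of $\log M$; Fano then yields the bound $\gtrsim s\log p/(n_{\mA}+n_0)$. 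All these $\beta_i$ lie in $\Theta_1(s,h)$ since their contrasts vanish.

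For the second term I would make the auxiliary data uninformative by fixing $w^{(k)}=0$ for all $k\in\mA$ (and fixing the non-informative $w^{(k)}$, $k\notin\mA$, at arbitrary constants). Then $y^{(k)}=\eps^{(k)}$ has a law not depending on $\beta$, so only the $n_0$ primary observations discriminate among family members, while $\delta^{(k)}=\beta$ forces $\|\beta\|_1\le h$ and $\|\beta\|_0\le s$. This reduces the second term to the minimax rate for estimating a simultaneously $s$-sparse and $\ell_1(h)$-constrained vector from $n_0$ Gaussian-design samples, which is the three-way minimum. I would split into three regimes: (i) for small $h$, use the $p$ single-spike hypotheses $\{h\,e_j\}_{j=1}^{p}$ together with $0$, where $\log M\asymp\log p$, each KL is $\asymp n_0 h^2$, and Fano applies once $h\lesssim\sqrt{\log p/n_0}$, giving $\gtrsim h^2$; (ii) in the intermediate regime, use vectors supported on $k$-subsets with entries $\pm h/k$, so $\|\beta\|_1\le h$ and $\|\beta\|_2^2\asymp h^2/k$, choosing $k\asymp h\sqrt{n_0/\log p}$ to balance $\log M\asymp k\log(p/k)$ against $\mathrm{KL}\asymp n_0h^2/k$ and obtain separation $\asymp h\sqrt{\log p/n_0}$, valid while $k\le s$, i.e.\ $h\lesssim s\sqrt{\log p/n_0}$; (iii) for larger $h$ the $\ell_1$ budget is slack, and I would reuse the $s$-sparse packing with entries $\asymp\sqrt{\log p/n_0}$, whose $\ell_1$-norm $\asymp s\sqrt{\log p/n_0}\le h$, recovering $\gtrsim s\log p/n_0$.

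The step I expect to be most delicate is regime (ii) of the second term: the packing must simultaneously respect the $\ell_1$-budget $h$, the hard sparsity cap $s$, a separation of the target order, and a KL small enough for Fano, and one must verify that the balancing choice $k\asymp h\sqrt{n_0/\log p}$ satisfies $1\le k\le s\wedge p$ throughout the intermediate range---the boundaries $h\asymp\sqrt{\log p/n_0}$ and $h\asymp s\sqrt{\log p/n_0}$ being exactly where the three-way minimum switches branches. The remaining bookkeeping---checking membership in $\Theta_1(s,h)$, using the assumption $\max\{s\log p/(n_{\mA}+n_0),\,h\sqrt{\log p/n_0}\}=o(1)$ so that all perturbations stay in a regime where $\Lambda_{\min}(\Sig)$ bounds the separation from below while $\Lambda_{\max}(\Sig)$ bounds the KL from above, and assembling the two halves via $a+b\le 2(a\vee b)$---is routine.
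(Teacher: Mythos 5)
Your proposal is correct and follows essentially the same route as the paper: the first term is obtained by setting all contrasts to zero and packing $s$-sparse vectors over the pooled $n_{\mA}+n_0$ samples (as in Raskutti et al.), and the second term by setting $w^{(k)}=0$ so that $\beta\in\mathcal{B}_0(s)\cap\mathcal{B}_1(h)$ must be estimated from the $n_0$ primary samples alone, with the same three-regime packing (single spike of height $h$; $k$-sparse with $k\asymp h\sqrt{n_0/\log p}$ and entries $\asymp\sqrt{\log p/n_0}$; full $s$-sparse packing) that the paper imports from Rigollet--Tsybakov. The only difference is that you spell out the Fano/Varshamov--Gilbert details that the paper delegates to citations.
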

Theorem \ref{thm2-low} implies that $\hat{\beta}$ obtained by the Oracle Trans-Lasso algorithm is minimax rate optimal in $\Theta_1(s,h)$ under the conditions of Theorem \ref{thm0-l1}. To understand the lower bound, the term $s\log p/(n_{\mA}+n_0)$ is the optimal convergence rate when $w^{(k)}=\beta$ for all $k \in \mA$. This is an extremely ideal case where we have $n_{\mA}+n_0$ \textit{i.i.d.} samples from the target model. The second term in the lower bound is the optimal convergence rate when $w^{(k)}=0$ for all $k \in \mA$, i.e., the auxiliary samples are not helpful at all. Let $\mathcal{B}_q(r)=\{u\in \R^p:\|u\|_q\leq r\}$ denote the $\ell_q$-ball with radius $r$ centered at zero. In this case, the definition of $\Theta_1(s,h)$ implies that $\beta\in \mathcal{B}_0(s)\cap \mathcal{B}_1(h)$ and the second term in the lower bound is indeed the minimax optimal rate for estimation when $\beta\in \mathcal{B}_0(s)\cap \mathcal{B}_1(h)$ with $n_0$ \textit{i.i.d.} samples \citep{TS14}.  

\section{Unknown Set of Informative Auxiliary Samples}
\label{sec3}
The  Oracle Trans-Lasso algorithm is based on the knowledge of the informative set $\mathcal{A}$. In some applications, the informative set $\mathcal{A}$ is not given, which makes the transfer learning problem more challenging. In this section, we propose a data-driven method for estimation and prediction when $\mathcal{A}$ is unknown. The proposed algorithm is described in detail in Section \ref{sec3-1} and \ref{sec3-2}. Its theoretical properties are studied in Section \ref{sec3-3}.

 
\subsection{The Trans-Lasso Algorithm}
\label{sec3-1}

Our proposed algorithm, called Trans-Lasso, consists of two main steps. First, we construct a collection of candidate estimators, where each of them is based on an estimate of $\mA$. Second, we perform an aggregation step \citep{RT11, DRZ12, Dai18} on these candidate estimators. Under proper conditions, the aggregated estimator is guaranteed to be not much worse than the best candidate estimator under consideration in terms of prediction. For technical reasons, we need the candidate estimators and the sample for aggregation to be independent. Hence, we start with sample splitting. 
We need some more notation. For a generic estimate of $\beta$, $b$, denote its sum of squared prediction error as 
\[
   \hQ(\mathcal{I},b)=\sum_{i\in \mathcal{I}}\|y^{(0)}_i-(x^{(0)}_i)^{\intercal}b\|_2^2,
\]
where $\mathcal{I}$ is a subset of $\{1,\dots,n_0\}$.
Let $\Lambda^{L+1}=\{\nu\in\R^{L+1}: \nu_l\geq 0,\sum_{l=0}^L\nu_l=1\}$ denote an $L$-dimensional simplex.
The Trans-Lasso algorithm is presented in Algorithm \ref{algo2}.

\begin{algorithm}
 \SetKwInOut{Input}{Input}
    \SetKwInOut{Output}{Output}
\SetAlgoLined
 \Input{ Primary data $(X^{(0)},y^{(0)})$ and samples from $K$ auxiliary  studies  $\{X^{(k)},y^{(k)}\}_{k=1}^K$.}
 \Output{$\hat{\beta}^{\hat{\theta}}$.}

\underline{Step 1}. 
Let $\mathcal{I}$ be a random subset of $\{1,\dots,n_0\}$ such that $|\mathcal{I}|\approx n_0/2$. Let $\mathcal{I}^c=\{1,\dots,n_0\}\setminus \mathcal{I}$. 

\underline{Step 2}.
 Construct $L+1$ candidate sets of $\mathcal{A}$, $\big\{\widehat{G}_0,\widehat{G}_1,\dots,\widehat{G}_{L}\big\}$ such that $\widehat{G}_0=\emptyset$ and $\widehat{G}_1,\dots,\widehat{G}_{L}$ are based on (\ref{eq-hGl}) using $\left(X_{\mathcal{I},.}^{(0)},y_{\mathcal{I}}^{(0)}\right)$ and $\{X^{(k)},y^{(k)}\}_{k=1}^M$.

\underline{Step 3}. For each $0\leq l\leq L$, run the Oracle Trans-Lasso algorithm with primary sample $(X_{\mathcal{I},.}^{(0)},y_{\mathcal{I}}^{(0)})$ and auxiliary samples $\{X^{(k)},y^{(k)}\}_{k\in\widehat{G}_l}$. Denote the output as $\hat{\beta}(\widehat{G}_l)$ for $0\leq l\leq L$.

\underline{Step 4}. 
Compute 
{\small
\begin{align}
  &\hat{\theta}=\label{theta-bma}\\
  &\argmin_{\theta\in\Lambda^{L+1}}\left\{\widehat{Q}\big(\mathcal{I}^c,\sum_{l=0}^L\hat{\beta}(\widehat{G}_l)\theta_l\big)+\sum_{l=0}^L\theta_l \widehat{Q}(\mathcal{I}^c,\hat{\beta}(\widehat{G}_l))+\frac{2\lam_{\theta}\log (L+1)}{n_0}\|\theta\|_1\right\}\nonumber
 \end{align}
 }
for some $\lam_{\theta}>0$. Output
\begin{equation}
\label{hbeta-htheta}
   \hat{\beta}^{\hat{\theta}}=\sum_{l=0}^L\hat{\theta}_{l}\hat{\beta}(\widehat{G}_l).
\end{equation}
 \caption{\textbf{Trans-Lasso Algorithm}} \label{algo2}
\end{algorithm}
As an illustration, steps 2 and 3 of the Trans-Lasso algorithm are devoted to constructing some initial estimates of $\beta$, $\hat{\beta}(\widehat{G}_l)$. They are computed using the Oracle Trans-Lasso algorithm by treating each $\widehat{G}_l$ as the set of informative auxiliary samples. We construct $\widehat{G}_l$ to be a class of estimates of $\mA$ and the detailed procedure is provided in Section \ref{sec3-2}.
Step 4 is based on the Q-aggregation proposed in \citet{DRZ12} with a uniform prior and a simplified tuning parameter. The Q-aggregation can be viewed as a weighted version of least square aggregation and exponential aggregation \citep{RT11} and it has been shown to be rate optimal both in expectation and with high probability for model selection aggregation problems.

The framework of model selection aggregation is a good fit for the transfer learning task under consideration. 
On one hand, it guarantees the robustness of Trans-Lasso in the following sense. Notice that $\hat{\beta}(\widehat{G}_0)$ corresponds to the Lasso estimator only using the primary samples and it is always included in our dictionary. The purpose is that, invoking the property of model selection aggregation, the performance of $\hat{\beta}^{\hat{\theta}}$ is guaranteed to be not much worse than the performance of the original Lasso estimator under mild conditions. This shows the performance of Trans-Lasso will not be ruined by adversarial auxiliary samples. Formal statements are provided in Section \ref{sec3-3}.
On the other hand, the gain of Trans-Lasso relates to the qualities of $\widehat{G}_1,\dots,\widehat{G}_L$.
If
\begin{equation}
\label{cond-agg}
\P\left(\widehat{G}_l\subseteq\mathcal{A}, ~\text{for some}~ 1\leq l\leq L\right)\rightarrow 1,
\end{equation} 
i.e., $\widehat{G}_l$ is a nonempty subset of the informative set $\mA$, then the model selection aggregation property implies that the performance of $\hat{\beta}^{\hat{\theta}}$ is not much worse than the performance of the Oracle Trans-Lasso with $\sum_{k\in \widehat{G}_l}n_k$ informative auxiliary samples. 
Ideally, one would like to achieve $\widehat{G}_l=\mathcal{A}$ for some $1\leq l\leq L$ with high probability. However, it can rely on strong assumptions that  may not be guaranteed in practical situations.

To motivate our constructions of $\widehat{G}_l$, let us first point out a naive construction of candidate sets, which consists of $2^K$ candidates. These candidates are all different combinations of $\{1,\dots, K\}$, denoted by $\widehat{G}_1,\dots, \widehat{G}_{2^K}$.
It is obvious that $\mA$ is an element of this candidate sets. However, the number of candidates is too large and it can be computationally burdensome. Furthermore, the cost of aggregation can be significantly high, which is of order $K/n_0$ as will be seen in Lemma \ref{thm-ag1}.
In contrast, we would like to pursue a much smaller number of candidate sets such that the cost of aggregation is almost negligible and (\ref{cond-agg}) can be achieved under mild conditions.
We introduce our proposed construction of candidate sets in the next subsection.

\subsection{Constructing the Candidate Sets for Aggregation}
\label{sec3-2}
As illustrated in Section \ref{sec3-1}, the goal of Step 2 is to have a class of candidate sets, $\{\widehat{G}_0,\dots,\widehat{G}_{L}\}$,  that satisfy (\ref{cond-agg}) under certain conditions. 
Our idea is to exploit the sparsity patterns of the contrast vectors.
Specifically, recall that the definition of $\mA$ implies that $\{\delta^{(k)}\}_{k \in \mA}$ are sparser than $\{\delta^{(k)}\}_{k \in \mA^c}$, where $\mA^c=\{1,\dots,K\}\setminus \mA$. This property motivates us to find a sparsity index $R^{(k)}$ and its estimator $\widehat{R}^{(k)}$ for each $1\leq k\leq K$ such that 
\begin{align}
\label{cond-rk}
 \max_{k \in \mA^o}R^{(k)}< \min_{k \in \mA^c}R^{(k)}\quad\text{and}\quad   \P\left(\max_{k \in \mA^o}\widehat{R}^{(k)}< \min_{k \in \mA^c}\widehat{R}^{(k)}\right)\rightarrow 1,
\end{align}
where $\mA^o$ is some subset of $\mA$.
In words, the sparsity indices in $\mA^o$ are no larger than the sparsity indices in $\mA^c$ and so are their estimators with high probability. To utilize (\ref{cond-rk}), we can define the candidate sets as
\begin{align}
\label{eq-hGl}
   \widehat{G}_l=\left\{1\leq k\leq K: \widehat{R}^{(k)}~ \text{is among the first $l$ smallest of all}\right\}
\end{align}
for $1\leq l\leq K$.
That is, $\widehat{G}_l$ is the set of auxiliary samples whose estimated sparsity indices are among the first $l$ smallest.
A direct consequence of (\ref{cond-rk}) and (\ref{eq-hGl}) is that $\P(\widehat{G}_{|\mA^o|}=\mA^o)\rightarrow 1$ and hence the desirable property (\ref{cond-agg}) is satisfied. 
To achieve the largest gain in transfer learning, we would like to find proper sparsity indices such that (\ref{cond-rk}) holds for $|\mA^o|$ as large as possible. 
Notice that $\widehat{G}_{K+1}=\{1,\dots, K\}$ is always included as candidates according to (\ref{eq-hGl}). Hence, in the special cases where all the auxiliary samples are informative or none of the auxiliary samples are informative, it holds that $\widehat{G}_{|\mA|}=\mA$ and the Trans-Lasso is not much worse than the Oracle Trans-Lasso.
The more challenging cases are $0<|\mA|<K$.

As $\{\delta^{(k)}\}_{k\in\mA^c}$ are not necessarily sparse, the estimation of $\delta^{(k)}$ or functions of $\delta^{(k)}$, $1\leq k\leq K$, is not trivial.
We consider using $R^{(k)}=\|\Sig\delta^{(k)}\|_2^2$, which is a function of the population-level marginal statistics, as the oracle sparsity index for $k$-th auxiliary sample. The advantage of $R^{(k)}$ is that it has a natural unbiased estimate without further assumptions. 
Let us relate $R^{(k)}$ to the sparsity of $\delta^{(k)}$ using a Bayesian characterization of sparse vectors assuming $\Sig^{(k)}=\Sig$ for all $0\leq k\leq K$. If $\delta^{(k)}_j$ are \textit{i.i.d.} Laplacian distributed with mean zero and variance $\nu_k^2$ for each $k$, then it follows from the properties of Laplacian distribution \citep{LK15} that
\[
   \E[\|\delta^{(k)}\|_1]=p\nu_k=\E^{1/2}[\|\Sig\delta^{(k)}\|^2_2]\frac{p}{\textrm{Tr}^{1/2}(\Sig\Sig)},
\]
where $\textrm{Tr}^{1/2}(\Sig\Sig)/p$ does not depend on $k$. Hence, the rank of $\E[\|\Sig\delta^{(k)}\|^2_2]$ is the same as the rank of $\E[\|\delta^{(k)}\|_1]$. As $\max_{k\in \mA}\|\delta^{(k)}\|_1< \min_{k\in \mA^c}\|\delta^{(k)}\|_1$, it is reasonable to expect $\max_{k\in \mA}\|\Sig\delta^{(k)}\|^2_2< \min_{k\in \mA^c}\|\Sig\delta^{(k)}\|^2_2$.
 Obviously, the above derivation holds for many other zero mean prior distributions besides Laplacian. This illustrates our motivation for considering $R^{(k)}$ as the oracle sparsity index.

We next  introduce the estimated version, $\widehat{R}^{(k)}$, based on the primary data $\{(x_i^{(0)})^{\intercal},y_{i}^{(0)}\}_{i\in \mathcal{I}}$ (after sample splitting) and auxiliary samples $\{X^{(k)},y^{(k)}\}_{k=1}^K$. 
We first perform a SURE screening \citep{FL08} on the marginal statistics to reduce the effects of  random noises.  
We summarize our proposal for Step 2 of the Trans-Lasso as follows (see Algorithm \ref{algo2.2}). Let $n_*=\min_{0\leq k\leq K}n_k$.

\begin{algorithm}[H]
\underline{Step 2.1}. 
For $1\leq k\leq K$, compute the marginal statistics
\begin{align}
\widehat{\Delta}^{(k)}&=\frac{1}{n_k}\sum_{i=1}^{n_k}x_i^{(k)}y_i^{(k)}-\frac{1}{|\mathcal{I}|}\sum_{i\in \mathcal{I}}x_i^{(0)}y_i^{(0)},\label{eq-hDeltak}\
\end{align}
For each $k\in\{1,\dots, K\}$, let $\hT_k$ be obtained by SURE screening such that 
\begin{align*}
  & \hT_k=\left\{1\leq j\leq p:~|\widehat{\Delta}_j^{(k)}|~\text{is among the first}~t_*~\text{largest of all}\right\}
   \end{align*}
for a fixed $t_*=n_*^{\alpha},~0\leq \alpha<1$.

\underline{Step 2.2}. Define the estimated sparse index for the $k$-th auxiliary sample as
\begin{align}
\label{eq-hRk}
   \widehat{R}^{(k)}=\left\|\widehat{\Delta}_{\hT_k}^{(k)}\right\|_2^2.
\end{align}
\underline{Step 2.3}. 
Compute $\widehat{G}_l$ as in (\ref{eq-hGl}) for $l=1,\dots, L$.
\caption{\textbf{Step 2 of the Trans-Lasso Algorithm}} \label{algo2.2}
\end{algorithm}

One can see that $\widehat{\Delta}^{(k)}$ are empirical marginal statistics such that $\E[\widehat{\Delta}^{(k)}]=\Sig\delta^{(k)}$ for $k\in\mA$.
The set $\hT_k$ is the set of first $t_*$ largest marginal statistics for the $k$-th sample. 
The purpose of screening the marginal statistics is to reduce the magnitude of noise. Notice that the un-screened version $\|\widehat{\Delta}^{(k)}\|_2^2$ is a sum of $p$ random variables and it contains noise of order $p/(n_k\wedge n_0)$, which diverges fast as $p$ is much larger than the sample sizes. By screening with $t_*$ of order $n_*^{\alpha}$, $\alpha<1$, the errors induced by the random noises is under control. In practice, the auxiliary samples with very small sample sizes can be removed from the analysis as their contributions to the target problem is mild.  
Desirable choices of $\hT_k$ should keep the variation of $\Sig\delta^{(k)}$ as much as possible.  
Under proper conditions, SURE screening can consistently select a set of strong marginal statistics and hence is appropriate for the current purpose. 
In Step 2.2, we compute $\widehat{R}^{(k)}$ based on the marginal statistics which are selected by SURE screening. 
In practice, different choices of $t_*$ may lead to different realizations of $\widehat{G}_l$. One can compute multiple sets of $\{\widehat{R}^{(k)}\}_{k=1}^K$ with different $t_*$ which give multiple sets of $\{\widehat{G}_l\}_{l=1}^K$. It will be seen from Lemma \ref{thm-ag1} that a finite number of choices on $t_*$ does not affect the rate of convergence.

\subsection{Theoretical Properties of Trans-Lasso}
\label{sec3-3}
In this subsection, we derive the theoretical guarantees for the Trans-Lasso algorithm. We first establish the model selection aggregation type of results for the Trans-Lasso estimator $\hat{\beta}^{\hat{\theta}}$.
\begin{lemma}[Q-aggregation for Trans-Lasso]
\label{thm-ag1}
Assume that Condition \ref{cond1} and Condition \ref{cond2} hold true. Let $\hat{\theta}$ be computed with 
$\lam_{\theta}\geq 4\sig_0^2$. With probability at least $1-t$, it holds that
\begin{align}
& \frac{1}{|\mathcal{I}^c|}\left\|X^{(0)}_{\mathcal{I}^c,.}(\hat{\beta}^{\hat{\theta}}-\beta)\right\|_2^2\leq \min_{0\leq l\leq L} \frac{1}{|\mathcal{I}^c|}\left\|X^{(0)}_{\mathcal{I}^c,.}(\hat{\beta}(\widehat{G}_l)-\beta)\right\|_2^2+\frac{\lam_{\theta}\log (L/t)}{n_0}.\label{re0-ms}
\end{align}
If $\|\Sig\|_2L\leq c_1n_0$ for some small enough constant $c_1$, then
\begin{align}
\label{re1-ms}
\left\|\hat{\beta}^{\hat{\theta}}-\beta\right\|_2^2\lesssim_{\P} \min_{0\leq l\leq L} \frac{1}{|\mathcal{I}^c|}\left\|X^{(0)}_{\mathcal{I}^c,.}(\hat{\beta}(\widehat{G}_l)-\beta)\right\|_2^2\vee \|\hat{\beta}(\widehat{G}_l)-\beta\|_2^2+\frac{\log L}{n_0}.
\end{align}
\end{lemma}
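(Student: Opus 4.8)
The plan is to read (\ref{re0-ms}) as a sharp Q-aggregation oracle inequality and (\ref{re1-ms}) as its conversion to $\ell_2$-loss through a low-dimensional restricted-eigenvalue bound. By the splitting in Step 1 of Algorithm \ref{algo2}, each candidate $\hat\beta(\widehat{G}_l)$ is a function of $\mathcal{I}$ and the auxiliary data only; conditioning on these and on $X^{(0)}_{\mathcal{I}^c,.}$, the fitted vectors $f_l:=X^{(0)}_{\mathcal{I}^c,.}\hat\beta(\widehat{G}_l)$ and $f^\star:=X^{(0)}_{\mathcal{I}^c,.}\beta$ are deterministic, while $\eps:=\eps^{(0)}_{\mathcal{I}^c}$ stays mean-zero, sub-Gaussian and independent. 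Write $f_\theta:=\sum_l\theta_l f_l$, $f_{\hat\theta}=X^{(0)}_{\mathcal{I}^c,.}\hat\beta^{\hat\theta}$, $D(\theta):=\sum_l\theta_l\|f_l-f_\theta\|_2^2$, and $n':=|\mathcal{I}^c|\asymp n_0$, and note $\hQ(\mathcal{I}^c,\cdot)=\|y^{(0)}_{\mathcal{I}^c}-f_\cdot\|_2^2$.

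First I would simplify the criterion in (\ref{theta-bma}). On the simplex $\|\theta\|_1=1$, so the $\ell_1$-penalty is constant and irrelevant to the argmin; moreover the elementary identity $\|y^{(0)}_{\mathcal{I}^c}-f_\theta\|_2^2+\sum_l\theta_l\|y^{(0)}_{\mathcal{I}^c}-f_l\|_2^2=2\|y^{(0)}_{\mathcal{I}^c}-f_\theta\|_2^2+D(\theta)$ shows the objective is a convex quadratic in $\theta$ whose only second-order part is $\|f_\theta\|_2^2$. An exact second-order expansion at the minimizer $\hat\theta$ combined with the first-order optimality condition over $\Lambda^{L+1}$ then gives, for every $l$,
\[
2\|y^{(0)}_{\mathcal{I}^c}-f_{\hat\theta}\|_2^2+D(\hat\theta)+\|f_{\hat\theta}-f_l\|_2^2\le 2\|y^{(0)}_{\mathcal{I}^c}-f_l\|_2^2 .
\]
Substituting $y^{(0)}_{\mathcal{I}^c}=f^\star+\eps$ and cancelling $\|\eps\|_2^2$ reduces this to
\[
2\|f_{\hat\theta}-f^\star\|_2^2+D(\hat\theta)+\|f_{\hat\theta}-f_l\|_2^2\le 2\|f_l-f^\star\|_2^2+4\langle\eps,f_{\hat\theta}-f_l\rangle .
\]

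The crux is to control $4\langle\eps,f_{\hat\theta}-f_l\rangle$ with leading constant exactly $1$. I would bound it pairwise: for each ordered pair $(j,l)$ the self-normalized quantity $\langle\eps,f_j-f_l\rangle/\|f_j-f_l\|_2$ is sub-Gaussian with proxy $\sig_0^2$ conditionally on the split, so a union bound over the at most $(L+1)^2$ pairs yields, with probability at least $1-t$, $4\langle\eps,f_j-f_l\rangle\le\|f_j-f_l\|_2^2+\lam_\theta\log(L/t)$ for all $j,l$ simultaneously, the Young constant being calibrated by $\lam_\theta\ge 4\sig_0^2$. Averaging against $\hat\theta_j$ and applying the parallel-axis identity $\sum_j\hat\theta_j\|f_j-f_l\|_2^2=D(\hat\theta)+\|f_{\hat\theta}-f_l\|_2^2$ bounds $4\langle\eps,f_{\hat\theta}-f_l\rangle$ by $D(\hat\theta)+\|f_{\hat\theta}-f_l\|_2^2+\lam_\theta\log(L/t)$. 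These two quadratic terms are precisely the strong-convexity gains on the left, so they cancel and leave $\|f_{\hat\theta}-f^\star\|_2^2\le\|f_l-f^\star\|_2^2+\tfrac12\lam_\theta\log(L/t)$; dividing by $n'\asymp n_0$ and minimizing over $l$ gives (\ref{re0-ms}). Engineering this exact cancellation — so that no constant larger than $1$ survives on the oracle term — is the \emph{main obstacle}, and it is the reason the variance term $D$, rather than the (inactive) $\ell_1$ penalty, carries the argument.

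For (\ref{re1-ms}) I would convert in-sample prediction error to $\ell_2$-loss on a low-dimensional subspace. Let $l^\star$ attain the minimum on the right-hand side and split $\hat\beta^{\hat\theta}-\beta=(\hat\beta^{\hat\theta}-\hat\beta(\widehat{G}_{l^\star}))+(\hat\beta(\widehat{G}_{l^\star})-\beta)$; the second summand contributes $\|\hat\beta(\widehat{G}_{l^\star})-\beta\|_2^2$, already in the bound. The first summand lies in $V:=\mathrm{span}\{\hat\beta(\widehat{G}_l)-\hat\beta(\widehat{G}_{l^\star}):0\le l\le L\}$, which is fixed given $\mathcal{I}$ and has dimension at most $L$. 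Since $V$ is independent of $X^{(0)}_{\mathcal{I}^c,.}$ and $\|\Sig\|_2 L\le c_1 n_0$ forces $n'\gtrsim L$, a Gaussian restricted-eigenvalue bound under Condition \ref{cond1} gives $\frac{1}{n'}\|X^{(0)}_{\mathcal{I}^c,.}v\|_2^2\gtrsim\Lambda_{\min}(\Sig)\|v\|_2^2$ uniformly over $v\in V$ with high probability. Applying it to $v=\hat\beta^{\hat\theta}-\hat\beta(\widehat{G}_{l^\star})$, bounding the empirical norm by $\frac{2}{n'}\|X^{(0)}_{\mathcal{I}^c,.}(\hat\beta^{\hat\theta}-\beta)\|_2^2+\frac{2}{n'}\|X^{(0)}_{\mathcal{I}^c,.}(\hat\beta(\widehat{G}_{l^\star})-\beta)\|_2^2$, and invoking (\ref{re0-ms}) on the first piece, yields $\|\hat\beta^{\hat\theta}-\beta\|_2^2\lesssim_{\P}\frac{1}{n'}\|X^{(0)}_{\mathcal{I}^c,.}(\hat\beta(\widehat{G}_{l^\star})-\beta)\|_2^2\vee\|\hat\beta(\widehat{G}_{l^\star})-\beta\|_2^2+\frac{\log L}{n_0}$, which is (\ref{re1-ms}). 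The secondary obstacle is this restricted-eigenvalue bound on the random subspace $V$, exactly what the scaling $\|\Sig\|_2 L\le c_1 n_0$ secures.
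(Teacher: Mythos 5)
Your proposal is correct, and it splits naturally into two halves that compare differently with the paper. For the prediction bound (\ref{re0-ms}) the paper does not argue at all: it simply invokes Corollary 3.1 of \citet{DRZ12}. You instead reconstruct that result from scratch --- reducing the criterion (\ref{theta-bma}) on the simplex to $2\|y-f_\theta\|_2^2+D(\theta)$, extracting the strong-convexity gain $\|f_{\hat\theta}-f_l\|_2^2$ from first-order optimality, and cancelling it (together with $D(\hat\theta)$) against the pairwise sub-Gaussian bound on $4\langle \eps, f_{\hat\theta}-f_l\rangle$ via the parallel-axis identity. This is exactly the mechanism underlying the cited Q-aggregation result, so it is not a different theorem, but your version is self-contained where the paper's is not, and it makes visible why the oracle term survives with leading constant one. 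The only caveat is bookkeeping: your union bound over $(L+1)^2$ pairs and the Young step $4ab\le a^2+4b^2$ deliver an additive term of order $\lam_\theta\bigl(2\log(L+1)+\log(1/t)\bigr)$ rather than exactly $\lam_\theta\log(L/t)$, so the stated constant is only recovered up to an absolute factor --- harmless for every use of the lemma in the paper, but worth stating honestly. For the estimation bound (\ref{re1-ms}) your argument is essentially identical to the paper's: your decomposition through $l^\star$ and the at-most-$L$-dimensional span $V$ of the differences $\hat\beta(\widehat{G}_l)-\hat\beta(\widehat{G}_{l^\star})$ is the same object the paper handles via the singular value decomposition $\widehat{B}=U\Lambda V^{\intercal}$ and the concentration of $\|U^{\intercal}(\widehat{\Sig}^{(0,c)}-\Sig)U\|_2$ at rate $\|\Sig\|_2\sqrt{L/n_0}$, with sample splitting supplying the independence of the subspace from $X^{(0)}_{\mathcal{I}^c,.}$ in both write-ups.
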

\begin{remark}
\label{re1}{\rm 
Assume that Conditions \ref{cond1} and \ref{cond2} hold. Let $\hat{\theta}$ be obtained with 
$\lam_{\theta}\geq 4\sig_0^2$. For any $L\geq 1$, it holds that
\[
 \big\|\hat{\beta}^{\hat{\theta}}-\beta\big\|_2^2\lesssim_{\P} \min_{0\leq l\leq L} \frac{1}{|\mathcal{I}^c|}\left\|X^{(0)}_{\mathcal{I}^c,.}(\hat{\beta}(\widehat{G}_l)-\beta)\right\|_2^2\vee \|\hat{\beta}(\widehat{G}_l)-\beta\|_2^2+\sqrt{\frac{\log L}{n_0}}.
 \]
}\end{remark}

Lemma \ref{thm-ag1} implies that the performance of $\hat{\beta}^{\hat{\theta}}$ only depends on the best candidate regardless of the performance of other candidates under mild conditions. 
As commented before, this result guarantees the robustness and efficiency of Trans-Lasso, which can be formally stated as follows. 
As the original Lasso is always in our dictionary, (\ref{re0-ms}) and (\ref{re1-ms}) imply that $\hat{\beta}^{\hat{\theta}}$ is not much worse than the Lasso in prediction and estimation. Formally, ``not much worse'' refers to the last term in (\ref{re0-ms}), which can be viewed as the cost of ``searching'' for the best candidate model within the dictionary which is of order $\log L/n_0$. This term is almost negligible, say, when $L=O(K)$, which corresponds to our constructed candidate estimators. This demonstrates the robustness of $\hat{\beta}^{\hat{\theta}}$ to adversarial auxiliary samples. Furthermore, if (\ref{cond-agg}) holds, then the prediction and estimation errors of Trans-Lasso are comparable to the Oracle Trans-Lasso based on auxiliary samples in $\mA^o$.

The prediction error bound in (\ref{re0-ms}) follows from Corollary 3.1 in \citet{DRZ12}. However, the aggregation methods do not have theoretical guarantees in estimation error in general. Indeed, an estimator with $\ell_2$-error guarantee is crucial for more challenging tasks, such as out-of-sample prediction and inference. For our transfer learning task, we show in (\ref{re1-ms}) that the estimation error is of the same order if the cardinality of the dictionary is $L\leq cn_0$ for some small enough $c$. For our constructed dictionary, it suffices to require $K\leq cn_0$. In many practical applications, $K$ is relatively small compared to the sample sizes and hence this assumption is not very strict.
In Remark \ref{re1}, we provide an upper bound on the estimation error which holds for arbitrarily large $L$ but is slower than (\ref{re1-ms}) in general.

In the following, we provide sufficient conditions such that the desirable property (\ref{cond-rk}) holds with $\widehat{R}^{(k)}$ defined in (\ref{eq-hRk}) and hence (\ref{cond-agg}) is satisfied. 
 For each $k \in\mA^c$, define a set
\begin{align}
\label{eq-Hk}
   H_k=\left\{1\leq j\leq p: |\Sig^{(k)}_{j,.}w^{(k)}-\Sig^{(0)}_{j,.}\beta|>n_*^{-\kappa}, ~\kappa<\alpha/2\right\}.
\end{align}
Recall that $\alpha$ is defined such that $t_*=n^\alpha$. In fact, $H_k$ is the set of ``strong'' marginal statistics that  can be consistently selected into $\widehat{T}_k$ for each $k\in \mA^c$.
We see that $\Sig^{(k)}_{j,.}w^{(k)}-\Sig^{(0)}_{j,.}\beta=\Sig_{j,.}\delta^{(k)}$ if $\Sig^{(k)}=\Sig^{(0)}$ for $k\in\mA^c$. The definition of $\mathcal{H}_k$ in (\ref{eq-Hk}) allows for heterogeneous designs among non-informative auxiliary samples.
\begin{condition}
\label{cond4}{\rm
(a) For each $k\in \mA^c$, each row of $X^{(k)}$ is \textit{i.i.d.} Gaussian with mean zero and covariance matrix $\Sig^{(k)}$. The largest eigenvalue of $\Sig^{(k)}$ is bounded away from infinity for any $k\in \mA^c$.
For each $k\in \mA^c$, the random noises $\eps^{(k)}_i$ are \textit{i.i.d.} Gaussian with mean zero and variance $\sig^2_k$. 

(b)It holds that $\log p\vee \log K\leq c_1 \sqrt{n_*}$ for a small enough constant $c_1$. Moreover,
\begin{align}
\label{eq1-cond4b}
 \min_{k \in\mA^c}\sum_{j\in H_k}|\Sig^{(k)}_{j,.}w^{(k)}-\Sig^{(0)}_{j,.}\beta|^2\geq \frac{c_1\log p}{n_*^{1-\alpha}}
\end{align}
for some large enough constant $c_1>0$.
} \end{condition}

The Gaussian assumptions in Condition \ref{cond4}(a) guarantee the desirable properties of SURE screening for the non-informative auxiliary studies. In fact, the Gaussian assumption can be relaxed to be sub-Gaussian random variables according to some recent studies \citep{SURE2}. For the conciseness of the proof, we consider Gaussian distributed random variables.
Condition \ref{cond4}(b) first puts constraint on the relative dimensions. It is trivial in the regime that $p\vee K\leq n_*^{\xi}$ for any finite $\xi>0$.  The expression (\ref{eq1-cond4b}) requires that for each $k\in\mA^c$, there exists a subset of strong marginal statistics such that their squared sum is beyond some noise barrier. This condition is mild by choosing $\alpha$ such that $\log p\ll n_*^{1-\alpha}$ and $\alpha=1/2$ is an obvious choice revoking the first part of Condition \ref{cond4}(b). For instance, if $\min_{k\in \mA^c}\|\E[\widehat{\Delta}^{(k)}]\|_{\infty}\geq c_0>0$, then (\ref{eq1-cond4b}) holds with any $\alpha\leq 1/2$. In words, a sufficient condition for (\ref{eq1-cond4b}) is that at least one marginal statistic in the $k$-th study is of constant order for $k\in\mA^c$. We see that larger $n_*$ makes Condition \ref{cond4} weaker. As mentioned before, it is helpful to remove the auxiliary samples with very small sample sizes from the analysis.

In the next theorem, we demonstrate the theoretical properties of $\widehat{R}^{(k)}$ and provide a complete analysis of the Trans-Lasso algorithm. Let $\mA^o$ be a subset of $\mA$ such that
\[
   \mA^o=\left\{k\in \mA:\|\Sig^{(0)}\delta^{(k)}\|_2^2\leq c_1\min_{k \in\mA^c}\sum_{j\in H_k}|\Sig^{(k)}_{j,.}w^{(k)}-\Sig^{(0)}_{j,.}\beta|^2\right\}
\]
for some $c_1<1$ and $H_k$ defined in (\ref{eq-Hk}). In general, one can see that the informative auxiliary samples with sparser $\delta^{(k)}$ are more likely to be included into $\mA^o$. Specially, the fact that $\max_{k\in \mA}\|\Sig^{(0)}\delta^{(k)}\|_2^2\leq \|\Sig^{(0)}\|_2^2h^2$ implies $\mA^o=\mA$ when $h$ is sufficiently small. We will show (\ref{cond-rk}) for such $\mA^o$ with $\widehat{R}^{(k)}$ defined in (\ref{eq-hRk}). 
 Let $n_{\mA^o}=\sum_{k\in \mA^o}n_k$.

\begin{theorem}[Convergence Rate of the Trans-Lasso]
\label{sec4-lem1}
Assume that the conditions of Theorem \ref{thm0-l1} and Condition \ref{cond4} hold. Then
\begin{align}
\label{cond-agg2}
   \P\left(\max_{k \in \mA^o}\widehat{R}^{(k)}< \min_{k \in \mA^c}\widehat{R}^{(k)} \right)\rightarrow 1.
\end{align}
Let $\hat{\beta}^{\hat{\theta}}$ be computed using the Trans-Lasso algorithm with $\lam_{\theta}\geq 4\sig^2_0$. If $K\leq cn_0$ for a sufficiently small constant $c>0$, then
\begin{align}
& \frac{1}{|\mathcal{I}^c|}\left\|X^{(0)}_{\mathcal{I}^c,.}(\hat{\beta}^{\hat{\theta}}-\beta)\right\|_2^2\vee\left\|\hat{\beta}^{\hat{\theta}}-\beta\right\|_2^2\nonumber\\
&=O_P\left(\frac{s\log p}{n_{\mA^o}+n_0}+\frac{s\log p}{n_0}\wedge h\sqrt{\frac{\log p}{n_0}}\wedge h^2+\frac{\log K}{n_0}\right).
\label{re3-agg}
\end{align}
\end{theorem}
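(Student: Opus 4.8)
The plan is to prove the theorem in two stages. First I would establish the separation property \eqref{cond-agg2}, which by the construction \eqref{eq-hGl} guarantees that $\mA^o$ itself appears among the candidate sets $\widehat{G}_l$ fed into the aggregation step. Second, I would combine the aggregation guarantee of Lemma \ref{thm-ag1} with the Oracle Trans-Lasso rate of Theorem \ref{thm0-l1}, applied with informative set $\mA^o$, to deduce \eqref{re3-agg}. Sample splitting is what makes this decomposition legitimate: the candidates $\hat{\beta}(\widehat{G}_l)$ are built from $\mathcal{I}$ and the auxiliary data, hence are independent of the aggregation sample $\mathcal{I}^c$, which is exactly the independence Lemma \ref{thm-ag1} requires.

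For the first stage I would decompose each marginal statistic from \eqref{eq-hDeltak} as $\widehat{\Delta}^{(k)}_j=\mu^{(k)}_j+z^{(k)}_j$, where $\mu^{(k)}_j=\E[\widehat{\Delta}^{(k)}_j]=\Sig^{(k)}_{j,.}w^{(k)}-\Sig^{(0)}_{j,.}\beta$ and $z^{(k)}_j$ is a centered average. Under Conditions \ref{cond1}, \ref{cond2} and \ref{cond4} the coordinatewise noise admits a uniform bound $\max_{k,j}|z^{(k)}_j|\lesssim_{\P}\sqrt{\log(pK)/n_*}$ via a union bound over the $pK$ coordinates, valid since $n_k\wedge|\mathcal{I}|\geq c\,n_*$ and $\log p\vee\log K\leq c_1\sqrt{n_*}$. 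For $k\in\mA^o$, since screening keeps only $t_*=n_*^{\alpha}$ coordinates, I bound $\widehat{R}^{(k)}=\|\widehat{\Delta}^{(k)}_{\hT_k}\|_2^2\leq\|\mu^{(k)}\|_2^2+2\sqrt{t_*}\,\|\mu^{(k)}\|_2\max_j|z^{(k)}_j|+t_*\max_j(z^{(k)}_j)^2$, which yields $\max_{k\in\mA^o}\widehat{R}^{(k)}\lesssim_{\P}\max_{k\in\mA^o}\|\Sig^{(0)}\delta^{(k)}\|_2^2+\log p/n_*^{1-\alpha}$. For $k\in\mA^c$ the key is the sure-screening property: because every $j\in H_k$ has $|\mu^{(k)}_j|>n_*^{-\kappa}$ with $\kappa<\alpha/2$, a level that dominates the screening-level noise under Condition \ref{cond4}, SURE screening retains these strong coordinates uniformly over $k\in\mA^c$ with probability tending to one, so $\min_{k\in\mA^c}\widehat{R}^{(k)}\geq(1-o_P(1))\min_{k\in\mA^c}\sum_{j\in H_k}|\mu^{(k)}_j|^2$. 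Feeding the defining inequality of $\mA^o$ into the upper bound, and using \eqref{eq1-cond4b} to absorb the $\log p/n_*^{1-\alpha}$ noise barrier into the $\mA^c$ lower bound, leaves a strict constant gap between the two sides for the constant in $\mA^o$ small enough and the constant in \eqref{eq1-cond4b} large enough, which proves \eqref{cond-agg2}.

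Given \eqref{cond-agg2}, the ranking in \eqref{eq-hGl} forces $\widehat{G}_{|\mA^o|}=\mA^o$ with probability tending to one, so $\hat{\beta}(\mA^o)$, the Oracle Trans-Lasso run on $\mathcal{I}$ with informative set $\mA^o$, is among the candidates entering the aggregation step. Under $K\leq c\,n_0$ we have $\|\Sig\|_2 L\lesssim n_0$ for $L=O(K)$ since $\Lambda_{\max}(\Sig)=O(1)$, so the estimation bound \eqref{re1-ms} applies and controls both $\|\hat{\beta}^{\hat{\theta}}-\beta\|_2^2$ and the $\mathcal{I}^c$-prediction error by the best candidate's error plus $\log L/n_0=O(\log K/n_0)$. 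Taking the candidate $\widehat{G}_{|\mA^o|}=\mA^o$, its out-of-sample term $\frac{1}{|\mathcal{I}^c|}\|X^{(0)}_{\mathcal{I}^c,.}(\hat{\beta}(\mA^o)-\beta)\|_2^2$ is, by sample splitting and the bounded eigenvalues in Condition \ref{cond1}, of the same order as $\|\hat{\beta}(\mA^o)-\beta\|_2^2$; Theorem \ref{thm0-l1}, with $\mA$ replaced by $\mA^o$ and $n_0$ replaced by $|\mathcal{I}|\asymp n_0$, bounds the latter by $s\log p/(n_{\mA^o}+n_0)+s\log p/n_0\wedge h\sqrt{\log p/n_0}\wedge h^2$ (its sample-size requirement being inherited since $n_{\mA^o}+n_0\geq n_0$). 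Combining gives \eqref{re3-agg}.

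I expect the main obstacle to be the first stage, namely the uniform screening analysis across all $K$ auxiliary studies under heterogeneous designs. One must simultaneously guarantee that the strong coordinates $H_k$ of every non-informative study survive SURE screening while controlling the inflation of $\widehat{R}^{(k)}$ caused by retaining the top $t_*$ (hence largest) noise coordinates for the informative studies, and then match the resulting $\log p/n_*^{1-\alpha}$ noise barrier against the signal gap in \eqref{eq1-cond4b} with mutually consistent constants. The second stage is comparatively routine, being an application of two already-established results with only the half-sample bookkeeping and the prediction-to-estimation comparison via bounded eigenvalues left to verify.
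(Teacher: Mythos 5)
Your proposal is correct and follows essentially the same route as the paper: part (i) establishes the separation \eqref{cond-agg2} by combining the sure-screening property (so that $H_k\subseteq\hT_k$ for all $k\in\mA^c$) with a signal-plus-noise decomposition of $\widehat{\Delta}^{(k)}$, bounding the retained noise by $t_*\max_j|E^{(k)}_j|^2\lesssim_{\P} t_*\log p/n_*$ and absorbing it via Condition \ref{cond4}(b) and the defining constant of $\mA^o$; part (ii) then plugs the candidate $\widehat{G}_{|\mA^o|}$ into Lemma \ref{thm-ag1} together with Theorem \ref{thm0-l1}, exactly as the paper does. Your explicit cross-term bound is just the paper's $(1\pm\delta)$ Young-inequality splitting written out, so there is no substantive difference.
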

\begin{remark}
\label{re3}{\rm
Under the conditions of Theorem \ref{sec4-lem1}, if
\[ \|\Sig^{(0)}\|_2^2h^2\leq \alpha\min_{k \in\mA^c}\sum_{j\in H_k}|\Sig^{(k)}_{j,.}w^{(k)}-\Sig^{(0)}_{j,.}\beta|^2~\text{for some}~\alpha<1,
\]
then $\P\left(\max_{k \in \mA}\widehat{R}^{(k)}< \min_{k \in \mA^c}\widehat{R}^{(k)} \right)\rightarrow 1$ and
\begin{align*}
& \frac{1}{|\mathcal{I}^c|}\left\|X^{(0)}_{\mathcal{I}^c,.}(\hat{\beta}^{\hat{\theta}}-\beta)\right\|_2^2\vee\left\|\hat{\beta}^{\hat{\theta}}-\beta\right\|_2^2\\
&=O_P\left(\frac{s\log p}{n_{\mA}+n_0}+\frac{s\log p}{n_0}\wedge h\sqrt{\frac{\log p}{n_0}}\wedge h^2+\frac{\log K}{n_0}\right).
\end{align*}
}\end{remark}
The result in (\ref{cond-agg2}) implies the estimated sparse indices in $\mA^o$ and in $\mA^c$ are separated with high probability. As illustrated before, a consequence of (\ref{cond-agg2}) is (\ref{cond-agg}) for the candidate sets $\widehat{G}_l$ defined in (\ref{eq-hGl}). Together with Theorem \ref{thm0-l1} and Lemma \ref{thm-ag1}, we arrive at (\ref{re3-agg}). 
In Remark \ref{re3}, we develop a sufficient condition for  $\mA^o=\mA$, which requires sufficiently small $h$. Under this condition, the estimation and prediction errors of the Trans-Lasso are comparable the  case where  $\mA$ is known, i.e. the adaptation to $\mA$ is achieved. Remark \ref{re3} is a direct consequence of Theorem \ref{sec4-lem1} and the fact that $\max_{k\in\mA}\|\Sig^{(0)}\delta^{(k)}\|_2^2\leq \|\Sig^{(0)}\|_2^2h^2$.

It is worth mentioning that Condition \ref{cond4} is only employed to show the gain of Trans-Lasso  and the robustness property of Trans-Lasso holds without any conditions on the non-informative samples (Lemma \ref{thm-ag1}). In practice, missing a few informative auxiliary samples may not be a very serious concern. One can see that when $n_{\mA^o}$ is large enough such that the first term on the right-hand side of (\ref{re3-agg}) no longer dominates, increasing the number of auxiliary samples will not improve the convergence rate. In contrast, it is more important to guarantee that the estimator is not affected by the adversarial auxiliary samples. 
The empirical performance of Trans-Lasso is carefully studied in Section \ref{sec-simu}.

\section{Extensions to Heterogeneous Designs and $\ell_q$-sparse Contrasts}
\label{sec4}
In this section, we extend the algorithms and theoretical results developed in Sections \ref{sec2} and \ref{sec3}. Section \ref{sec4-1} considers the case where the design matrices are heterogeneous with difference covariance structures and Section \ref{sec4-2} generalizes the sparse contrasts from $\ell_1$-constraint to  $\ell_q$-constraint for $q\in [0,1)$ and presents a rate-optimal estimator in this setting.

\subsection{Heterogeneous Designs}
\label{sec4-1}

The Oracle Trans-Lasso algorithm  proposed in Section \ref{sec2} can be directly applied to the  setting where the design matrices are heterogeneous.
To establish the theoretical guarantees in the heterogeneous case, we first introduce a relaxed version of Condition \ref{cond1} as follows.
\begin{condition}
\label{cond1b}{\rm
For each $k\in \mA\cup\{0\}$, each row of $X^{(k)}$ is \textit{i.i.d.} Gaussian with mean zero and covariance matrix $\Sig^{(k)}$. The smallest and largest eigenvalues of $\Sig^{(k)}$ are bounded away from zero and infinity, respectively, for all $k\in \mA\cup\{0\}$. 
}\end{condition}
Define
\[
    C_{\Sig}=1+\max_{j\leq p}\max_{k \in \mA}\Big\|e_j^{\intercal}\big(\Sig^{(k)}-\Sig^{(0)}\big)\Big(\sum_{k \in \mA\cup \{0\}}\alpha_k\Sig^{(k)}\Big)^{-1}\Big\|_1.
\]
The parameter $C_{\Sig}$ characterizes the differences between $\Sig^{(k)}$ and $\Sig^{(0)}$ for $k\in\mA$. Notice that $C_{\Sig}$ is a constant if $\max_{1\leq j\leq p}\|e_j^{\intercal}(\Sig^{(k)}-\Sig^{(0)})\|_0\leq C<\infty$ for all $k \in \mA$, where examples include block diagonal $\Sig^{(k)}$ with constant block sizes or banded $\Sig^{(k)}$ with constant bandwidths for $k \in \mA$.
The following theorem characterizes the rate of convergence of the Oracle Trans-Lasso estimator in terms of $C_{\Sig}$.

\begin{theorem}[Oracle Trans-Lasso with heterogeneous designs]
\label{thm1-l1}
Assume that Condition \ref{cond2} and Condition \ref{cond1b} hold true. We take $\lam_w= \max_{k\in\mA\cup\{0\}}c_1\sqrt{\E[(y^{(k)}_i)^2]\log p/(n_{\mA}+n_0})$ and $\lam_\delta=c_2\sig_0 \sqrt{\log p/n_0}$  for some sufficiently large constants $c_1$ and $c_2$ only depending on $C_0$.
If $s\log p/(n_{\mA}+n_0)+C_{\Sig}h(\log p/n_0)^{1/2}=o((\log p/n_0)^{1/4})$, then 
\begin{align}
&\frac{1}{n_0}\|X^{(0)}(\hat{\beta}-\beta)\|_2^2\vee\|\hat{\beta}-\beta\|_2^2\nonumber\\
&=O_P\left(\frac{s\log p}{n_{\mA}+n_0}+\frac{s\log p}{n_0}\wedge C_{\Sig}h\sqrt{\frac{\log p}{n_0}}\wedge C_{\Sig}^2h^2\right) \label{l1-re2}.
\end{align}
\end{theorem}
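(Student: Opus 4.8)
The plan is to follow the two-step structure used for Theorem \ref{thm0-l1}, isolating the single place where heterogeneity of the designs enters, namely the $\ell_1$-size of the bias $\delta^{\mA}=w^{\mA}-\beta$ of the Step-1 target. Write $\bar{\Sig}=\sum_{k\in\mA\cup\{0\}}\alpha_k\Sig^{(k)}$ with $\alpha_k=n_k/(n_{\mA}+n_0)$. Starting from the moment condition defining $w^{\mA}$, namely $\sum_{k\in\mA\cup\{0\}}\alpha_k\Sig^{(k)}(w^{(k)}-w^{\mA})=0$, and using $w^{(0)}=\beta$ together with $w^{(k)}=\beta-\delta^{(k)}$, one obtains the explicit form $\delta^{\mA}=-\bar{\Sig}^{-1}\sum_{k\in\mA}\alpha_k\Sig^{(k)}\delta^{(k)}$, which collapses to $-\sum_{k\in\mA}\alpha_k\delta^{(k)}$ (and hence $\|\delta^{\mA}\|_1\le h$) in the homogeneous case $\Sig^{(k)}=\Sig^{(0)}$.

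The main step, and the only genuinely new obstacle relative to Theorem \ref{thm0-l1}, is to show $\|\delta^{\mA}\|_1\lesssim C_{\Sig}h$. I would decompose each summand via $\Sig^{(k)}=\Sig^{(0)}+(\Sig^{(k)}-\Sig^{(0)})$ and use the identity $\bar{\Sig}^{-1}\Sig^{(0)}=I-\bar{\Sig}^{-1}\sum_{k\in\mA}\alpha_k(\Sig^{(k)}-\Sig^{(0)})$, which follows from $\bar{\Sig}-\Sig^{(0)}=\sum_{k\in\mA}\alpha_k(\Sig^{(k)}-\Sig^{(0)})$. This rewrites $\delta^{\mA}$ as $-\bar{\delta}$ plus terms of the form $\bar{\Sig}^{-1}(\Sig^{(k)}-\Sig^{(0)})u$, where $\bar{\delta}=\sum_{k\in\mA}\alpha_k\delta^{(k)}$ and $u\in\{\delta^{(k)},\bar{\delta}\}$. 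Since $\Sig^{(k)}-\Sig^{(0)}$ and $\bar{\Sig}^{-1}$ are symmetric, the induced $\ell_1\to\ell_1$ operator norm (maximal absolute column sum) of $\bar{\Sig}^{-1}(\Sig^{(k)}-\Sig^{(0)})$ equals the maximal absolute row sum of $(\Sig^{(k)}-\Sig^{(0)})\bar{\Sig}^{-1}$, that is $\max_j\|e_j^{\intercal}(\Sig^{(k)}-\Sig^{(0)})\bar{\Sig}^{-1}\|_1\le C_{\Sig}-1$, which is exactly the quantity appearing in the definition of $C_{\Sig}$. Combined with $\|\bar{\delta}\|_1\le h\sum_{k\in\mA}\alpha_k\le h$ and $\max_{k\in\mA}\|\delta^{(k)}\|_1\le h$, this yields $\|\delta^{\mA}\|_1\lesssim C_{\Sig}h$, and in particular $\|\delta^{\mA}\|_2\le\|\delta^{\mA}\|_1\lesssim C_{\Sig}h$.

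With the bias controlled, the rest mirrors Theorem \ref{thm0-l1}. Under Condition \ref{cond1b} each $\Sig^{(k)}$ has eigenvalues bounded away from $0$ and $\infty$, so the convex combination $\bar{\Sig}$ does too; together with Gaussianity this supplies, with high probability, a restricted-eigenvalue/compatibility condition both for the pooled Gram matrix $\frac1N\sum_{k\in\mA\cup\{0\}}(X^{(k)})^{\intercal}X^{(k)}$ (with $N=n_{\mA}+n_0$) and for $\frac1{n_0}(X^{(0)})^{\intercal}X^{(0)}$, under the stated sample-size condition. Treating $w^{\mA}=\beta+\delta^{\mA}$ as an $s$-sparse vector plus a vector of $\ell_1$-norm $\lesssim C_{\Sig}h$, and controlling the score $\frac1N\sum_{k}(X^{(k)})^{\intercal}(y^{(k)}-X^{(k)}w^{\mA})$ in $\ell_\infty$ by $\lam_w$ (whose scale $\max_{k}\sqrt{\E[(y_i^{(k)})^2]\log p/N}$ absorbs the misspecification noise exactly as in Theorem \ref{thm0-l1}), the Step-1 Lasso obeys $\|\hat{w}^{\mA}-w^{\mA}\|_1\lesssim_{\P} s\lam_w+C_{\Sig}h$ together with a prediction bound of order $s\log p/N+C_{\Sig}h\sqrt{\log p/N}$.

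Finally, Step 2 is a Lasso on the primary sample whose regression target is $\beta-\hat{w}^{\mA}=-\delta^{\mA}+(w^{\mA}-\hat{w}^{\mA})$, of $\ell_1$-norm $\lesssim C_{\Sig}h+s\lam_w$, and $\hat{\beta}-\beta$ is precisely its estimation error. I would bound this error in three ways and take the minimum: the sparse oracle inequality on $X^{(0)}$ giving the $s\log p/n_0$ term; the slow ($\ell_1$) bound $\lam_\delta\|\delta^{\mA}\|_1\lesssim C_{\Sig}h\sqrt{\log p/n_0}$; and, in the small-$h$ regime, comparing the Step-2 objective at $\hat{\delta}^{\mA}$ with its value at $\delta=0$ to get $\|\delta^{\mA}\|_2^2\lesssim C_{\Sig}^2h^2$. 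Adding the irreducible Step-1 prediction error $s\log p/N$ yields \eqref{l1-re2}, the $\ell_2$- and prediction-error versions being equivalent up to constants through the restricted eigenvalue of $X^{(0)}$. The delicate accounting throughout is simply that all $C_{\Sig}$-dependence is confined to the bias bound of the second paragraph, so the eigenvalue and empirical-process arguments reuse those of the homogeneous proof with $\Sig$ replaced by the $k$-specific $\Sig^{(k)}$ (or by $\bar{\Sig}$ in the pooled step) and require no new concentration inequalities.
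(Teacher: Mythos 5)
Your proposal follows essentially the same route as the paper: the paper proves Theorem \ref{thm0-l1} as a special case of Theorem \ref{thm1-l1}, using the same two-step error decomposition, the same restricted-eigenvalue events over $\ell_1$-balls (Lemma \ref{tlem1}), the same case analysis on whether the sparse or the $\ell_1$ part of $w^{\mA}=\beta+\delta^{\mA}$ dominates, and the same choice of $\lam_w$ to absorb the misspecification noise. The only substantive addition is your explicit derivation of $\|\delta^{\mA}\|_1\lesssim C_{\Sig}h$ via the identity $\bar{\Sig}^{-1}\Sig^{(0)}=I-\bar{\Sig}^{-1}\sum_{k\in\mA}\alpha_k(\Sig^{(k)}-\Sig^{(0)})$ and the $\ell_1\to\ell_1$ operator norm; this is correct and fills in the one step the paper's appendix uses implicitly when it invokes $\|\delta^{\mA}_{S^c}\|_1\leq C_{\Sig}h$.
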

When $\mathcal{A}$ is non-empty, the right-hand side of (\ref{l1-re1}) is sharper than $s\log p/n_0$ if $n_{\mA_0}\gg n_0$ and $C_{\Sig}h\sqrt{\log p/n_0}\ll s$. We see that small $C_{\Sig}$ is favorable. This implies that  the informative auxiliary samples should not only have sparse contrasts but have similar Gram matrices to the primary one.
When $\mA$ is unknown, we consider $\tilde{\mA}^o$, a subset of $\mA$ such that
\[
  \tilde{\mA}^o=\left\{k\in \mA:\|\Sig^{(k)}w^{(k)}-\Sig^{(0)}\beta\|_2^2<c_1\min_{k \in\mA^c}\sum_{j\in H_k}|\Sig^{(k)}_{j,.}w^{(k)}-\Sig^{(0)}_{j,.}\beta|^2\right\}
\]
for some $c_1<1$ and $H_k$ defined in (\ref{eq-Hk}). This is a generalization of $\mA^o$ to the case of heterogeneous designs.

\begin{corollary}[Trans-Lasso with heterogeneous designs]
\label{cor1-l1}
Assume the conditions of Theorem \ref{thm1-l1} and Condition \ref{cond4}. 
Let $\hat{\beta}^{\hat{\theta}}$ be computed via the Trans-Lasso algorithm with $\lam_{\theta}\geq 4\sig^2_0$. If $K\leq cn_0$ for a small enough constant $c$, then
\begin{align}
& \frac{1}{|\mathcal{I}^c|}\left\|X^{(0)}_{\mathcal{I}^c,.}(\hat{\beta}^{\hat{\theta}}-\beta)\right\|_2^2\vee \|\hat{\beta}^{\hat{\theta}}-\beta\|_2^2\nonumber\\
&=O_P\left(\frac{s\log p}{n_{\tilde{\mA}^o}+n_0}+\frac{s\log p}{n_0}\wedge C_{\Sig}h\sqrt{\frac{\log p}{n_0}}\wedge C_{\Sig}^2h^2+\frac{\log K}{n_0}\right).
\label{re4-agg}
\end{align}
\end{corollary}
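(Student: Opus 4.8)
The plan is to follow the architecture of the proof of Theorem~\ref{sec4-lem1} almost verbatim, making the two substitutions forced by heterogeneity: the population sparsity index $\|\Sig^{(0)}\delta^{(k)}\|_2^2$ is replaced by its heterogeneous analog $\|\Sig^{(k)}w^{(k)}-\Sig^{(0)}\beta\|_2^2$, and the oracle rate of Theorem~\ref{thm0-l1} is replaced by the $C_{\Sig}$-dependent rate of Theorem~\ref{thm1-l1}. The argument rests on three ingredients: (i) a separation of the estimated sparsity indices guaranteeing that $\tilde{\mA}^o$ is recovered as one of the candidate sets; (ii) the Q-aggregation guarantee of Lemma~\ref{thm-ag1}, which reduces the risk of $\hat\beta^{\hat\theta}$ to that of the best candidate plus an aggregation cost; and (iii) the heterogeneous oracle bound of Theorem~\ref{thm1-l1} applied to the candidate built on $\tilde{\mA}^o$.

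First I would establish the heterogeneous analog of \eqref{cond-agg2}, namely
\[
\P\Big(\max_{k\in\tilde{\mA}^o}\widehat{R}^{(k)}<\min_{k\in\mA^c}\widehat{R}^{(k)}\Big)\to 1.
\]
The key observation is that in the heterogeneous case the marginal statistic satisfies $\E[\widehat{\Delta}^{(k)}]=\Sig^{(k)}w^{(k)}-\Sig^{(0)}\beta$ for every $k$, so the set $H_k$ in \eqref{eq-Hk}, Condition~\ref{cond4}, and $\tilde{\mA}^o$ are already phrased in precisely the right population quantities. Consequently the SURE-screening analysis behind \eqref{cond-agg2} transfers with only cosmetic changes: I would (a) bound $\max_{k\in\tilde{\mA}^o}\widehat{R}^{(k)}$ above by $\max_{k\in\tilde{\mA}^o}\|\Sig^{(k)}w^{(k)}-\Sig^{(0)}\beta\|_2^2$ plus the accumulated screening noise, using that each entry of $\widehat{\Delta}^{(k)}-\E[\widehat{\Delta}^{(k)}]$ concentrates at order $\sqrt{\log p/n_*}$ (here Condition~\ref{cond1b} supplies the eigenvalue control on $\Sig^{(k)}$ for $k\in\mA$), and (b) bound $\min_{k\in\mA^c}\widehat{R}^{(k)}$ below by showing SURE screening consistently selects $H_k$, so that $\widehat{R}^{(k)}\gtrsim\sum_{j\in H_k}|\Sig^{(k)}_{j,.}w^{(k)}-\Sig^{(0)}_{j,.}\beta|^2$. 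Condition~\ref{cond4}(b) guarantees that the accumulated noise, of order $\log p/n_*^{1-\alpha}$ over the $t_*=n_*^{\alpha}$ screened coordinates, is dominated by the signal in $H_k$, while the strict inequality $c_1<1$ in the definition of $\tilde{\mA}^o$ delivers the separation.

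Given the separation, \eqref{eq-hGl} implies $\widehat{G}_{|\tilde{\mA}^o|}=\tilde{\mA}^o\subseteq\mA$ with probability tending to one, so \eqref{cond-agg} holds with $\mA^o$ replaced by $\tilde{\mA}^o$. I would then invoke Lemma~\ref{thm-ag1}; its proof involves only the primary design $X^{(0)}$, whose law is the same under Condition~\ref{cond1} and Condition~\ref{cond1b}, so the lemma remains valid, and the hypothesis $\|\Sig^{(0)}\|_2 L\le c_1 n_0$ follows from $\|\Sig^{(0)}\|_2=O(1)$ and $L=O(K)\le cn_0$. By \eqref{re0-ms} and \eqref{re1-ms} the prediction and estimation errors of $\hat\beta^{\hat\theta}$ are, up to the aggregation cost, controlled by the risk of the single candidate $\hat\beta(\tilde{\mA}^o)$. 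This candidate is an Oracle Trans-Lasso fit on $(X^{(0)}_{\mathcal{I},.},y^{(0)}_{\mathcal{I}})$ with auxiliary set $\tilde{\mA}^o$, and since $|\mathcal{I}|\asymp n_0$, Theorem~\ref{thm1-l1} bounds its $\ell_2$ error by the first two terms (the oracle rate) on the right-hand side of \eqref{re4-agg}. The $\mathcal{I}^c$-prediction terms in \eqref{re0-ms} and \eqref{re1-ms} are bounded by this same quantity after multiplying by $\Lambda_{\max}$ of the held-out Gram matrix, which is $O_P(1)$ on the independent split $\mathcal{I}^c$. Adding the aggregation cost $\log L/n_0\asymp\log K/n_0$ then yields \eqref{re4-agg}.

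The main obstacle is ingredient (i) under heterogeneous designs: one must verify that the SURE-screening separation, previously established with $\E[\widehat{\Delta}^{(k)}]=\Sig\delta^{(k)}$, survives when the mean becomes $\Sig^{(k)}w^{(k)}-\Sig^{(0)}\beta$ and the $\Sig^{(k)}$ differ across $k$. Fortunately, heterogeneity enters the screening step only through the eigenvalue bounds used for the entrywise concentration of $\widehat{\Delta}^{(k)}$, which Condition~\ref{cond1b} and Condition~\ref{cond4}(a) supply uniformly; the signal-versus-noise comparison driving the separation is unchanged because Condition~\ref{cond4}(b) and $H_k$ are already stated in the heterogeneous quantities. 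Thus the genuinely new work reduces to tracking the $C_{\Sig}$ factors through Theorem~\ref{thm1-l1}, which is already available, so the separation and aggregation steps are faithful re-runs of the homogeneous argument.
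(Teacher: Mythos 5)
Your proposal is correct and follows essentially the same route as the paper, which proves the corollary jointly with Theorem \ref{sec4-lem1} by reducing everything to the separation of the estimated sparsity indices (the paper's Part (i)--(ii) argument is in fact already written in the heterogeneous quantities $\Sig^{(k)}w^{(k)}-\Sig^{(0)}_{j,.}\beta$), then combining Lemma \ref{thm-ag1} with the heterogeneous oracle rate of Theorem \ref{thm1-l1} and the $\log K/n_0$ aggregation cost. Your observations that $\E[\widehat{\Delta}^{(k)}]=\Sig^{(k)}w^{(k)}-\Sig^{(0)}\beta$ makes the screening analysis carry over verbatim, and that the aggregation lemma depends only on the primary design, are exactly the points the paper relies on.
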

Corollary \ref{cor1-l1} provides an upper bound for the Tran-Lasso with heterogeneous designs. The numerical experiments for this setting are studied in Section \ref{sec-simu}. 

 \subsection{$\ell_q$-sparse Contrasts}
 \label{sec4-2}
 We have so far focused on the  $\ell_1$-sparse characterization of the contrast vectors. The established framework can be extended to the settings where the contrast vectors are characterized in terms of the $\ell_q$-norm for $q\in[0,1)$. We discuss the exact sparse contrasts ($q=0$) here and leave the results for $q\in(0,1)$  to the Appendix. We first consider the case when  $\mA_0$ is known and present the propopsed algorithm in Algorithm \ref{algo3}.

\vspace{0.1in}
\begin{algorithm}[H]
 \SetKwInOut{Input}{Input}
    \SetKwInOut{Output}{Output}
\SetAlgoLined
 \Input{Primary data $(X^{(0)},y^{(0)})$ and informative auxiliary samples $\{X^{(k)},y^{(k)}\}_{k\in\mA_0}$}
 \Output{$\hat{\beta}(\mA_0)$}

\underline{Step 1}. 
Estimate each $\delta^{(k)}$, $k\in\mA_0$, via
{\small
\begin{align*}
   &\hat{\delta}^{(k)}=\argmin_{\delta\in\R^p}\left\{\frac{1}{2}\delta^{\intercal}\widehat{\Sig}^{\mA_0}\delta-\delta^{\intercal}[(X^{(k)})^{\intercal}y^{(k)}/n_k-(X^{(0)})^{\intercal}y^{(0)}/n_0]+\lam_k\|\delta\|_1\right\},\nonumber
\end{align*}
}
where $\widehat{\Sig}^{\mA_0}=\sum_{k\in \mA_0\cup\{0\}}(X^{(k)})^{\intercal}X^{(k)}/(n_{\mA_0}+n_0)$ and $\lam_k>0$. 

\underline{Step 2}. Compute
{\small
\begin{align*}
&\hat{\beta}(\mA_0)=\argmin_{b\in\R^p} \Big\{\frac{1}{2(n_{\mA_0}+n_0)}\sum_{k \in \mA_0\cup\{0\}}\|y^{(k)}-X^{(k)}\hat{\delta}^{(k)}-X^{(k)}b\|_2^2+\lam_{\beta}\|b\|_1\Big\}\;\nonumber
\end{align*}
}
for some $\lam_{\beta}>0$.
 \caption{\textbf{Oracle Trans-Lasso algorithm for $q=0$}}\label{algo3}
\end{algorithm}

In the above algorithm, we estimate each $\delta^{(k)}$ based on the following moment equation:
\begin{align}
\label{eq-m1}
   \E[x^{(k)}_iy_i^{(k)}]- \E[x^{(0)}_iy_i^{(0)}]-\Sig\delta^{(k)}=0,~k\in \mA_0,
\end{align}
assuming that $\Sig^{(k)}=\Sig^{(0)}=\Sig$ for all $k \in \mA_0$. In the realization of Step 1, we replace $ \E[x^{(k)}_iy_i^{(k)}]$ and $\E[x^{(0)}_iy_i^{(0)}]$ by their unbiased sample versions, and the population Gram matrix $\Sig$ by its unbiased estimator $\widehat{\Sig}^{\mA_0}$. 
 In Step 2, we use the ``bias-corrected'' samples to estimate $\beta$. In contrast to the Oracle Trans-Lasso proposed in Section \ref{sec2-1}, the contrast vectors are estimated individually in the above algorithm. This is because the $\ell_1$-norm has the sub-additive property while $\ell_0$-norm does not. The computational cost in Step 1 can be relatively high if $\mA_0$ is large. Moreover, Step 1 heavily relies on the homogeneous designs among informative auxiliary samples.
In the next theorem, we prove the convergence rate of $\hat{\beta}(\mA_0)$. 
\begin{theorem}[Convergence rate of $\hat{\beta}(\mA_0)$]
\label{thm1-l0}
Assume that Condition \ref{cond1} and Condition \ref{cond2} hold true. Suppose that 
{\small
\begin{equation}
\label{ms-ss}
\frac{h\log p}{n_0}=o\Big(\big(\frac{\log p}{n_0+n_{\mA_0}}\big)^{1/4}\Big),~ \frac{s\log p}{n_{\mA_0}+n_0}=o(1),~\text{and}~~n_{\mA_0}\gtrsim |\mA_0|n_0.
\end{equation}
}
For $\hat{\beta}(\mA_0)$ computed with $\lam_k\geq c_1(\|y^{(k)}\|_2/n_k+ \|y^{(0)}\|_2/n_0)\sqrt{\log p}$ and \\ $\lam_{\beta}=c_2\sqrt{\log p/(n_0+n_{\mA_0})}$ for large enough constants $c_1$ and $c_2$ only depending on $C_0$,
{\small
\begin{align}
 &\sup_{\beta\in\Theta_0(s,h)}\frac{1}{n_0+n_{\mA_0}}\sum\limits_{k\in\mA_0\cup\{0\}}\|X^{(k)}(\hat{\beta}(\mA_0)-\beta)\|_2^2 \vee \|\hat{\beta}(\mA_0)-\beta\|_2^2\nonumber\\
 &=O_P\left( \frac{s\log p}{n_{\mA_0}+n_0}+\frac{(h\wedge s)\log p}{n_0}\right)\label{l0-re1}.
\end{align}
}
\end{theorem}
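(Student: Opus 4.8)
The plan is to treat the two steps of Algorithm \ref{algo3} as separate (generalized) Lasso problems and then stitch them together through a prediction-norm Cauchy--Schwarz bound. Write $N=n_{\mA_0}+n_0$, $\widehat{\Sig}^{(k)}=(X^{(k)})^\intercal X^{(k)}/n_k$, and, to match the bias correction in Step 2, regard $\hat{\delta}^{(k)}$ as an estimator of $\gamma^{(k)}:=w^{(k)}-\beta=-\delta^{(k)}$, which is $\ell_0$-sparse with $\|\gamma^{(k)}\|_0\le h$.

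First I would analyze Step 1. The program is a quadratic-plus-$\ell_1$ objective with Gram matrix $\widehat{\Sig}^{\mA_0}$ and linear term $\hat{v}^{(k)}:=(X^{(k)})^\intercal y^{(k)}/n_k-(X^{(0)})^\intercal y^{(0)}/n_0$, whose mean under homogeneous designs is $\Sig\gamma^{(k)}$. Two ingredients are needed: (i) a restricted eigenvalue bound for $\widehat{\Sig}^{\mA_0}$ over the $h$-sparse cone, which holds with high probability for Gaussian designs once $h\log p/N=o(1)$ (a consequence of Condition \ref{cond1} and the hypotheses); and (ii) control of the score $\|\widehat{\Sig}^{\mA_0}\gamma^{(k)}-\hat{v}^{(k)}\|_\infty$, which is mean zero. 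Substituting $y^{(k)}=X^{(k)}w^{(k)}+\eps^{(k)}$ and $w^{(k)}=\beta+\gamma^{(k)}$ splits this score into noise averages $(X^{(k)})^\intercal\eps^{(k)}/n_k$ and $(X^{(0)})^\intercal\eps^{(0)}/n_0$, concentrating at $\sqrt{\log p/n_k}$ and $\sqrt{\log p/n_0}$ by Condition \ref{cond2}, and design-fluctuation terms such as $(\widehat{\Sig}^{\mA_0}-\widehat{\Sig}^{(k)})w^{(k)}$, which for the $s$- and $h$-sparse pieces $\beta,\gamma^{(k)}$ concentrate at $\sqrt{\log p/n}$ times bounded $\ell_2$ norms. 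This matches $\lam_k\asymp(\|y^{(k)}\|_2/n_k+\|y^{(0)}\|_2/n_0)\sqrt{\log p}$, and the standard Lasso oracle inequality then gives $\|\hat{\delta}^{(k)}-\gamma^{(k)}\|_2^2\lesssim_{\P} h\lam_k^2$ and, after passing between the pooled and individual prediction norms via restricted upper/lower eigenvalues under homogeneity, $\|X^{(k)}(\hat{\delta}^{(k)}-\gamma^{(k)})\|_2^2/n_k\lesssim_{\P} h\lam_k^2$.

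Next I would analyze Step 2, which is a Lasso for the $s$-sparse $\beta$ on the pooled design with corrected responses $y^{(k)}-X^{(k)}\hat{\delta}^{(k)}=X^{(k)}\beta+X^{(k)}(\gamma^{(k)}-\hat{\delta}^{(k)})+\eps^{(k)}$. From the basic inequality, the genuine-noise cross term $\langle\hat{\beta}(\mA_0)-\beta,\ \tfrac{1}{N}\sum_k(X^{(k)})^\intercal\eps^{(k)}\rangle$ is controlled by $\|\hat{\beta}(\mA_0)-\beta\|_1\cdot O_P(\sqrt{\log p/N})$, matching $\lam_\beta$. The plug-in cross term is the crux: I would bound $\tfrac{1}{N}\sum_k|\langle X^{(k)}(\hat{\beta}(\mA_0)-\beta),\ X^{(k)}(\gamma^{(k)}-\hat{\delta}^{(k)})\rangle|\le E\cdot D$ with $D^2=\tfrac{1}{N}\sum_k\|X^{(k)}(\hat{\beta}(\mA_0)-\beta)\|_2^2$ and $E^2=\tfrac{1}{N}\sum_{k\in\mA_0}\|X^{(k)}(\gamma^{(k)}-\hat{\delta}^{(k)})\|_2^2$, and absorb the coupling via $2ED\le\tfrac{1}{2}D^2+2E^2$. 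The Step 1 bounds give $E^2=\tfrac{1}{N}\sum_{k\in\mA_0}n_k\cdot O_P(h\lam_k^2)$; here the hypothesis $n_{\mA_0}\gtrsim|\mA_0|n_0$ is exactly what keeps the base cost $h\log p$ summed over $|\mA_0|$ studies below $N(h\log p/n_0)$, so that $E^2\lesssim_{\P} h\log p/n_0$. A restricted eigenvalue argument on the pooled design (valid since $s\log p/N=o(1)$) converts $D^2$ into $\|\hat{\beta}(\mA_0)-\beta\|_2^2$, giving the bound $\lesssim_{\P} s\log p/N+h\log p/n_0$; the $h\wedge s$ then follows by noting the correction error is also controlled by the plain $s$-sparse rate $s\log p/n_0$.

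The hardest part will be the plug-in term together with the statistical coupling between $\hat{\delta}^{(k)}$ and the Step 2 samples: because the same data appear in both steps, $\gamma^{(k)}-\hat{\delta}^{(k)}$ cannot be treated as independent of $X^{(k)}$, so the prediction-norm bound must be combined with uniform (over the sparse cone) control of the designs, and the smallness condition $h\log p/n_0=o((\log p/N)^{1/4})$ is precisely what forces the resulting higher-order interaction terms to be dominated by the claimed rate. The restricted eigenvalue verifications for $\widehat{\Sig}^{\mA_0}$ and for the individual $X^{(k)}$ over $(s+h)$-sparse cones, uniformly over $\Theta_0(s,h)$, are routine given Conditions \ref{cond1}--\ref{cond2} but must be assembled with care.
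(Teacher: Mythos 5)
Your proposal matches the paper's proof in all essentials: Step 1 is analyzed as a Lasso for each $h$-sparse contrast, yielding $\|X^{(k)}(\hat{\delta}^{(k)}-\delta^{(k)})\|_2^2/n_k\lesssim_{\P} h\log p/(n_0\wedge n_k)$; the plug-in error is then absorbed into the Step-2 basic inequality exactly via your $2ED\le\tfrac12 D^2+2E^2$ device, with $n_{\mA_0}\gtrsim|\mA_0|n_0$ used in the same way to get $E^2\lesssim_{\P}h\log p/n_0$. The one minor difference is where the condition $h\log p/n_0=o((\log p/(n_0+n_{\mA_0}))^{1/4})$ enters: in the paper it is not about decoupling the two steps (the prediction-norm Cauchy--Schwarz requires no independence) but is used to verify the restricted eigenvalue of the pooled Gram matrix over the $\ell_1$-ball of radius $(h\log p/n_0)/\lam_{\beta}$ in which $\hat{\beta}(\mA_0)-\beta$ lies in the case where the plug-in term dominates.
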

We see from (\ref{l0-re1}) that $\hat{\beta}(\mA_0)$ has sharper convergence rate than the Lasso when $n_{\mA_0}\gg n_0$ and $h\ll s$. The first two requirements in (\ref{ms-ss}) guarantee that the lower restricted eigenvalues of the sample Gram matrices are bounded away from zero. The last expression in (\ref{ms-ss}) requires the average auxiliary sample size is asymptotically no smaller than the primary sample size $n_0$. This is a checkable condition in practice. When $|\mA_0|$ is fixed, this condition can be trivially satisfied. Indeed, if the auxiliary sample sizes are too small, there would not be much improvement with transfer learning.  
We next establish the minimax lower bound for $\beta\in\Theta_0(s,h)$.
\begin{theorem}[Minimax lower bound for $q=0$]
\label{thm-mini2}
Assume Condition \ref{cond1} and Condition \ref{cond2}. Suppose that $\max\{h\log p/n_0, s\log p/(n_{\mA_0}+n_0)\}=o(1)$.
There exists some constants $c_1$ and $c_2$ such that
\begin{align*}
\P\left(\inf_{\hat{\beta}}\sup_{\Theta_0(s,h)} \|\hat{\beta}-\beta\|_2^2\geq c_1\frac{s\log p}{n_{\mA_0}+n_0}+ c_2\frac{(h\wedge s)\log p}{n_0}\right)\geq \frac{1}{2}.
\end{align*}
\end{theorem}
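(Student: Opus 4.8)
The plan is to establish the lower bound by Fano's method, handling the two terms separately and then combining them. Since $a+b\le 2(a\vee b)$ for nonnegative $a,b$, it suffices to show that each of $s\log p/(n_{\mA_0}+n_0)$ and $(h\wedge s)\log p/n_0$ is individually a valid lower bound on $\inf_{\hat\beta}\sup_{\Theta_0(s,h)}\|\hat\beta-\beta\|_2^2$ with probability at least $1/2$; the two-term bound then follows after adjusting $c_1,c_2$. For a lower bound I may specialize the noise to Gaussian, which is admissible under Condition \ref{cond2}, so both sub-problems reduce to sparse Gaussian regression with Gaussian random design, and in each I would build a finite packing of $\Theta_0(s,h)$, bound the pairwise Kullback--Leibler divergences, and invoke the standard reduction from estimation to multiple testing.

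For the first term, I would use the least favorable sub-family in which every informative study shares the target coefficient, i.e. $w^{(k)}=\beta$ (so $\delta^{(k)}=0\in\mathcal{B}_0(h)$) for $k\in\mA_0$, while $w^{(k)}$ for $k\notin\mA_0$ is fixed independently of $\beta$ so that the non-informative data carry no information about $\beta$. The whole informative sample, of size $n_{\mA_0}+n_0$, then behaves as an i.i.d. sample from the target model. A Varshamov--Gilbert packing of $\mathcal{B}_0(s)$ with atom magnitude $\rho^2\asymp \sigma^2\log p/\big((n_{\mA_0}+n_0)\Lambda_{\max}(\Sig)\big)$ yields $M=\exp(c\,s\log p)$ configurations pairwise separated by $\gtrsim s\rho^2$ in squared $\ell_2$ distance. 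Conditional on the design the pairwise KL divergence is $\tfrac{1}{2\sigma^2}\|X(\beta-\beta')\|_2^2$, with expectation $\tfrac{(n_{\mA_0}+n_0)}{2\sigma^2}(\beta-\beta')^\intercal\Sig(\beta-\beta')$, and the scaling of $\rho$ keeps this within a small multiple of $\log M$, so Fano's inequality returns a testing error at least $1/2$ and hence the rate $s\log p/(n_{\mA_0}+n_0)$ \citep{Raskutti11, Ver12}.

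For the second term, I would instead render the auxiliary data completely uninformative by fixing $w^{(k)}=0$ for all $k$, so that the joint law of $\{(X^{(k)},y^{(k)})\}_{k\ge 1}$ does not depend on the target parameter. Then $\delta^{(k)}=\beta$, and membership in $\Theta_0(s,h)$ forces $\beta$ to be simultaneously $s$-sparse and $h$-sparse, i.e. $(h\wedge s)$-sparse; the only informative observations are the $n_0$ primary ones. Applying the same Varshamov--Gilbert construction to $\mathcal{B}_0(h\wedge s)$ with atom magnitude $\rho^2\asymp \sigma^2\log p/(n_0\Lambda_{\max}(\Sig))$ produces $\exp(c\,(h\wedge s)\log p)$ well-separated configurations whose pairwise KL divergences are governed only by the $n_0$ primary samples (the identical auxiliary factors cancel), and Fano then yields $(h\wedge s)\log p/n_0$.

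The main obstacle is twofold. First, the packing must respect both sparsity constraints at once while achieving the required $\ell_2$-separation and staying within the Fano KL budget; the factor $h\wedge s$ is exactly the combinatorial effect of intersecting $\mathcal{B}_0(s)$ with $\mathcal{B}_0(h)$, and the regime assumption $\max\{h\log p/n_0,\ s\log p/(n_{\mA_0}+n_0)\}=o(1)$ is what guarantees $\rho$ can be taken small enough for both balancing requirements to hold. Second, because the design is random Gaussian, the conditional KL $\tfrac{1}{2\sigma^2}\|X(\beta-\beta')\|_2^2$ and the separation $\|X(\beta-\beta')\|_2^2$ are themselves random; I would control both on a single high-probability design event via concentration of the restricted eigenvalues of the sample Gram matrices, which Condition \ref{cond1} keeps bounded away from $0$ and $\infty$, so that the Fano argument runs conditionally on an event of probability at least $1/2$ — the source of the ``$\P(\cdot)\ge 1/2$'' in the statement. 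Combining the two sub-problems through $a+b\le 2(a\vee b)$ then completes the proof.
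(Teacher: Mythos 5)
Your proposal is correct and follows essentially the same route as the paper: for the first term you set $\delta^{(k)}=0$ for $k\in\mA_0$ so the pooled $n_{\mA_0}+n_0$ observations reduce to standard sparse regression, and for the second you set $w^{(k)}=0$ so the auxiliary data are uninformative and $\beta$ is forced into $\mathcal{B}_0(s)\cap\mathcal{B}_0(h)=\mathcal{B}_0(h\wedge s)$, exactly the two least-favorable sub-families the paper uses before invoking the lower bound of \citet{Raskutti11}. The only difference is presentational: you spell out the Varshamov--Gilbert/Fano machinery and combine the two terms via $a+b\le 2(a\vee b)$, whereas the paper case-splits on which term dominates and cites the known result directly.
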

Theorem \ref{thm-mini2} demonstrates the minimax optimality of $\hat{\beta}(\mA_0)$ in $\Theta_0(s,h)$ under the conditions of Theorem \ref{thm1-l0}. 
When $\mA_0$ is unknown, one can consider the Trans-Lasso algorithm where the Oracle Trans-Lasso is replaced with the Oracle Trans-Lasso for $q=0$. We see that Lemma \ref{thm-ag1} still holds and a similar result as Theorem \ref{sec4-lem1} can be established with $\mA$ replaced by $\mA_0$. For the sake of conciseness, it is omitted here. 

While $\ell_0$-sparsity is widely assumed and studied in the high-dimensional literature, $\ell_0$-sparse contrast vectors may not be a realistic situation. First, an $\ell_0$-sparse $\delta^{(k)}$ implies that $\beta$ and $w^{(k)}$ have the same coefficients in most coordinates which may be impractical. Second, a typical data preprocessing step is to standardize the data such that $\|y^{(k)}\|_2^2=n_k$. While standardization does not change the $\ell_0$-norm of $\beta$ or $w^{(k)}$, it can change the $\ell_0$-norm of $\delta^{(k)}$. Hence, this work focuses on $\ell_1$-sparse contrasts, which is more practical in applications.



\section{Simulation Studies}
\label{sec-simu}
In this section, we evaluate the empirical performance of our proposals and some other comparable methods in various numerical experiments. Specifically, we evaluate the performance of four methods: the original Lasso, the Oracle Trans-Lasso proposed in Section \ref{sec2-1}, the Trans-Lasso proposed in Section \ref{sec3-1}, and a naive Trans-Lasso method, which naively assumes $\mA=\{1,\dots, K\}$ in the Oracle Trans-Lasso. The purpose of including the naive Trans-Lasso is to understand the overall informative level of the auxiliary samples. In the Appendix, we report the performance of the estimated sparse indices $\widehat{R}^{(k)}$ in achieving (\ref{cond-agg2}). 

\subsection{Identity Covariance Matrix for the Designs}
\label{sec5-1}
We consider $p=500$, $n_0=150$, and  $n_1,\dots,n_K=100$ for $K=20$. The covariates $x^{(k)}_i$ are \textit{i.i.d.} Gaussian with mean zero and identity covariance matrix for all $0\leq k\leq K$ and $\eps^{(k)}_i$ are \textit{i.i.d.} Gaussian with mean zero and variance one for all $0\leq k\leq K$. For the target parameter $\beta$, we set $s=16$, $\beta_{j}=0.3$ for $j\in \{1,\dots,s\}$, and $\beta_j=0$ otherwise. For the regression coefficients in auxiliary samples, we consider two configurations. 

(i) Let
\begin{align*}
   w^{(k)}_{j}=\beta_{j}-0.3\mathbbm{1}(j\in H_k).
   \end{align*}
For a given $\mA$, if $k \in \mA$, we set $H_k$ to be a random subset of $[p]$ with $|H_k|=h\in\{2,6,12\}$ and if $k\notin \mA$, we set $H_k$ to be a random subset of $[p]$ with $|H_k|=50$.

(ii)
For a given $\mathcal{A}$, if $k \in \mA$, let $H_k$ be a random subset of $[p]$ with $|H_k|=p/2$ and let 
\[
   w^{(k)}_{j}=\beta_{j}+\xi_j\mathbbm{1}(k\in H_k), ~\text{where}~ \xi_j\sim_{i.i.d.} \text{Laplace}(0, 2h/p),
\]
   where $h\in\{2,6,12\}$ and $\text{Laplace}(a, b)$ is Laplacian distribution with mean $a$ and dispersion $b$.
If $k\notin \mA$, we set $H_k$ to be a random subset of $[p]$ with $|H_k|=p/2$ and let
\[
    w^{(k)}_j=\beta_j+\xi_j\mathbbm{1}(j\in H_k),~\text{where}~\xi_j\sim_{i.i.d.} \text{Laplace}(0, 100/p).
\]
The setting (i) can be treated as either $\ell_0$- or $\ell_1$-sparse contrasts.
In practice, the true parameters are unknown and we use $\mA$ to denote the set of auxiliary samples without distinguishing $\ell_0$- or $\ell_1$-sparsity. We consider $|\mA|\in\{0,4,8,\dots,20\}$.

To avoid searching for tuning parameters, we use the raw data other than the standardized data.
 For the Lasso method, the tuning parameter is chosen to be $\sqrt{2\log p/n_0}$. For the Oracle Trans-Lasso, we set $\lam_w=\sqrt{2\log p/(n_0+n_{\mA})}$ and $\lam_{\delta}=\sqrt{2\log p/n_0}$. The naive Trans-Lasso is computed based on the Oracle Trans-Lasso with $\mA=\{1,\dots, K\}$. For the Trans-Lasso,  we set $\mathcal{I}=\{1,\dots,n_0/2\}$ in Step 1. The sets $\widehat{G}_0,\dots,\widehat{G}_L$ are computed based on SURE screening with $t_*=n_0^{3/4}$. For the Q-aggregation, we implement Algorithm 2 (GD-BMAX) in \citet{Dai18}, which solves a dual representation of the Q-aggregation. Using their notations, we set $\omega=\sqrt{2}$, $\nu=0.5$, and $t_k=1$ in GD-BMAX. We mention that as for demonstration, the tuning parameters are not deliberately chosen. We then perform a cross-fitting by using the first half of the samples for aggregation and the other half for constructing the dictionary. Our final estimator is the average of these two Trans-Lasso estimators.

The sum of squared estimation errors (SSE) are reported in Figure \ref{fig1-simu}. As we expected, the performance of the Lasso does not change as $|\mA|$ changes. On the other hand, all three other Trans-Lasso based algorithms have estimation errors decreasing as $|\mA|$ increases.  As $h$ increases, the problem gets harder and the estimation errors of all three methods increase. In settings (i) and (ii), the performance of Oracle Trans-Lasso and Trans-Lasso are comparable in most occasions. When $h=12$ in setting (i), we see a relatively large gap between the SSEs of Trans-Lasso and Oracle Trans-Lasso. One main reason is that the proposed $\widehat{R}^{(k)}$ does not consistently separate informative auxiliary samples from others in this case (Table \ref{table-rank} in the Appendix). This is because, the sparsity levels of $\delta^{(k)}$ are similar for $k\in\mA$ and $k\notin \mA$. In other cases, the proposed $\widehat{R}^{(k)}$ can separate informative auxiliary samples from others reasonably well.
On the other hand, the naive Trans-Lasso has worse performance than the Lasso when $|\mA|$ is relatively small. This shows that the scenarios under consideration are hard in the sense that some naive methods cannot adapt the unknown $\mA$ uniformly.
 
\begin{figure}
\makebox{\includegraphics[width=0.98\textwidth, height=4.5cm]{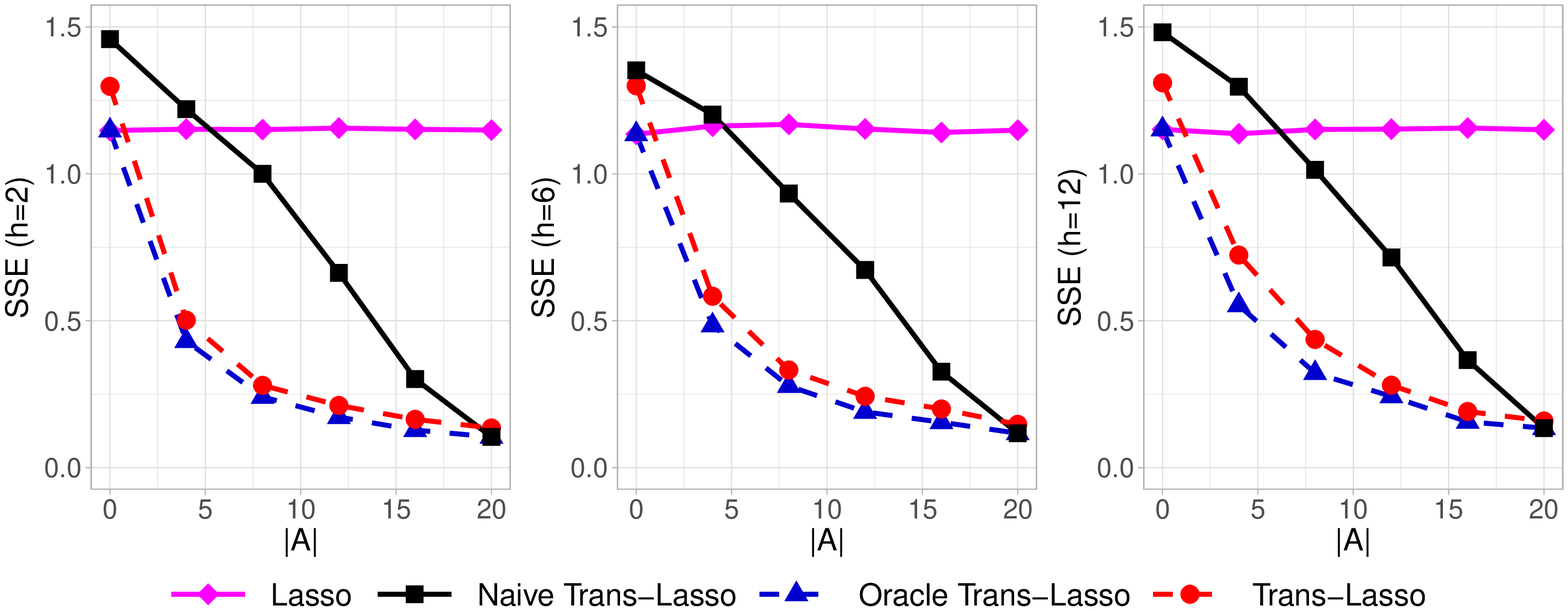}}
\makebox{\includegraphics[width=0.98\textwidth, height=4.5cm]{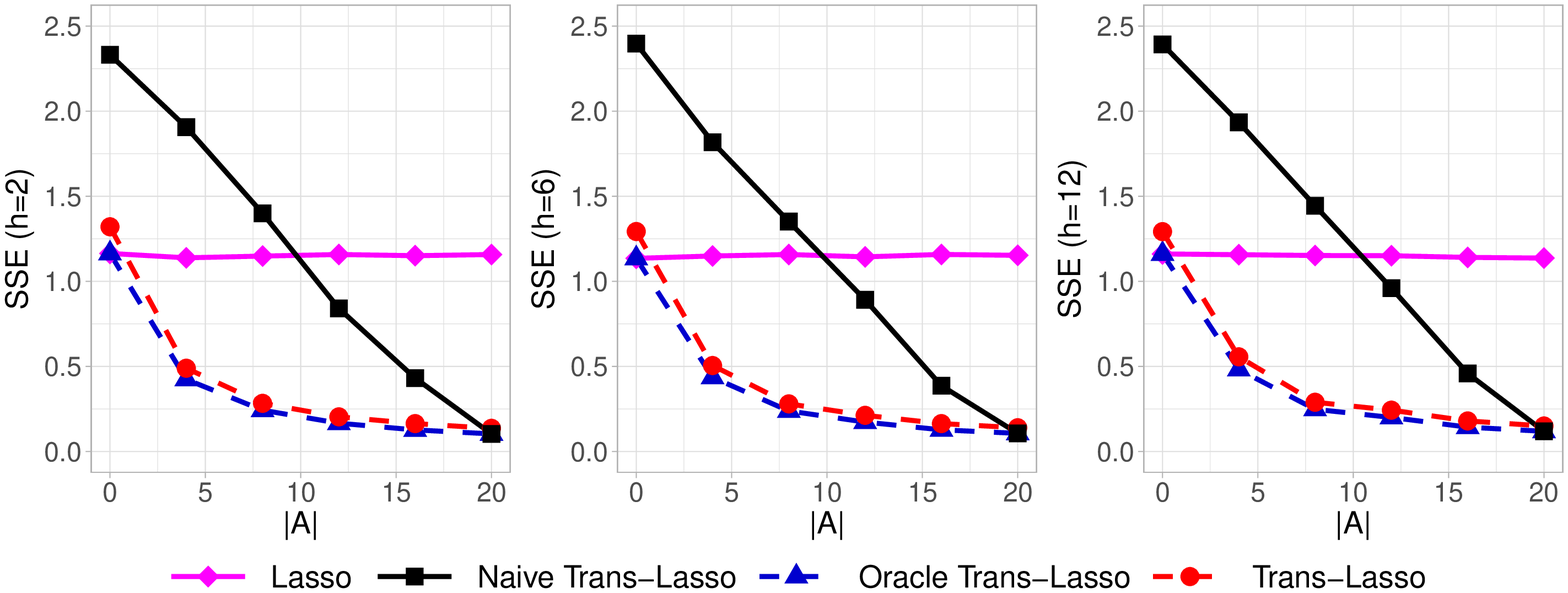}}
\caption{\label{fig1-simu}Estimation errors of the Lasso, Naive Trans-Lasso, Oracle Trans-Lasso, and Trans-Lasso for the settings with identity covariance matrices. The two rows correspond to configurations (i) and (ii), respectively. Each point is summarized from 200 independent simulations.}
\end{figure}

\subsection{Homogeneous Designs among $\mA\cup \{0\}$}
\label{sec5-2}
In this subsection, we consider $x^{(k)}_i$ as \textit{i.i.d.} Gaussian with mean zero and a Toeplitz covariance matrix whose first row is 
\[\Sig^{(k)}_{1,.}=(1,0.8,\dots,0.8^K, 0_{p-K-1})
\]
for $k\in \mA\cup\{0\}$. For $k\notin \mA\cup\{0\}$, $x^{(k)}_i$ are \textit{i.i.d.} Gaussian with mean zero and a Toeplitz covariance matrix whose first row is 
\begin{equation}
\label{eq-Sigk1}
\Sig^{(k)}_{1,.}=(1,\underbrace{1/(k+1),\dots,1/(k+1)}_{2k-1}, 0_{p-2k}).
\end{equation}
 Other true parameters and the dimensions of the samples are set to be the same as in Section \ref{sec5-1}. From the results presented in Figure \ref{fig2-simu}, we see that the Trans-Lasso and Oracle Trans-Lasso have reliable performance when $\Sig^{(k)}\neq \Sig^{(k')}$ for $k\in\mA\cup\{0\}$ and $k'\notin \mA\cup\{0\}$. In Table \ref{table-rank} in the Appendix, we see that our proposed $\widehat{R}^{(k)}$ can separate informative auxiliary samples from others consistently in settings (i) and (ii).
 
 On the other hand, we can observe from Figure \ref{fig2-simu} that the SSE of the Trans-Lasso can be slightly below those of the Oracle Trans-Lasso when $0<|\mA|< K$. There are two potential reasons. As a cross-fitting step is performed in the Trans-Lasso, the samples for computing the Trans-Lasso and for other methods are different empirically. Second, our definition of $\mA$ may not always be the best subset of auxiliary samples that  give the smallest estimation errors. 
\begin{figure}
\makebox{\includegraphics[width=0.98\textwidth, height=4.5cm]{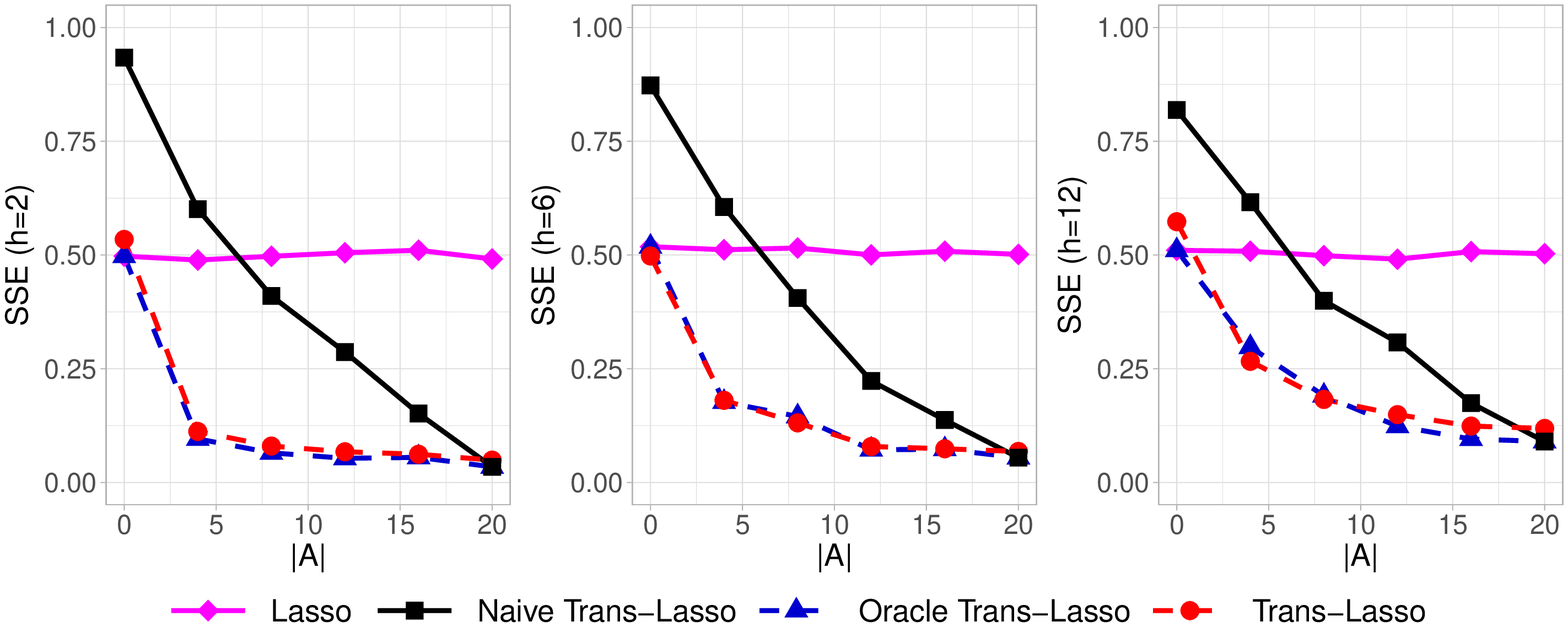}}
\makebox{\includegraphics[width=0.98\textwidth, height=4.5cm]{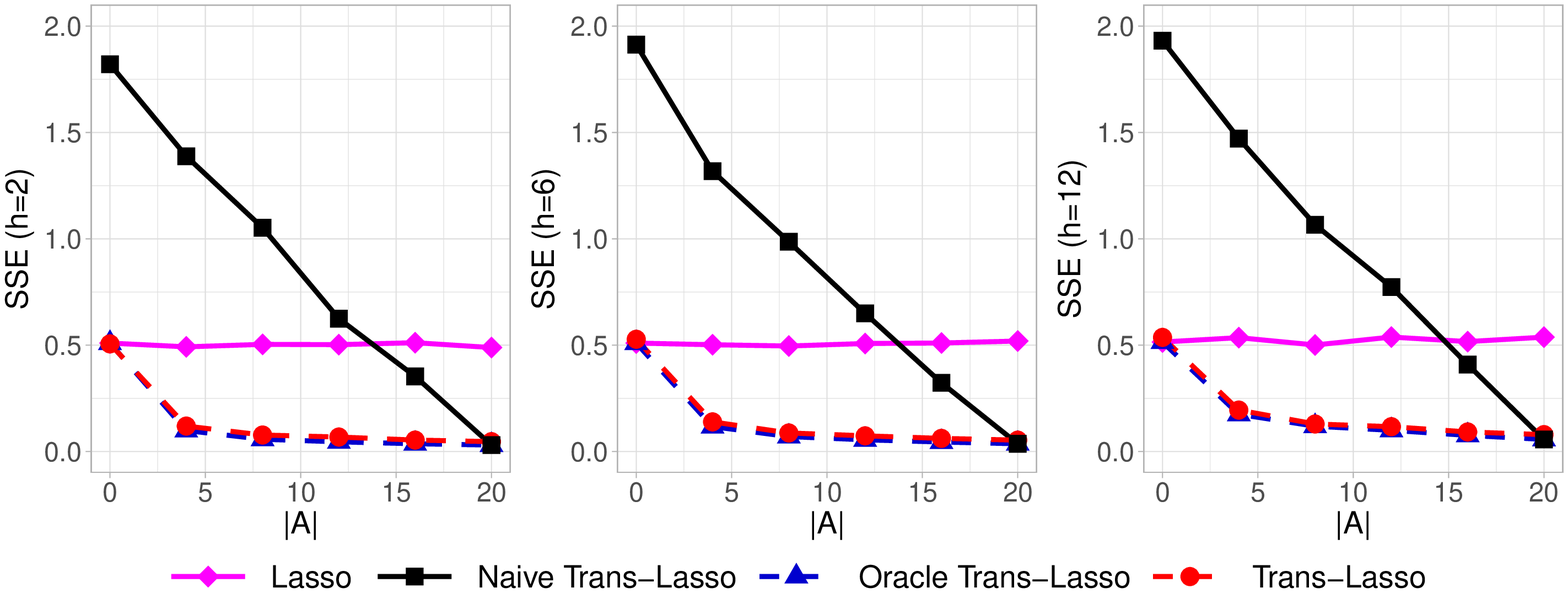}}
\caption{\label{fig2-simu}Estimation errors of the Lasso, Naive Trans-Lasso, Oracle Trans-Lasso, and Trans-Lasso for the settings with homogeneous covariance matrices among $k\in\mA\cup\{0\}$. The two rows correspond to configurations (i) and (ii), respectively. Each point is summarized from 200 independent simulations.}
\end{figure}

\subsection{Heterogeneous Designs}
We now consider $x^{(k)}_i$ as \textit{i.i.d.} Gaussian with mean zero and a Toeplitz covariance matrix whose first row is (\ref{eq-Sigk1}) for $k=1,\dots, K$. Moreover, $\Sig^{(0)}=I_p$. Other parameters and the dimensions of the samples are set to be the same as in Section \ref{sec5-1}.
Figure \ref{simu-fig4}  shows that the general patterns observed in previous subsections still hold. We observe a relatively large gap between the SSEs of the Oracle Trans-Lasso and Trans-Lasso in the scenario when $h=12$ in setting (i). Again, this is because $\widehat{R}^{(k)}$ can only separate a subset of $\mA$ from $\mA^c$. In other cases, the performance of Trans-Lasso is comparable to the Oracle Trans-Lasso.
\begin{figure}
\makebox{\includegraphics[width=0.98\textwidth, height=4.5cm]{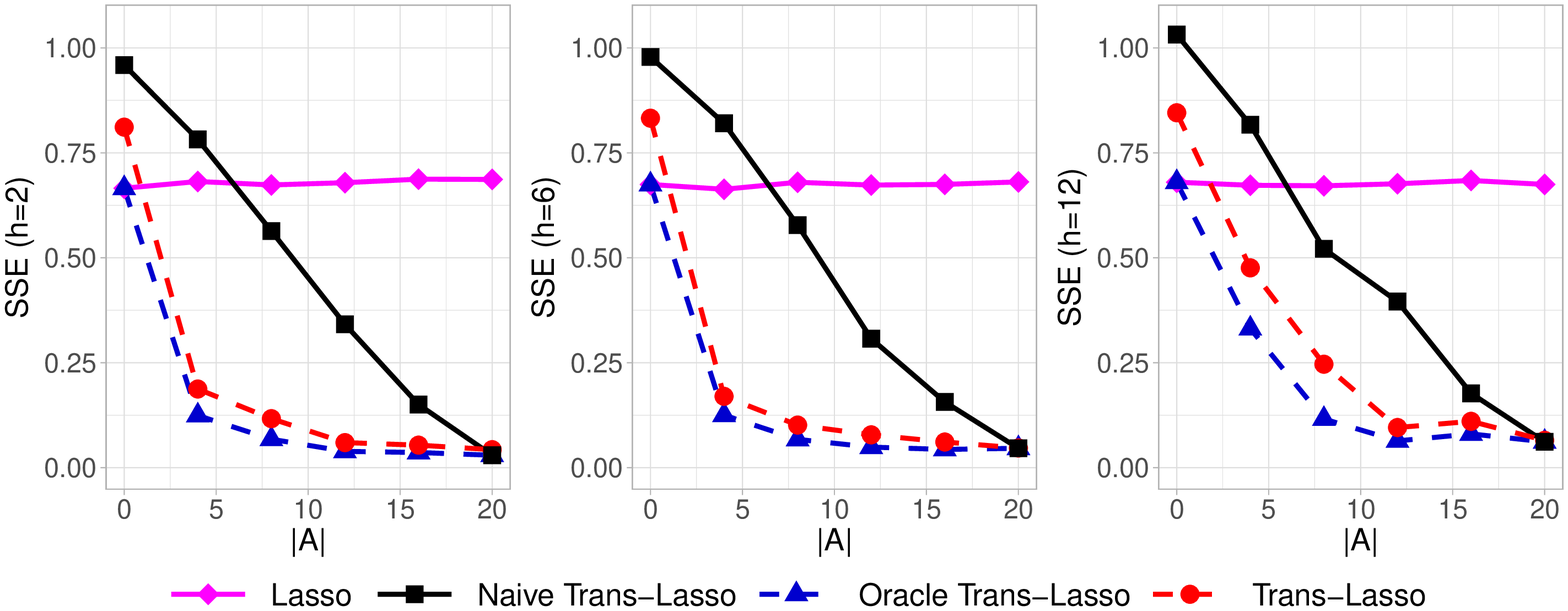}}
\makebox{\includegraphics[width=0.98\textwidth, height=4.5cm]{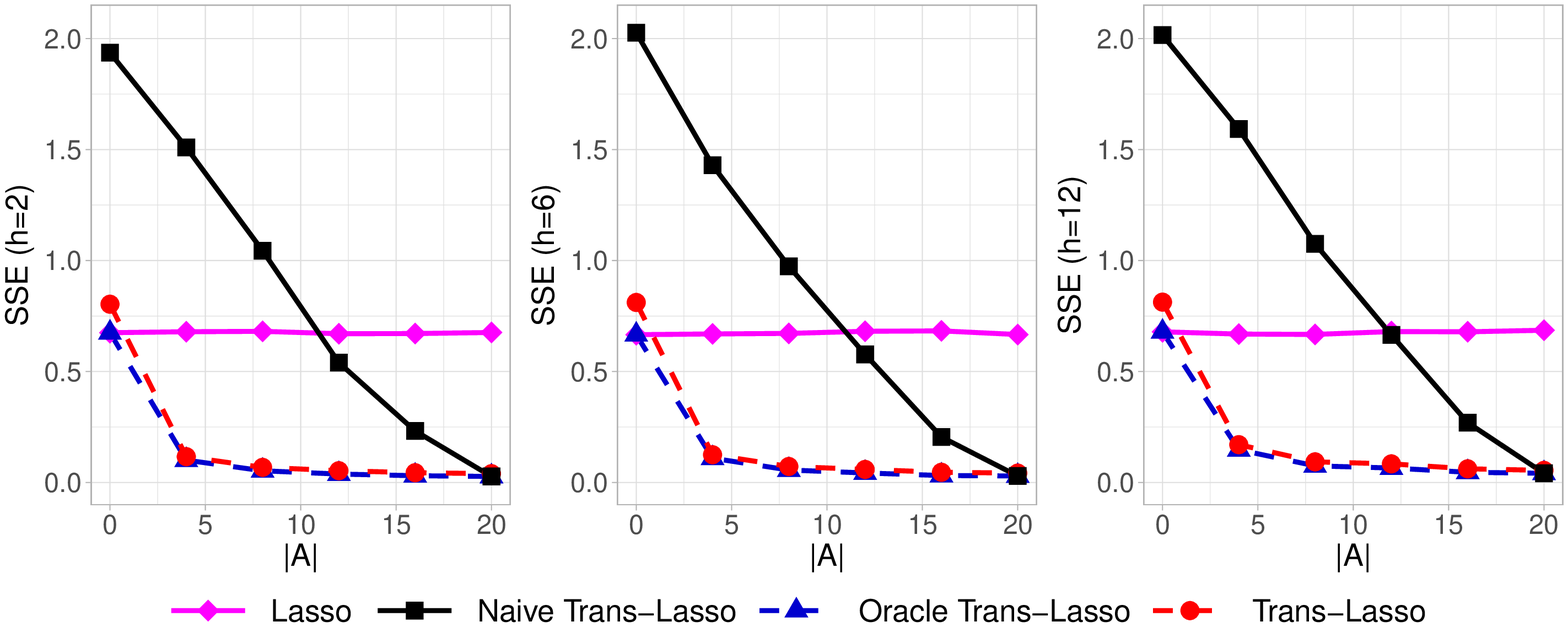}}
\caption{\label{simu-fig4}Estimation errors of the Lasso, Naive Trans-Lasso, Oracle Trans-Lasso, and Trans-Lasso for the settings with heterogeneous covariance matrices. The two rows correspond to configurations (i) and (ii), respectively. Each point is summarized from 200 independent simulations.}
\end{figure}

\section{Application to Genotype-Tissue Expression Data}
\label{sec-data}
In this section, we demonstrate  the performance of our proposed transfer learning algorithm in analyzing the Genotype-Tissue Expression (GTEx) data (\url{https://gtexportal.org/}). Overall, the data sets  measure  gene expression levels from 49 tissues of 838 human donors, in total comprising 1,207,976 observations of 38,187 genes.  In our analysis, we focus on genes that are related to central nervous systems, which were assembled as MODULE\underline{ }137 ( \url{https://www.gsea-msigdb.org/gsea/msigdb/cards/MODULE_137.html}). 
This module includes a total of  545 genes  and additional 1,632 genes that  are significantly enriched in the same experiments as the genes of the module.  A   complete list of genes can be found at \url{http://robotics.stanford.edu/~erans/cancer/modules/module_137}. 

\subsection{Data Analysis Method}
To demonstrate the replicability of our proposal, we consider multiple target genes and multiple target tissues and estimate their corresponding models one by one.
For an illustration of the computation process,  we  consider gene \texttt{JAM2} (Junctional adhesion molecule B), as the response variable, and treat other genes in this module as covariates.  \texttt{JAM2} is a protein coding gene on chromosome 21 interacting with a variety of immune cell types and may play a role in lymphocyte homing to secondary lymphoid organs \citep{JAM2}.  It is of biological interest to understand how other CNS genes  can predict its expression levels in different tissues/cell types. 

As an example, we consider the association between \texttt{JAM2} and other genes in a brain tissue as the target models and the association between \texttt{JAM2} and other genes in other tissues as the auxiliary models. As there are multiple brain tissues in the dataset, we treat each of them as the target at each time. The list of target tissues can be found in Figure \ref{fig-data-1}. The min, average, and max of primary sample sizes in these target tissues are 126, 177, and 237, respectively. The gene \texttt{JAM2} is expressed in 49 tissues in our dataset and we use 47 tissues  with more than 120 measurements on \texttt{JAM2}. The average number of auxiliary samples for each target model is 14,837 over all the non-target tissues. The covariates used are the  genes that are in the enriched MODULE\underline{ }137 and do not have missing values in all of the 47 tissues. The final covariates include a total of  1,079 genes. The data is standardized before analysis.

 We compare the prediction performance of Trans-Lasso with the Lasso. To understand the overall informative level of the auxiliary samples, we also compute the Naive Trans-Lasso which treats all the auxiliary samples are informative. For evaluation, we split the target sample into 5 folds and use 4 folds to train the three algorithms and use the remaining fold to test their prediction performance. We repeat this process 5 times each with a different fold of test samples. 
We mention that one individual can provide expression measurements on multiple tissues and these measurements are hard to be independent. As the dependence of the samples can reduce the efficiency of the estimation algorithms, using auxiliary samples may still be beneficial. However, one need to choose proper tuning parameters. The tuning parameter for the Lasso and $\lam_w$ in the Naive-Trans-Lasso are chosen by 8-fold cross validation. The $\lam_{\delta}$ in the Naive Trans-Lasso is set to be $\lam_w\sqrt{\sum_{k=0}^K n_k/n_0}$. For our proposal Trans-Lasso, we use two-thirds of the training sample to construct the dictionary and use one-third of the training sample for aggregation. The sparsity indices are computed in the same way as in our simulations. For computing each $\hat{\beta}(\widehat{G}_l)$, the $\lam_w$ is chosen by 8-fold cross validation and $\lam_{\delta}$ is set to be the corresponding $\lam_w\sqrt{\sum_{k\in \widehat{G}_l} n_k/n_0}$. The tuning parameters in aggregation are chosen as in our simulation.

\subsection{Prediction Performance of the Trans-Lasso for \texttt{JAM2} Expression}
Figure \ref{fig-data-1} demonstrates the errors of Naive Trans-Lasso, and Trans-Lasso relative to the Lasso for predicting gene expression \texttt{JAM2} using other genes. The prediction errors in the raw scale are provided in the Appendix. We see that the Trans-Lasso algorithm always achieves the smallest prediction errors across different tissues. Its average gain is 17\% comparing to the Lasso. This shows that our characterization of the similarity of the target model and a given auxiliary model is suitable to the current problem and our proposed sparsity index for aggregation is effective in detecting good auxiliary samples. In tissues such as Amygdala and Nucleus accumbens basal ganglia, the Trans-Lasso achieves relatively significant improvement and it has more accurate prediction than the Naive Trans-Lasso. This implies that the knowledge from the auxiliary tissues have been successfully transferred into these target tissues for modeling \texttt{JAM2} even if not all the tissues are informative. In tissues such as Pituitary, the improvement of the Naive Trans-Lasso and Trans-Lasso are mild. This implies that the regression model for \texttt{JAM2} in Pituitary is relatively distinct from the models in other tissues so that little knowledge can be transferred. Moreover, in tissues such as Frontal Cortex, the prediction performance of naive Trans-Lasso can be worse than the Lasso whereas the Trans-Lasso still has the smallest prediction error. This again demonstrates the robustness of our proposal.

 \begin{figure}[H]
 \centering
 \includegraphics[height=6cm, width=0.95\textwidth]{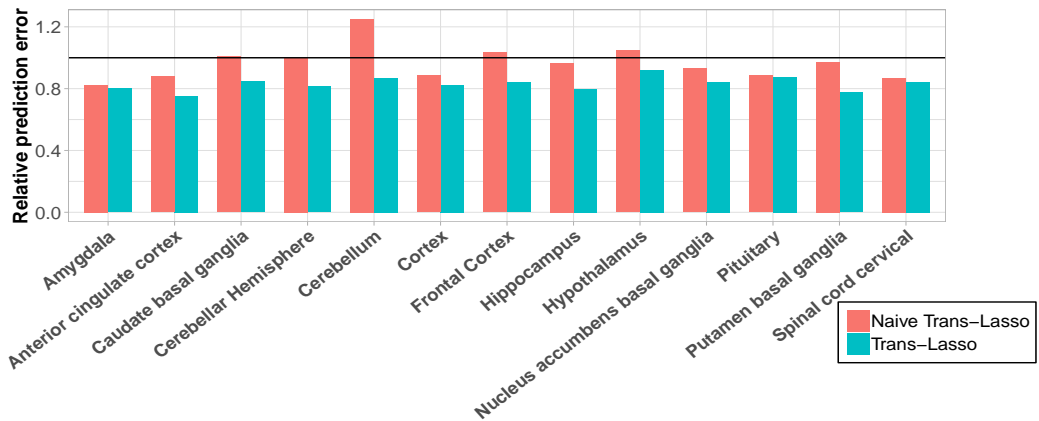}
 \caption{ \label{fig-data-1}Prediction errors of the naive Trans-Lasso and Trans-Lasso relative to the Lasso evaluated via 5-fold cross validation for gene \texttt{JAM2} in multiple tissues.}
 \end{figure}
 
 \subsection{Prediction Performance of Other 25 Genes on Chromsome 21}
To demonstrate the replicability of our proposal, we also consider other genes on Chromosome 21 which are in Module\underline{ }137 as our target genes. We report the overall prediction performance of these 25 genes in Figure \ref{fig-data-2}. A complete list of these genes and some summary information can be found in the Appendix. Generally speaking, we see that the Trans-Lasso has the best overall performance among all target tissues. Specifically, the improvement of Trans-Lasso is significant in tissues including Cerebellar Hemisphere, Cortex, and Frontal Cortex. The naive Trans-Lasso is has comparable accuracy to the Lasso in most cases, which implies that the overall similarity between the auxiliary tissues and target tissues is not very strong.

 \begin{figure}
 \centering
 \makebox{\includegraphics[height=6cm, width=0.99\textwidth]{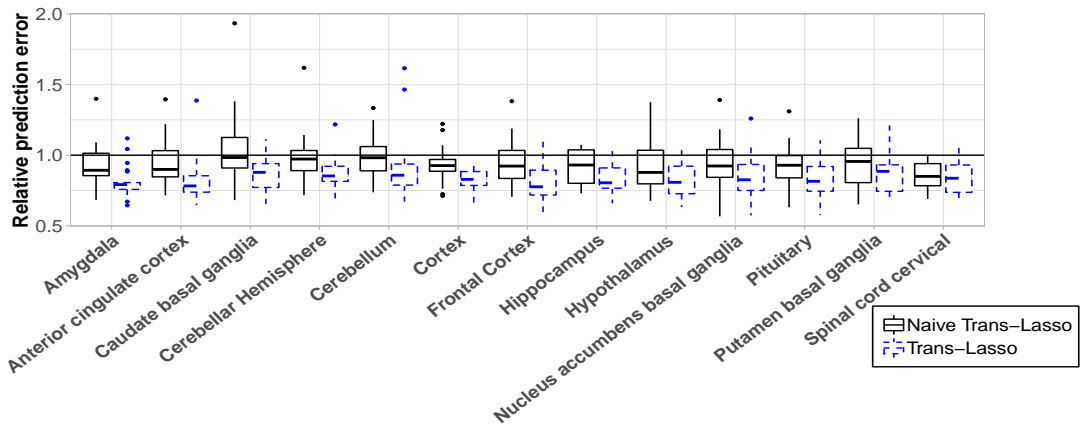}}
 \caption{ \label{fig-data-2}Prediction errors of the naive Trans-Lasso and Trans-Lasso relative to the Lasso for the 25 genes on Chromosome 21 and in Module\underline{ }137, in multiple target tissues.}

 \end{figure} 

\section{Discussion}
\label{sec:discussion}

This paper studies high-dimensional linear regression in the presence of additional auxiliary samples, where the similarity of the target model and a given auxiliary model is characterized by  the sparsity of their contrast vector.  Transfer learning algorithms for estimation and prediction are developed. The results show that if the informative set is known, the proposed Oracle Trans-Lasso is minimax optimal over a range of parameter spaces and the accuracy for estimation and prediction can be improved. 
It has been considered in \citet{Bastani18} the setting of known informative set with one large-scale auxiliary study, which is a special case of our problem set-up. However, their upper bound analysis is not minimax rate optimal in the parameter space considered in this work.
Adaptation to the unknown informative set is also considered. It is shown that adaptation can be achieved by aggregating a collection of candidate estimators. Numerical experiments and real data applications support the theoretical findings.

Transfer learning for high-dimensional linear regression  is an important problem with a wide range of potential applications. However, statistical theory and methods have not been well developed in the literature. Using our similarity characterization of the auxiliary studies, it is also interesting to study statistical inference such as constructing confidence intervals and hypothesis testing for high-dimensional linear regression with auxiliary samples. 
In view of the results derived in this paper, one may expect weaker sample size conditions in the transfer learning setting than the conventional case. It is interesting to provide a precise characterization and develop a minimax optimal confidence interval in the transfer learning setting. On the other hand, different measurements of the similarity can be used when they are appropriate, which can lead to new methods and insights into the underlying structure of the transfer learning algorithm. 

\section*{Acknowledgements}
This research was supported by NIH grants GM129781 and GM123056 and NSF Grant DMS-1712735.

\bibliography{Aggregation.bib}{}

\begin{thebibliography}{}

\bibitem[\protect\citeauthoryear{Ahmed and Bajwa}{Ahmed and
  Bajwa}{2019}]{SURE2}
Ahmed, T. and W.~U. Bajwa (2019).
\newblock Exsis: Extended sure independence screening for ultrahigh-dimensional
  linear models.
\newblock {\em Signal Processing\/}~{\em 159}, 33--48.

\bibitem[\protect\citeauthoryear{Ando and Zhang}{Ando and Zhang}{2005}]{Ando05}
Ando, R.~K. and T.~Zhang (2005).
\newblock A framework for learning predictive structures from multiple tasks
  and unlabeled data.
\newblock {\em Journal of Machine Learning Research\/}~{\em 6}, 1817--1853.

\bibitem[\protect\citeauthoryear{Banerjee, Mukherjee, and Sun}{Banerjee
  et~al.}{2018}]{Ban19}
Banerjee, T., G.~Mukherjee, and W.~Sun (2018).
\newblock Adaptive sparse estimation with side information.
\newblock {\em arXiv:1811.11930\/}.

\bibitem[\protect\citeauthoryear{Bastani}{Bastani}{2018}]{Bastani18}
Bastani, H. (2018).
\newblock Predicting with proxies: Transfer learning in high dimension.

\bibitem[\protect\citeauthoryear{B{\"u}hlmann and van~de Geer}{B{\"u}hlmann and
  van~de Geer}{2015}]{Bu15}
B{\"u}hlmann, P. and S.~van~de Geer (2015).
\newblock High-dimensional inference in misspecified linear models.
\newblock {\em Electronic Journal of Statistics\/}~{\em 9\/}(1), 1449--1473.

\bibitem[\protect\citeauthoryear{Cai, Sun, and Wang}{Cai et~al.}{2019}]{Tony19}
Cai, T.~T., W.~Sun, and W.~Wang (2019).
\newblock {CARS: C}ovariate-assisted ranking and screening for large-scale
  two-sample inference.
\newblock {\em Journal of the Royal Statistical Society: Series B (Statistical
  Methodology)\/}~{\em 81\/}(2), 187--234.

\bibitem[\protect\citeauthoryear{Cai and Wei}{Cai and Wei}{2019}]{CW19}
Cai, T.~T. and H.~Wei (2019).
\newblock Transfer learning for nonparametric classification: Minimax rate and
  adaptive classifier.
\newblock {\em arXiv:1906.02903\/}.

\bibitem[\protect\citeauthoryear{Candes and Tao}{Candes and Tao}{2007}]{CT07}
Candes, E. and T.~Tao (2007).
\newblock The dantzig selector: Statistical estimation when p is much larger
  than n.
\newblock {\em The annals of Statistics\/}~{\em 35\/}(6), 2313--2351.

\bibitem[\protect\citeauthoryear{Dai, Han, Yang, and Zhang}{Dai
  et~al.}{2018}]{Dai18}
Dai, D., L.~Han, T.~Yang, and T.~Zhang (2018).
\newblock {Bayesian Model Averaging with Exponentiated Least Squares Loss}.
\newblock {\em IEEE Transactions on Information Theory\/}~{\em 64\/}(5),
  3331--3345.

\bibitem[\protect\citeauthoryear{Dai, Rigollet, and Zhang}{Dai
  et~al.}{2012}]{DRZ12}
Dai, D., P.~Rigollet, and T.~Zhang (2012).
\newblock Deviation optimal learning using greedy $ q $-aggregation.
\newblock {\em The Annals of Statistics\/}~{\em 40\/}(3), 1878--1905.

\bibitem[\protect\citeauthoryear{Daum{\'e}~III}{Daum{\'e}~III}{2007}]{Daume07}
Daum{\'e}~III, H. (2007).
\newblock Frustratingly easy domain adaptation.
\newblock In {\em Proceedings of the 45th Annual Meeting of the Association of
  Computational Linguistics}, pp.\  256--263.

\bibitem[\protect\citeauthoryear{Fan and Li}{Fan and Li}{2001}]{FL01}
Fan, J. and R.~Li (2001).
\newblock Variable selection via nonconcave penalized likelihood and its oracle
  properties.
\newblock {\em Journal of the American statistical Association\/}~{\em
  96\/}(456), 1348--1360.

\bibitem[\protect\citeauthoryear{Fan and Lv}{Fan and Lv}{2008}]{FL08}
Fan, J. and J.~Lv (2008).
\newblock Sure independence screening for ultrahigh dimensional feature space.
\newblock {\em Journal of the Royal Statistical Society: Series B (Statistical
  Methodology)\/}~{\em 70\/}(5), 849--911.

\bibitem[\protect\citeauthoryear{Hu, Li, Lu, et~al.}{Hu et~al.}{2019}]{Hu19}
Hu, Y., M.~Li, Q.~Lu, et~al. (2019).
\newblock A statistical framework for cross-tissue transcriptome-wide
  association analysis.
\newblock {\em Nature genetics\/}~{\em 51\/}(3), 568--576.

\bibitem[\protect\citeauthoryear{Johnson-L{\'e}ger, Aurrand-Lions,
  Beltraminelli, et~al.}{Johnson-L{\'e}ger et~al.}{2002}]{JAM2}
Johnson-L{\'e}ger, C.~A., M.~Aurrand-Lions, N.~Beltraminelli, et~al. (2002).
\newblock Junctional adhesion molecule-2 (jam-2) promotes lymphocyte
  transendothelial migration.
\newblock {\em Blood, The Journal of the American Society of Hematology\/}~{\em
  100\/}(7), 2479--2486.

\bibitem[\protect\citeauthoryear{Liu and Kozubowski}{Liu and
  Kozubowski}{2015}]{LK15}
Liu, Y. and T.~J. Kozubowski (2015).
\newblock A folded laplace distribution.
\newblock {\em Journal of Statistical Distributions and Applications\/}~{\em
  2\/}(1), 1--17.

\bibitem[\protect\citeauthoryear{Lounici, Pontil, and Tsybakov}{Lounici
  et~al.}{2009}]{LMT09}
Lounici, K., M.~Pontil, and A.~B. Tsybakov (2009).
\newblock Taking advantage of sparsity in multi-task learning.
\newblock {\em arXiv:0903.1468\/}.

\bibitem[\protect\citeauthoryear{Mao, Chen, and Wong}{Mao et~al.}{2019}]{Mao19}
Mao, X., S.~X. Chen, and R.~K. Wong (2019).
\newblock Matrix completion with covariate information.
\newblock {\em Journal of the American Statistical Association\/}~{\em
  114\/}(525), 198--210.

\bibitem[\protect\citeauthoryear{Mei, Fei, and Zhou}{Mei et~al.}{2011}]{Mei11}
Mei, S., W.~Fei, and S.~Zhou (2011).
\newblock Gene ontology based transfer learning for protein subcellular
  localization.
\newblock {\em BMC bioinformatics\/}~{\em 12\/}(1), 44.

\bibitem[\protect\citeauthoryear{Pan and Yang}{Pan and Yang}{2013}]{Pan13}
Pan, W. and Q.~Yang (2013).
\newblock Transfer learning in heterogeneous collaborative filtering domains.
\newblock {\em Artificial intelligence\/}~{\em 197}, 39--55.

\bibitem[\protect\citeauthoryear{Raskutti, Wainwright, and Yu}{Raskutti
  et~al.}{2011}]{Raskutti11}
Raskutti, G., M.~J. Wainwright, and B.~Yu (2011).
\newblock Minimax rates of estimation for high-dimensional linear regression
  over {$\ell_q $}-balls.
\newblock {\em IEEE transactions on information theory\/}~{\em 57\/}(10),
  6976--6994.

\bibitem[\protect\citeauthoryear{Rigollet and Tsybakov}{Rigollet and
  Tsybakov}{2011}]{RT11}
Rigollet, P. and A.~Tsybakov (2011).
\newblock Exponential screening and optimal rates of sparse estimation.
\newblock {\em The Annals of Statistics\/}~{\em 39\/}(2), 731--771.

\bibitem[\protect\citeauthoryear{Shin, Roth, Gao, et~al.}{Shin
  et~al.}{2016}]{Shin16}
Shin, H.-C., H.~R. Roth, M.~Gao, et~al. (2016).
\newblock Deep convolutional neural networks for computer-aided detection: Cnn
  architectures, dataset characteristics and transfer learning.
\newblock {\em IEEE transactions on medical imaging\/}~{\em 35\/}(5),
  1285--1298.

\bibitem[\protect\citeauthoryear{Sun and Zhang}{Sun and
  Zhang}{2012}]{scaled-lasso}
Sun, T. and C.-H. Zhang (2012).
\newblock Scaled sparse linear regression.
\newblock {\em Biometrika\/}~{\em 99\/}(4), 879--898.

\bibitem[\protect\citeauthoryear{Sun and Hu}{Sun and Hu}{2016}]{Sun16}
Sun, Y.~V. and Y.-J. Hu (2016).
\newblock Integrative analysis of multi-omics data for discovery and functional
  studies of complex human diseases.
\newblock In {\em Advances in genetics}, Volume~93, pp.\  147--190.

\bibitem[\protect\citeauthoryear{Tibshirani}{Tibshirani}{1996}]{Lasso}
Tibshirani, R. (1996).
\newblock Regression shrinkage and selection via the lasso.
\newblock {\em Journal of the Royal Statistical Society: Series B
  (Methodological)\/}~{\em 58\/}(1), 267--288.

\bibitem[\protect\citeauthoryear{Torrey and Shavlik}{Torrey and
  Shavlik}{2010}]{Torrey10}
Torrey, L. and J.~Shavlik (2010).
\newblock Transfer learning.
\newblock In {\em Handbook of research on machine learning applications and
  trends: algorithms, methods, and techniques}, pp.\  242--264. IGI Global.

\bibitem[\protect\citeauthoryear{Tsybakov}{Tsybakov}{2014}]{TS14}
Tsybakov, A.~B. (2014).
\newblock Aggregation and minimax optimality in high-dimensional estimation.
\newblock In {\em Proceedings of the International Congress of Mathematicians},
  Volume~3, pp.\  225--246.

\bibitem[\protect\citeauthoryear{Turki, Wei, and Wang}{Turki
  et~al.}{2017}]{Turki17}
Turki, T., Z.~Wei, and J.~T. Wang (2017).
\newblock Transfer learning approaches to improve drug sensitivity prediction
  in multiple myeloma patients.
\newblock {\em IEEE Access\/}~{\em 5}, 7381--7393.

\bibitem[\protect\citeauthoryear{Verzelen}{Verzelen}{2012}]{Ver12}
Verzelen, N. (2012).
\newblock {Minimax risks for sparse regressions: Ultra-high dimensional
  phenomenons}.
\newblock {\em Electronic Journal of Statistics\/}~{\em 6}, 38--90.

\bibitem[\protect\citeauthoryear{Wang, Shi, Wu, and Ma}{Wang
  et~al.}{2019}]{Wang19}
Wang, S., X.~Shi, M.~Wu, and S.~Ma (2019).
\newblock {Horizontal and vertical integrative analysis methods for mental
  disorders omics data}.
\newblock {\em Scientific Reports\/}, 1--12.

\bibitem[\protect\citeauthoryear{Weiss, Khoshgoftaar, and Wang}{Weiss
  et~al.}{2016}]{weiss2016survey}
Weiss, K., T.~M. Khoshgoftaar, and D.~Wang (2016).
\newblock A survey of transfer learning.
\newblock {\em Journal of Big Data\/}~{\em 3\/}(1), 9.

\bibitem[\protect\citeauthoryear{Xia, Cai, and Sun}{Xia
  et~al.}{2020}]{Xia2020GAP}
Xia, Y., T.~T. Cai, and W.~Sun (2020).
\newblock {GAP}: {A} general framework for information pooling in two-sample
  sparse inference.
\newblock {\em Journal of the American Statistical Association\/}~(to appear).

\bibitem[\protect\citeauthoryear{Zhang}{Zhang}{2010}]{Zhang10}
Zhang, C.-H. (2010).
\newblock Nearly unbiased variable selection under minimax concave penalty.
\newblock {\em The Annals of statistics\/}~{\em 38\/}(2), 894--942.

\bibitem[\protect\citeauthoryear{Zhou}{Zhou}{2009}]{Zhou09}
Zhou, S. (2009).
\newblock Restricted eigenvalue conditions on subgaussian random matrices.
\newblock {\em arXiv:0912.4045\/}.

\bibitem[\protect\citeauthoryear{Zou}{Zou}{2006}]{Zou06}
Zou, H. (2006).
\newblock The adaptive lasso and its oracle properties.
\newblock {\em Journal of the American statistical association\/}~{\em
  101\/}(476), 1418--1429.

\end{thebibliography}
\bibliographystyle{chicago}

\appendix

\section{Proofs in Section \ref{sec2}}
Let $S$ denote the support of $\beta$.
Let $\widehat{\Sig}^{\mA}=\sum_{k \in \mA\cup\{0\}}\alpha_k\widehat{\Sig}^{(k)}$ where $\widehat{\Sig}^{(k)}=(X^{(k)})^{\intercal}X^{(k)}/n_k$. 
As Theorem \ref{thm0-l1} is a special case of Theorem \ref{thm1-l1}, we only need to prove Theorem \ref{thm1-l1}.

Let $w^{\mA}=\beta+\delta^{\mA}$ for 
\[
   \delta^{\mA}=(\sum_{k \in \mA\cup\{0\}}\alpha_k \Sig^{(k)})^{-1}\sum_{k \in \mA\cup\{0\}}\alpha_k \Sig^{(k)}\delta^{(k)}.
\]
\begin{lemma}
\label{tlem1}
Under Condition \ref{cond1}, for some positive $r_1$ and $r_2$ such that $r_1(\log p/(n_{\mA}+n_0))^{1/4}=o(1)$ and $r_2(\log p/n_0)^{1/4}=o(1)$,  with probability at least $1-c_1\exp(-c_2n_0)$,
\[
  \min\left\{ \inf_{0\neq u\in \mathcal{B}_1(r_1)} \frac{u^{\intercal}\widehat{\Sig}^{\mA}u}{\|u\|_2^2},\inf_{0\neq u\in \mathcal{B}_1(r_2)} \frac{u^{\intercal}\widehat{\Sig}^{(0)}u}{\|u\|_2^2}\right\}\geq \phi_0
\]
  for some positive constant $\phi_0>0$.
\end{lemma}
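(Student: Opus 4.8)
The plan is to read Lemma~\ref{tlem1} as two lower restricted-eigenvalue bounds for Gaussian sample Gram matrices and to reduce both to a Gaussian-design deviation inequality. First I would observe that, since $\alpha_k=n_k/(n_{\mA}+n_0)$, the combined matrix simplifies into a pooled sample covariance, $\widehat{\Sig}^{\mA}=\frac{1}{N}\sum_{k\in\mA\cup\{0\}}\sum_{i=1}^{n_k}x_i^{(k)}(x_i^{(k)})^{\intercal}$ with $N=n_{\mA}+n_0$, and under Condition~\ref{cond1} all $N$ vectors $x_i^{(k)}$ are i.i.d.\ $\mathcal{N}(0,\Sig)$; likewise $\widehat{\Sig}^{(0)}$ is the sample covariance of $n_0$ such vectors. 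Because the target ratio $u^{\intercal}\widehat{\Sig}u/\|u\|_2^2$ is scale-invariant, I would normalize to $\|u\|_2=1$, so that the constraint $u\in\mathcal{B}_1(r)$ becomes simply $\|u\|_1\leq r$.

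For a single pooled design $X$ with $n$ i.i.d.\ $\mathcal{N}(0,\Sig)$ rows, the Gaussian-design restricted-eigenvalue bound \citep{Raskutti11} gives, with probability at least $1-c_1\exp(-c_2 n)$, that $\|Xu\|_2/\sqrt{n}\geq \tfrac14\|\Sig^{1/2}u\|_2-c_3\sqrt{\log p/n}\,\|u\|_1$ uniformly in $u$. On the set $\{\|u\|_2=1,\ \|u\|_1\leq r\}$ I would lower-bound $\|\Sig^{1/2}u\|_2\geq \Lambda_{\min}^{1/2}(\Sig)$, which is bounded below by Condition~\ref{cond1}, and upper-bound the penalty by $c_3\sqrt{\log p/n}\,r$. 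The radius hypothesis $r(\log p/n)^{1/4}=o(1)$ forces $\sqrt{\log p/n}\,r=o((\log p/n)^{1/4})=o(1)$, so for large $n$ the right-hand side is at least $\tfrac18\Lambda_{\min}^{1/2}(\Sig)$; squaring yields $u^{\intercal}\widehat{\Sig}u\geq \tfrac{1}{64}\Lambda_{\min}(\Sig)$.

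Applying this with $n=N$ to $\widehat{\Sig}^{\mA}$ (radius $r_1$) and with $n=n_0$ to $\widehat{\Sig}^{(0)}$ (radius $r_2$), and taking $\phi_0=\Lambda_{\min}(\Sig)/64$, a union bound gives the claim; because $N\geq n_0$, the two failure probabilities combine into $c_1\exp(-c_2 n_0)$, matching the stated bound. An elementary alternative that matches the stated radius condition verbatim is to write $u^{\intercal}\widehat{\Sig}u/\|u\|_2^2\geq \Lambda_{\min}(\Sig)-\|\widehat{\Sig}-\Sig\|_{\infty}\,\|u\|_1^2/\|u\|_2^2$ and use the entrywise concentration $\|\widehat{\Sig}-\Sig\|_{\infty}\lesssim\sqrt{\log p/n}$ together with $\|u\|_1^2/\|u\|_2^2\leq r^2$; here the hypothesis enters exactly as $r^2\sqrt{\log p/n}=o(1)$, though this route delivers only a polynomial-in-$p$ failure probability and would need to be upgraded via Gaussian concentration to reach $\exp(-c_2 n_0)$.

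I expect the main obstacle to be the uniform control underlying the deviation inequality itself: the infimum runs over the non-convex set $\{\|u\|_2=1\}\cap\mathcal{B}_1(r)$, and producing a lower bound on $\|Xu\|_2/\sqrt{n}$ uniform over this set — with a subtracted term linear in $\|u\|_1$ and exponentially small failure probability — is precisely the content of the Gaussian comparison arguments behind the cited bound. The remaining work is bookkeeping: confirming that the pooled design genuinely reduces to $N$ i.i.d.\ Gaussian rows so that one application suffices per matrix, and verifying that the radius hypothesis annihilates the penalty term with room to spare.
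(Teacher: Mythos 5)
Your proposal is correct and follows essentially the same route as the paper, whose entire proof is the single line ``The proof follows from Lemma 1 in \citet{Raskutti11}'' --- precisely the Gaussian-design deviation inequality $\|Xu\|_2/\sqrt{n}\geq \tfrac14\|\Sig^{1/2}u\|_2-c\sqrt{\log p/n}\,\|u\|_1$ that you invoke, applied once to the pooled design with $N=n_{\mA}+n_0$ rows and once to the primary design, with the radius hypothesis absorbing the penalty term and a union bound (using $N\geq n_0$) giving the stated probability. The only caveat is that your normalization step implicitly reads $\mathcal{B}_1(r)$ as the cone $\{u:\|u\|_1\leq r\|u\|_2\}$ (otherwise the scale-invariant infimum over a punctured $\ell_1$-ball degenerates to $\Lambda_{\min}(\widehat{\Sig})=0$ when $p>n$), but this is exactly the reading under which the paper itself applies the lemma, so your argument matches the intended statement.
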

\begin{proof}[Proof of Lemma \ref{tlem1}]
The proof follows from Lemma 1 in \citet{Raskutti11}.
\end{proof}

\begin{lemma}
\label{tlem2}
Under the conditions of Theorem \ref{thm1-l1}, we have for $\hat{u}^{\mA}=\hat{w}^{\mA}-w^{\mA}$,
\begin{align*}
 ( \hat{u}^{\mA})^{\intercal}\widehat{\Sig}^{\mA}\hat{u}^{\mA}\vee \|\hat{u}^{\mA}\|_2^2=O_P(s\lam_w^2+\lam_wC_{\Sig}h)~\text{and}~~\|\hat{u}^{\mA}\|_1=O_P(s\lam_w+ C_{\Sig}h).
\end{align*}
\end{lemma}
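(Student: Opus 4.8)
The plan is to read \eqref{eq-hbeta-init} as an ordinary Lasso fit on the pooled sample $\{X^{(k)},y^{(k)}\}_{k\in\mA\cup\{0\}}$ whose population target $w^{\mA}=\beta+\delta^{\mA}$ is only \emph{approximately} sparse: it is the sum of the $s$-sparse vector $\beta$ and a contrast $\delta^{\mA}$ that is small in $\ell_1$. I would therefore run the basic-inequality argument together with the approximate-sparsity refinement of \citet{Raskutti11}. Before that, I would record the bound $\|\delta^{\mA}\|_1\lesssim C_{\Sig}h$. Writing $\Sig^{\mA}=\sum_{k\in\mA\cup\{0\}}\alpha_k\Sig^{(k)}$ and $D_k=\Sig^{(k)}-\Sig^{(0)}$, the defining moment condition gives $\delta^{\mA}=-(\Sig^{\mA})^{-1}\sum_{k\in\mA}\alpha_k\Sig^{(k)}\delta^{(k)}$. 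Substituting $\Sig^{(k)}=\Sig^{(0)}+D_k$ and using $(\Sig^{\mA})^{-1}\Sig^{(0)}=I-(\Sig^{\mA})^{-1}\sum_{l\in\mA}\alpha_l D_l$ isolates the leading term $\sum_{k\in\mA}\alpha_k\delta^{(k)}$, whose $\ell_1$ norm is at most $h$ since $\sum_k\alpha_k\le 1$ and $\|\delta^{(k)}\|_1\le h$, plus two corrections built from $(\Sig^{\mA})^{-1}D_l$ contracted against the $\delta^{(k)}$. The key algebraic observation is that the induced $\ell_1\!\to\!\ell_1$ operator norm of $(\Sig^{\mA})^{-1}D_k$ equals its maximal absolute column sum, which by transposition equals $\max_j\|e_j^{\intercal}D_k(\Sig^{\mA})^{-1}\|_1\le C_{\Sig}-1$. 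Each correction is then $\lesssim(C_{\Sig}-1)h$, giving $\|\delta^{\mA}\|_1\le(2C_{\Sig}-1)h$.

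Next I would control the gradient $g=\nabla L(w^{\mA})=-\tfrac{1}{n_{\mA}+n_0}\sum_{k\in\mA\cup\{0\}}\sum_i x_i^{(k)}(y_i^{(k)}-(x_i^{(k)})^{\intercal}w^{\mA})$ of the pooled least-squares loss $L$, with the aim of showing $\|g\|_\infty\le\lam_w/2$ with high probability. The crux is that although for a fixed $k$ the summand $x_{ij}^{(k)}(y_i^{(k)}-(x_i^{(k)})^{\intercal}w^{\mA})$ has nonzero mean $e_j^{\intercal}\Sig^{(k)}(w^{(k)}-w^{\mA})$, these biases cancel after the $\alpha_k$-weighted pooling, precisely because $\sum_k\alpha_k\Sig^{(k)}(w^{(k)}-w^{\mA})=0$ by the definition of $w^{\mA}$. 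Each coordinate of $g$ is then a centered sum of products of a Gaussian design coordinate and a sub-Gaussian residual, hence sub-exponential, so a Bernstein bound with a union over the $p$ coordinates yields $\|g\|_\infty\lesssim\sqrt{V\log p/(n_{\mA}+n_0)}$, where $V$ is the residual scale. Bounding $\E[(y_i^{(k)}-(x_i^{(k)})^{\intercal}w^{\mA})^2]$ by a constant multiple of $\max_k\E[(y_i^{(k)})^2]$, using Condition \ref{cond2} and that $w^{\mA}$ is the population projection, identifies $V$ with $\max_k\E[(y_i^{(k)})^2]$ and matches the prescribed $\lam_w$. I expect this pooled, heterogeneous score bound to be the main obstacle: the non-sparsity of $w^{\mA}$ forbids the usual split into a Gram-deviation term plus a clean noise term, so one must exploit the exact cancellation of the per-study biases rather than handle each term separately.

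With $\|g\|_\infty\le\lam_w/2$ in hand, optimality of $\hat w^{\mA}$ gives the basic inequality $\tfrac12(\hat u^{\mA})^{\intercal}\widehat\Sig^{\mA}\hat u^{\mA}\le\tfrac{\lam_w}{2}\|\hat u^{\mA}\|_1+\lam_w(\|w^{\mA}\|_1-\|\hat w^{\mA}\|_1)$ for $\hat u^{\mA}=\hat w^{\mA}-w^{\mA}$. Splitting coordinates into $S=\supp(\beta)$ and $S^c$ and using $\|w^{\mA}_{S^c}\|_1=\|\delta^{\mA}_{S^c}\|_1\le C_{\Sig}h$ yields the approximate cone bound $\|\hat u^{\mA}_{S^c}\|_1\le 3\|\hat u^{\mA}_S\|_1+4C_{\Sig}h$, so that $\|\hat u^{\mA}\|_1^2\lesssim s\|\hat u^{\mA}\|_2^2+C_{\Sig}^2h^2$. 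I would then invoke the restricted eigenvalue of Lemma \ref{tlem1} in the form $(\hat u^{\mA})^{\intercal}\widehat\Sig^{\mA}\hat u^{\mA}\ge\phi_0\|\hat u^{\mA}\|_2^2-c(\log p/(n_{\mA}+n_0))\|\hat u^{\mA}\|_1^2$; the hypotheses $s\log p/(n_{\mA}+n_0)=o(1)$ and $C_{\Sig}h(\log p/n_0)^{1/2}=o(1)$ guarantee that both the penalty and the $C_{\Sig}^2h^2$ term are absorbed, leaving $\phi_0'\|\hat u^{\mA}\|_2^2\lesssim\lam_w\sqrt{s}\,\|\hat u^{\mA}\|_2+\lam_w C_{\Sig}h$. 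Solving this quadratic inequality gives $\|\hat u^{\mA}\|_2^2\lesssim s\lam_w^2+\lam_w C_{\Sig}h$ and the same bound for $(\hat u^{\mA})^{\intercal}\widehat\Sig^{\mA}\hat u^{\mA}$; substituting $\|\hat u^{\mA}\|_2$ back into the cone bound and applying the arithmetic–geometric mean inequality produces $\|\hat u^{\mA}\|_1\lesssim s\lam_w+C_{\Sig}h$, which are exactly the two asserted rates.
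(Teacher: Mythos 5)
Your proposal is correct and reaches the paper's bounds by essentially the same strategy: treat $\hat w^{\mA}$ as a Lasso whose population target $w^{\mA}=\beta+\delta^{\mA}$ decomposes into an $s$-sparse part plus a part with $\ell_1$-norm $O(C_{\Sig}h)$, control the score at $w^{\mA}$ by exploiting the exact cancellation $\sum_k\alpha_k\Sig^{(k)}(w^{(k)}-w^{\mA})=0$ together with sub-exponential concentration (the paper packages this as centered Gram deviations $\sum_k\alpha_k(\widehat\Sig^{(k)}-\Sig^{(k)})\delta^{(k)}$ and $(\widehat\Sig^{\mA}-\Sig^{\mA})w^{\mA}$ plus a pure noise term, which is the same cancellation you describe), and then run the basic inequality with a restricted eigenvalue bound. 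Two execution differences are worth noting. First, the paper's endgame is a two-case split according to whether $\lam_w\|\hat u^{\mA}_S\|_1$ dominates $\lam_w\|w^{\mA}_{S^c}\|_1$, applying a cone-restricted RE in one case and an $\ell_1$-ball-restricted RE (the literal content of Lemma \ref{tlem1}) in the other; you instead use the subtractive form $u^{\intercal}\widehat\Sig^{\mA}u\ge\phi_0\|u\|_2^2-c(\log p/n)\|u\|_1^2$ and solve a quadratic in $\|\hat u^{\mA}\|_2$. That form is not what Lemma \ref{tlem1} states, but it is exactly what the cited Raskutti--Wainwright--Yu result provides, and your hypotheses ($s\log p/(n_{\mA}+n_0)=o(1)$ and $C_{\Sig}h(\log p/n_0)^{1/2}=o(1)$) do suffice to absorb the correction terms, so the step is sound; the two presentations are interchangeable. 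Second, you supply the explicit algebra showing $\|\delta^{\mA}\|_1\le(2C_{\Sig}-1)h$ via the identity $(\Sig^{\mA})^{-1}\Sig^{(0)}=I-(\Sig^{\mA})^{-1}\sum_l\alpha_lD_l$ and the column-sum/row-sum duality for the symmetric matrices involved; the paper uses the bound $\|\delta^{\mA}_{S^c}\|_1\le C_{\Sig}h$ without derivation, so your computation (which agrees up to an absolute constant) is a useful addition rather than a deviation. No gaps.
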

\begin{proof}[Proof of Lemma \ref{tlem2}]
In the event that
\begin{align*}
   E_2&=\left\{\frac{1}{n_{\mA}+n_0}\|\sum_{k \in \mA\cup\{0\}}(X^{(k)})^{\intercal}(y^{(k)}-X^{(k)}w^{\mA})\|_{\infty}\leq \frac{\lam_w}{2}, \right.\\
   &\quad\quad \left.\inf_{0\neq 6\|u_S\|_1\geq \|u_{S^c}\|_1}\frac{u^{\intercal}\widehat{\Sig}^{\mA}u}{\|u_S\|_2^2}\geq \phi_0,\inf_{u\in\mathcal{B}_1(C_{\Sig}h)} \frac{u^{\intercal}\widehat{\Sig}^{\mA}u}{\|u\|_2^2}\geq \phi_0\right\},
\end{align*}
we have for $\hat{u}^{\mA}=\hat{w}^{\mA}-w^{\mA}$,
\begin{align*}
\frac{1}{2}(\hat{u}^{\mA})^{\intercal}\widehat{\Sig}^{\mA}\hat{u}^{\mA}&\leq \lam_w\|w^{\mA}\|_1-\lam_w\|\hat{w}^{\mA}\|_1\\
&+|(\hat{u}^{\mA})^{\intercal}\sum_{k \in \mA}(X^{(k)})^{\intercal}(y^{(k)}-X^{(k)}\hat{\delta}^{(k)}-X^{(k)}w^{\mA})|\\
&\leq  \lam_w\|w^{\mA}\|_1-\lam_w\|\hat{w}^{\mA}\|_1+\frac{\lam_w}{2}\|\hat{u}^{\mA}\|_1,
\end{align*}
where the last step is due to event $E_2$.
As a result,
\begin{align*}
\frac{1}{2}(\hat{u}^{\mA})^{\intercal}\widehat{\Sig}^{\mA}\hat{u}^{\mA}&\leq \frac{3}{2}\lam_w\|\hat{u}^{\mA}_S\|_1+\lam_w\|w^{\mA}_{S^c}\|_1-\lam_w\|\hat{w}^{\mA}_{S^c}\|_1+\frac{\lam_w}{2}\|\hat{u}^{\mA}_{S^c}\|_1.
\end{align*}

Using the fact that
\[
  \|\hat{w}^{\mA}_{S^c}\|_1\geq \|\hat{u}^{\mA}_{S^c}\|_1-\|w^{\mA}_{S^c}\|_1,
\]
we arrive at
\[
  \frac{1}{2}(\hat{u}^{\mA})^{\intercal}\widehat{\Sig}^{\mA}\hat{u}^{\mA}\leq \frac{3}{2}\lam_w\|\hat{u}^{\mA}_S\|_1+2\lam_w\|w^{\mA}_{S^c}\|_1-\frac{\lam_w}{2}\|\hat{u}^{\mA}_{S^c}\|_1.
\]
(i) If $ \frac{3}{2}\lam_w\|\hat{u}^{\mA}_S\|_1\geq 2\lam_w\|w^{\mA}_{S^c}\|_1$,
\[
  \frac{1}{2}(\hat{u}^{\mA})^{\intercal}\widehat{\Sig}^{\mA}\hat{u}^{\mA}\leq 3\lam_w\|\hat{u}^{\mA}_S\|_1-\frac{\lam_w}{2}\|\hat{u}^{\mA}_{S^c}\|_1.
\]
Under the restricted eigenvalue condition in $E_2$, for some sufficiently large constants $C_1$ and $C_2$.
\begin{align}
\label{eq6-pf0}
   \|\hat{u}^{\mA}\|_2^2\leq C_1\frac{s\lam_w^2}{\phi_0}~\text{and}~~\|\hat{u}^{\mA}\|_1\leq C_2s\lam_w.
\end{align}
(ii) If $ \frac{3}{2}\lam_w\|\hat{u}^{\mA}_S\|_1\leq 2\lam_w\|w^{\mA}_{S^c}\|_1$,
\[
  \frac{1}{2}(\hat{u}^{\mA})^{\intercal}\widehat{\Sig}^{\mA}\hat{u}^{\mA}\leq 2\lam_w\|w^{\mA}_{S^c}\|_1-\frac{\lam_w}{2}\|\hat{u}^{\mA}_{S^c}\|_1.
\]
Therefore,
\[
    \|\hat{u}^{\mA}\|_1\leq 4\|w^{\mA}_{S^c}\|_1+\frac{4}{3}\|w^{\mA}_{S^c}\|_1\leq \frac{16}{3}\|\delta^{\mA}_{S^c}\|_1\leq \frac{16}{3}C_{\Sig}h.
\]
Under the restricted eigenvalue condition in $E_2$, for some sufficiently large constants $C_1$ and $C_2$, together with (\ref{eq6-pf0}),
\begin{align}
\label{eq6-pf1}
   \|\hat{u}^{\mA}\|_2^2\leq \frac{C_1}{\phi_0}(s\lam_w^2+\lam_wC_{\Sig}h)~\text{and}~~\|\hat{u}^{\mA}\|_1\leq C_2(s\lam_w+ C_{\Sig}h).
\end{align}

It is left to verify that $\P(E_2)\rightarrow 1$.  
Notice that
\begin{align*}
&\frac{1}{n_{\mA}+n_0}\|\sum_{k \in \mA\cup\{0\}}(X^{(k)})^{\intercal}(y^{(k)}-X^{(k)}w^{\mA})\|_{\infty}\\
&=\|\frac{1}{n_{\mA}+n_0}\sum_{k \in \mA\cup\{0\}}(X^{(k)})^{\intercal}\eps^{(k)}+\sum_{k\in \mA\cup\{0\}}\alpha_k\widehat{\Sig}^{(k)}w^{(k)}-\widehat{\Sig}^{\mA_0}w^{\mA})\|_{\infty}\\
&\leq \|\frac{1}{n_{\mA}+n_0}\sum_{k \in \mA\cup\{0\}}(X^{(k)})^{\intercal}\eps^{(k)}+\sum_{k\in \mA\cup\{0\}}\alpha_k(\widehat{\Sig}^{(k)}-\Sig^{(k)})\delta^{(k)}\|_{\infty} +\|(\widehat{\Sig}^{\mA}-\Sig^{\mA})w^{\mA}\|_{\infty}.
\end{align*}
By Condition \ref{cond2}, we have
\[
  \P\left(\frac{1}{n_{\mA}+n_0}\|\sum_{k \in \mA\cup\{0\}}(X^{(k)})^{\intercal}\eps^{(k)}\|_{\infty}>c_1\max_{k\in \mA\cup \{0\}}\sig_k\sqrt{\log p/(n_{\mA}+n_0)}\right)\leq 2/p.
\]
Since $x_{i,j}^{(k)}(x_{i}^{(k)})^{\intercal}\delta^{(k)}$ is sub-exponential, we have 
\begin{align*}
&\P\left( \|\sum_{k\in \mA}\alpha_k(\widehat{\Sig}^{(k)}-\Sig^{(k)})\delta^{(k)}\|_{\infty}\geq  t\right)\\
&\leq 2p\max_{j\leq p,k\in\mA}\exp\left\{-c_1\min\left(\frac{(n_{\mA}+n_0)t^2}{4\|(\Sig^{(k)})^{1/2}\delta^{(k)}\|_2^2},\frac{(n_{\mA}+n_0)t}{2\|(\Sig^{(k)})^{1/2}\delta^{(k)}\|_2}\right)\right\}
\end{align*}
for some constant $c_1>0$. For $t \geq C\max_{j\leq p,k\in\mA}\|(\Sig^{(k)})^{1/2}\delta^{(k)}\|_2\sqrt{\log p/(n_{\mA}+n_0)}$ with a large enough constant $C$ and $\log p=o(n_{\mA}+n_0)$, 
\[
   \P\left( \|\sum_{k\in \mA\cup\{0\}}\alpha_k(\widehat{\Sig}^{(k)}-\Sig^{(k)})\delta^{(k)}\|_{\infty}\geq  t\right)\leq 2/p.
\]
We can similarly show that
\[
  \P\left(\|(\widehat{\Sig}^{\mA}-\Sig^{\mA})w^{\mA}\|_{\infty}\geq c\max_{k\in \mA\cup\{0\}}\sqrt{(w^{(k)})^{\intercal}\Sig^{(k)}w^{(k)}\log p/(n_{\mA}+n_0)}\right)\leq 2/p
\]
for a large enough constant $c$. Since $\max_{k\in \mA\cup\{0\}}(w^{(k)})^{\intercal}\Sig^{(k)}w^{(k)}\leq \E[(y^{(k)}_i)^2]$ and 
\[ \max_{k\in \mA\cup\{0\}}(\delta^{(k)})^{\intercal}\Sig^{(k)}\delta^{(k)}\leq 2\E[(y^{(k)}_i)^2]+2\E[(y^{(0)}_i)^2],
\] it suffices to take $\lam_w\geq c\max_{k\in \mA\cup\{0\}}\sqrt{\E[(y^{(k)}_i)^2]\log p/(n_{\mA}+n_0)}$
for a sufficiently large constant $c$. 

The last two statements hold by Lemma \ref{tlem1} and \citet{Zhou09} under the sample size condition of Lemma \ref{thm1-l1}.
\end{proof}

\begin{proof}[Proof of Theorem \ref{thm0-l1} and Theorem \ref{thm1-l1}]
For some large enough constant $c$, the exists constant $\phi_0\geq 0$ such that
\begin{align*}
E_2'=&\left\{\frac{1}{n_0}\|(X^{(0)})^{\intercal}\eps^{(0)}\|_{\infty}\leq \frac{\lam_{\delta}}{2}, ~\inf_{0\neq u\in\mathcal{B}_1(6C_{\Sig}h)}\frac{u^{\intercal}\widehat{\Sig}^{(0)}u}{\|u\|_2^2}\geq \phi_0,\right.\\
&\quad\quad\left.\inf_{0\neq u\in\mathcal{B}\left(\frac{c}{\lam_{\delta}}(\frac{s\log p}{n_{\mA}+n_0}+C_{\Sig}h\sqrt{\frac{\log p}{n_{\mA}+n_0}})\right)}\frac{u^{\intercal}\widehat{\Sig}^{(0)}u}{\|u\|_2^2}\geq \phi_0\right\}\cap E_2.
\end{align*}

The following oracle inequality holds for $\hat{v}^{\mA}=\hat{\delta}^{\mA}-\delta^{\mA}$.
\begin{align*}
\frac{1}{2n_0}\|X^{(0)}\hat{v}^{\mA}\|_2^2&\leq \lam_{\delta}\|\delta^{\mA}\|_1-\lam_{\delta}\|\hat{\delta}^{\mA}\|_1 +\frac{1}{n_0}|\langle X^{(0)}\hat{v}^{\mA},\eps^{(0)}-X^{(0)}\hat{u}^{\mA}\rangle|.
\end{align*}
For the last term, it holds that , in $E_2'$,
\begin{align*}
\frac{1}{n_0}|\langle X^{(0)}\hat{v}^{\mA},\eps^{(0)}-X^{(0)}\hat{u}^{\mA}\rangle|&\leq
\frac{\lam_{\delta}}{2}\|\hat{v}^{\mA}\|_1+\frac{1}{n_0}\|X^{(0)}\hat{u}^{\mA}\|_2^2+\frac{1}{4n_0}\|X^{(0)}\hat{v}^{\mA}\|_2^2,
\end{align*}
where we use the fact that $|ab|\leq \frac{ca^2}{2}+\frac{b^2}{2c}$ for every $c>0$.
We arrive the following oracle inequality:
\begin{align*}
\frac{1}{4n_0}\|X^{(0)}\hat{v}^{\mA}\|_2^2&\leq  \frac{3}{2}\lam_{\delta}\|\delta^{\mA}\|_1-\frac{1}{2}\lam_{\delta}\|\hat{v}^{\mA}\|_1+\frac{1}{n_0}\|X^{(0)}\hat{u}^{\mA}\|_2^2.
\end{align*}

(i) If  $\frac{3}{2}\lam_{\delta}\|\delta^{\mA}\|_1\geq \frac{1}{n_0}\|X^{(0)}\hat{u}^{\mA}\|_2^2$,
then $\|\hat{v}^{\mA}\|_1\leq 6\|\delta^{\mA}\|_1\leq 6C_{\Sig}h$ and
\begin{align*}
\frac{1}{4n_0}\|X^{(0)}\hat{v}^{\mA}\|_2^2&\leq 3\lam_{\delta}\|\delta^{\mA}\|_1.
\end{align*}
Under the restricted eigenvalue condition in event $E_2'$, standard arguments lead to 
\begin{align*}
 &  \frac{1}{n_0}\|X^{(0)}\hat{v}^{\mA}\|_2^2\leq 12\lam_{\delta}C_{\Sig}h~~\text{and}~\|\hat{v}^{\mA}\|_2^2\leq \frac{c_1C_{\Sig}h\lam_{\delta}}{\phi_0}
\end{align*}
for some constant $c_1>0$.
 
 (ii) If  $\frac{3}{2}\lam_{\delta}\|\delta^{\mA}\|_1\leq \frac{1}{n_0}\|X^{(0)}\hat{u}^{\mA}\|_2^2$, then
\[
\lam_{\delta}\|\hat{v}^{\mA}\|_1\leq \frac{4}{n_0}\|X^{(0)}\hat{u}^{\mA}\|_2^2~\text{and}~ \frac{1}{n_0}\|X^{(0)}\hat{v}^{\mA}\|_2^2\leq  \frac{8}{n_0}\|X^{(0)}\hat{u}^{\mA}\|_2^2.
\]
Using Lemma \ref{tlem2}, we know that in $E_2'$,
\[
   \| \hat{v}^{\mA}\|_1\leq \frac{c}{\lam_{\delta}}(\frac{s\log p}{n_{\mA}+n_0}+C_{\Sig}h\sqrt{\frac{\log p}{n_{\mA}+n_0}}).
\]
Using the restricted eigenvalue conditions in $E_2'$, we arrive at desired results.

Next, we show that $\P(E_2')\rightarrow 1$. Under the sample size condition in Theorem \ref{thm1-l1} for $\lam_{\delta}=c_1\sqrt{\log p/n_0}$,
\[
 C_{\Sig}h(\frac{\log p}{n_0})^{1/4}=o(1)~\text{and}~\frac{c}{\lam_{\delta}}(\frac{s\log p}{n_{\mA}+n_0}+C_{\Sig}h\sqrt{\frac{\log p}{n_{\mA}+n_0}})(\frac{\log p}{n_0})^{1/4}=o(1).
\]
By Lemma \ref{tlem1}, the restricted eigenvalue conditions  in $E_2'$ hold with high probability.
Together with the sub-Gaussian property of $\eps^{(k)}$, we can find
\[
  \P\left(\frac{1}{n_0}\|(X^{(0)})^{\intercal}\eps^{(0)}\|_{\infty}\leq \sqrt{2\log p/n_0}\right)\geq 1-2/p.
\]

\end{proof}

\subsection{Minimax lower bounds for estimation and prediction}

\begin{proof}[Proof of Theorem \ref{thm-mini2}]

When $\frac{s\log p}{n_{\mA_0}+n_0}\geq Ch\frac{\log p}{n_0}$, we consider a special case where $h=0$. That is $y^{(k)}_i=(x^{(k)}_i)^{\intercal}\beta+\eps^{(k)}_i$ for all $k \in \mA_0\cup\{0\}$. We also consider the case where $\sig^2_k=\sig^2_0$ for $k \in \mA_0\cup\{0\}$. In this case, \citet{Raskutti11} consider a class of $\beta$ such that $\|\beta\|_0\leq s$ and $|\beta_j|\in\{0,C\sqrt{\log p/(n_{\mA_0}+n_0)}\}$ and show that
\begin{align*}
\P\left(\min_{\hat{\beta}}\max_{\Theta_q(s,0)} \|\hat{\beta}-\beta\|_2^2\geq c\frac{s\log p}{n_{\mA_0}+n_0}\right)\geq \frac{1}{2}.
\end{align*}

When $\frac{s\log p}{n_{\mA_0}+n_0}\leq Ch\frac{\log p}{n_0}$, we consider $\beta$ in $\mathcal{B}_0(h)$ and $w^{(k)}=0$ for all $1\leq k\leq K$. That is, all the auxiliary samples contain no information about $\beta$. Above results can be applied again. 
\end{proof}
\begin{proof}[Proof of Theorem \ref{thm2-low} and Theorem \ref{thm-mini3}]
Consider a fixed $q\in (0,1]$.
First consider $h=0$ and the lower bound $s\log p/(n_{\mA_q}+n_0)$ follows from the first part of the proof of Theorem \ref{thm-mini2}.

Next, we consider $\beta$ in $\mathcal{B}_0(s)\cap \mathcal{B}_q(h)$ and $w^{(k)}=0$ for all $1\leq k\leq K$. That is, all the auxiliary samples contain no information about $\beta$. The proof for this case follows from the construction in Theorem 5.1 of  \citet{RT11}. We layout an outline of the proof as follows.
Let $\bar{m}$ be the integer part of $h^q(\log p/n_0)^{-q/2}$. 

(i)
If $\bar{m}\geq 1$  and $\frac{s\log p}{n_0}\leq h^q\left(\frac{\log p}{n_0}\right)^{1-q/2}$, then $h\geq (\log p/n_0)^{1/2}$ and $h^q\geq s(\log p/n_0)^{q/2}$. Consider the class of $\beta$ such that $\|\beta\|_0\leq s$ and $|\beta_j|\in\{0,C_2\sqrt{\log p/n_0}\}$ for some $s>0$. Notice that $\|\beta\|_q\leq C_2s^{1/q}(\log p/n_0)^{1/2}\leq C_2h$ for such $\beta$. We can again apply the proof in \citet{Raskutti11} to show that
\begin{align*}
\P\left(\min_{\hat{\beta}}\max_{\beta\in  \mathcal{B}_0(s)\cap \mathcal{B}_q(C_2s^{1/q}(\log p/n_0)) } \|\hat{\beta}-\beta\|_2^2\geq C_3\frac{s\log p}{n_0}\right)\geq \frac{1}{2}.
\end{align*}

(ii) If $\bar{m}\geq 1$ and $\frac{s\log p}{n_0}> h^q\left(\frac{\log p}{n_0}\right)^{1-q/2}$, i.e. $s>\bar{m}$ and $h\geq (\log p/n_0)^{1/2}$,
then consider the class of $\beta$ such that $\|\beta\|_0\leq \bar{m}/2$ and $|\beta_j|\in\{0,C_2\sqrt{\log p/n_0}\}$. Notice that $\|\beta\|_q\leq C_2\bar{m}^{1/q}(\log p/n_0)^{1/2}) \leq C_2h$. Above prove can be again used to show that 
\begin{align*}
\P\left(\min_{\hat{\beta}}\max_{\beta\in  \mathcal{B}_0(\bar{m}/2)\cap \mathcal{B}_q(C_2\bar{m}^{1/q}(\log p/n_0)^{1/2}) } \|\hat{\beta}-\beta\|_2^2\geq C_3\frac{\bar{m}\log p}{n_0}\right)\geq \frac{1}{2},
\end{align*}
where $\bar{m}\log p/n_0 \geq h^q(\log p/n_0)^{1-q/2}/2$.

(iii) If $\bar{m}=0$ and $s\geq 1$, then $h<(\log p/n_0)^{1/2}$ and we consider the class of $\beta$ such that $\|\beta\|_0=1$ and $|\beta_j|\in\{0,h\}$. A similar proof will lead to
\begin{align*}
\P\left(\min_{\hat{\beta}}\max_{\beta\in  \mathcal{B}_0(1)\cap \mathcal{B}_q(h) } \|\hat{\beta}-\beta\|_2^2\geq h^{2}\right)\geq \frac{1}{2}.
\end{align*}
\end{proof}

\section{Proofs of theorems and lemmas in Section \ref{sec3}}
\label{ap-sec3}
\subsection{Proof of Lemma \ref{thm-ag1} and Remark \ref{re1}}
\label{ap-sec3-1}
\begin{proof}[Proof of Lemma \ref{thm-ag1}]
Let $\widehat{B}\in\R^{p\times (L+1)}$ denotes a dictionary such that  $\widehat{B}_{.,l+1}=\hat{\beta}(\widehat{G}_l)$.
The prediction error follows from The proof of Corollary 3.1 in \citet{DRZ12}. We derive the estimation error bound as follows. Let 
\[
    l^*=\argmin_{0\leq l\leq L} \frac{1}{|\mathcal{I}^c|}\|X^{(0)}_{\mathcal{I}^c}(\hat{\beta}(\widehat{G}_l)-\beta)\|_2^2
    \]
    and $\theta^*=e_{l^*}\in\R^{L+1}$.
Using the prediction error bound, we have
\begin{align}
   \frac{1}{|\mathcal{I}^c|}\|X^{(0)}_{\mathcal{I}^c}\widehat{B}(\hat{\theta}-\theta^*)\|_2^2&\leq   \frac{2}{|\mathcal{I}^c|}\|X^{(0)}_{\mathcal{I}^c}(\widehat{B}\hat{\theta}-\beta)\|_2^2 + \frac{2}{|\mathcal{I}^c|}\|X^{(0)}_{\mathcal{I}^c}(\widehat{B}\theta^*-\beta)\|_2^2 \nonumber\\
   &=O_P( \frac{1}{|\mathcal{I}^c|}\|X^{(0)}_{\mathcal{I}^c}(\widehat{B}\theta^*-\beta)\|_2^2+\frac{\log L}{n_0}).\label{eq0-pf-lem1}
\end{align}
We first bound $\|\widehat{B}(\hat{\theta}-\theta^*)\|_2^2$. Let $\widehat{\Sig}^{(0,c)}=\frac{1}{|\mathcal{I}^c|}\sum_{i\in \mathcal{I}^c}x_i^{(0)}(x_i^{(0)})^{\intercal}$.
We have
\begin{align}
\|\widehat{B}(\hat{\theta}-\theta^*)\|_2^2&\leq \frac{\|\Sig^{1/2}\widehat{B}(\hat{\theta}-\theta^*)\|_2^2}{\Lambda_{\min}(\Sig)}\nonumber\\
&\leq \frac{ \frac{2}{|\mathcal{I}^c|}\|X^{(0)}_{\mathcal{I}^c}\widehat{B}(\hat{\theta}-\theta^*)\|_2^2}{\Lambda_{\min}(\Sig)} +  \frac{2}{\Lambda_{\min}(\Sig)}\underbrace{|\langle \widehat{B}(\hat{\theta}-\theta^*),(\widehat{\Sig}^{(0,c)}-\Sig)\widehat{B}(\hat{\theta}-\theta^*)\rangle |}_{R_1},\label{eq1-pf-lem1}
\end{align}
where $\widehat{\Sig}^{(0,c)}=\frac{1}{|\mathcal{I}^c|}(X^{(0)}_{\mathcal{I}^c})^{\intercal}X^{(0)}_{\mathcal{I}^c}$.
For the second term, consider a singular value decomposition of $\widehat{B}$ as
$\widehat{B}=U\Lambda V^{\intercal}$, where $\Lambda$ is a diagonal matrix containing singular values.  We have
\begin{align*}
R_1\leq \|\Lambda V^{\intercal}(\hat{\theta}-\theta^*)\|_2^2\|U^{\intercal}(\widehat{\Sig}^{(0,c)}-\Sig)U\|_2.
\end{align*}
Since $U$ is independent of $X^{(0)}_{\mathcal{I}^c}$ and $X^{(0)}$ is Gaussian distributed, we have
\[
  \P(\|U^{\intercal}(\widehat{\Sig}^{(0,c)}-\Sig)U\|_2\geq c_1\|U^{\intercal}\Sig U\|_2\sqrt{\frac{L}{n_0}})\leq \exp(-c_2 L).
\]
Hence,
\[
  \P(R_1\geq c_1\|\widehat{B}(\hat{\theta}-\theta^*)\|_2^2\|\Sig\|_2\sqrt{\frac{L}{n_0}})\leq \exp(-c_2 L),
  \]
  where we use the fact of singular value decomposition that
  \[
     \|\widehat{B}(\hat{\theta}-\theta^*)\|_2^2=\|\Lambda V^{\intercal}(\hat{\theta}-\theta^*)\|_2^2.
  \]
  When $ c_1\|\Sig\|_2\sqrt{\frac{L}{n_0}}\leq 1/2$, invoking (\ref{eq1-pf-lem1}), we have
  \[
     \|\widehat{B}(\hat{\theta}-\theta^*)\|_2^2=O_P\left(\frac{1}{|\mathcal{I}^c|}\|X^{(0)}_{\mathcal{I}^c}\widehat{B}(\hat{\theta}-\theta^*)\|_2^2\right).
  \]
  Finally,
  \[
   \|\widehat{B}\hat{\theta}-\beta\|_2^2\leq 2 \|\widehat{B}(\hat{\theta}-\theta^*)\|_2^2+ 2\|\widehat{B}\hat{\theta}^*-\beta\|_2^2.
  \]
  Invoking (\ref{eq0-pf-lem1}), we arrive at desired results.
  
\end{proof}

\begin{proof}[Proof of Remark \ref{re1}]
We use (\ref{eq1-pf-lem1}) again. For $R_1$, we consider a different error splitting
\[
  R_1\leq \|\hat{\theta}-\theta^*\|^2_1\|\widehat{B}^{\intercal}(\widehat{\Sig}^{(0,c)}-\Sig)\widehat{B}\|_{\infty}.
\]
Notice that $\|\hat{\theta}-\theta^*\|_1\leq \|\hat{\theta}\|_1+\|\theta^*\|_1\leq 2$. Moreover, $\widehat{B}$ is indepdent of $\widehat{\Sig}^{(0,c)}$. By conditioning on $\widehat{B}$ first, the Gaussian property of $x_i^{(0)}$ gives
\[
  \|\widehat{B}^{\intercal}(\widehat{\Sig}^{(0,c)}-\Sig)\widehat{B}\|_{\infty}=O_P\left(\max_{0\leq l\leq L}(\widehat{B}^{\intercal}\Sig\widehat{B})_{l,l}\sqrt{\frac{\log L}{n_0}}\right)=O_P\left(\sqrt{\frac{\log L}{n_0}}\right)
\]
since $\E[\|y^{(k)}_i\|_2^2]=O(1)$.

\end{proof}

\subsection{Proof of Theorem \ref{sec4-lem1} and Corollary \ref{cor1-l1}}
\label{ap-sec3-2}
\begin{proof}[Proof of Theorem \ref{sec4-lem1} and Corollary \ref{cor1-l1}]
In view of Lemma \ref{thm-ag1} and Theorem \ref{thm0-l1}, it suffices to prove (\ref{cond-agg2}).

\underline{Part (i)}.  We first prove that under Condition \ref{cond4}(a), 
\begin{align}
\label{eq1-sec4-lem1}
& \P\left(  \min_{k \in \mA^c} \|\Sig^{(k)}w^{(k)}-\Sig^{(0)}\beta\|_{2,\hT_k}^2\geq   \min_{k \in \mA^c}\sum_{j\in H_k}(\Sig_{j,.}^{(k)}w^{(k)}-\Sig_{j,.}^{(0)}\beta)^2\right)\nonumber\\
 &\quad\geq 1-O(\exp\{-c_2n_0^{1-2\kappa}/\log(n_0)\}).
\end{align}
By definition, $|H_k|\ll t^*$. This is because, if $|H_k|\gtrsim t_*$, then 
\[
  \|\Sig^{(k)}w^{(k)}-\Sig^{(0)}\beta\|_2^2\geq|H_k| n^{-2\kappa}\gtrsim n^{\alpha-2\kappa} \gg 1,
\]
which is contradictory to $\max_k\E[(y_i^{(k)})^2]\leq C<\infty$. Hence, $|H_k|\leq t_*$ for any $t_*\lesssim n_*$.

Since $X^{(k)}$ and $\eps^{(k)}$ are all Gaussian-distributed, it follows from the proof of Theorem 1 in \citet{FL08} that
\[
   \max_{k\in\mA^c}\P\left(\min_{j\in H_k}|\widehat{\Delta}^{(k)}_{j,.}|<n_*^{-\kappa}~\text{or}~\|\widehat{\Delta}^{(k)}\|_2^2\geq Cp/n_*\right)\leq c_1|H_k|\exp\{-c_2n_*^{1-2\kappa}/\log(n_0)\}
\]
for some constants $C$, $c_0$, and $c_2$.
Taking a uniform over $k\in \mA^c$, we have
\begin{align*}
&\P\left(\min_{j\in H_k}|\widehat{\Delta}^{(k)}_{j,.}|<n_*^{-\kappa}~\text{or}~\|\widehat{\Delta}^{(k)}\|_2^2\geq Cp/n_*,~\forall k\in\mA^c\right)\\
&\quad\leq c_0|H_k||\mA^c|\exp\{-c_2n_*^{1-2\kappa}/\log(n_0)\}.
\end{align*}
Under Condition \ref{cond4}, $\log |\mA^c|\leq c_1\log n_*$.
Therefore, with probability $1-c_0\exp\{-c_2n_*^{1-2\kappa}/\log(n_*)+c_3\log n_*\}$,
\[
   \#\{1\leq j\leq p: |\widehat{\Delta}^{(k)}_j|>cn_*^{-\kappa}\}\leq \frac{p}{n_*^{1-2\kappa}} ~\forall k\in\mA^c.
\]
Define 
\[
  \widehat{T}_k(t)=\{1\leq j\leq p: |\widehat{\Delta}^{(k)}_j|~\text{is among the first }t~\text{largest of all}\}.
  \]
Therefore, for $\gam n_*^{1-2\kappa}\rightarrow \infty$,  we have for large enough $n_*$
\[
   \P(H_k\subseteq \widehat{T}_k(\gam p))=1-O(\exp\{-c_2n_*^{1-2\kappa}/\log(n_*)\}).
\]
Then it follows from the trimming arguments in Step 2 of the proof of Theorem 1 \citep{FL08} that for $t_*\lesssim n_*$,
\begin{align}
\label{eq2-sec4-lem1}
  \P\left(H_k\subseteq \widehat{T}_k,~ \forall~ k\in \mA^c\right)\geq 1-O(\exp\{-c_2n_*^{1-2\kappa}/\log(n_*)\}).
\end{align}
As a result, with probability at least $O(\exp\{-c_2n_*^{1-2\kappa}/\log(n_*)\})$,
\[
 \min_{k \in \mA^c} \|\Sig^{(k)}w^{(k)}-\Sig^{(0)}\beta\|_{2,\hT_k}^2\geq   \min_{k \in \mA^c}\sum_{j\in H_k}(\Sig_{j,.}^{(k)}w^{(k)}-\Sig_{j,.}^{(0)}\beta)^2.
\]
Hence, (\ref{eq1-sec4-lem1}) is proved. 

\underline{Part (ii)}.
For $\widehat{R}^{(k)}$ defined in (\ref{eq-hRk}),
we will show that
\[
   \P\left(\max_{k\in \mA^o}\widehat{R}^{(k)}_1< \min_{k \in\mA^c}\widehat{R}^{(k)}_1\right)\rightarrow 1.
\]
For $k\in\mA^c$, by (\ref{eq2-sec4-lem1})
\begin{align*}
\|\widehat{\Delta}^{(k)}_{\hT_k}\|_2^2\geq \|\widehat{\Delta}^{(k)}_{H_k}\|_2^2.
\end{align*}
Notice that
\begin{align*}
\widehat{\Delta}^{(k)}&=\Sig^{(k)}w^{(k)}-\Sig^{(0)}\beta\\
&\quad + \underbrace{ (X^{(k)})^{\intercal}y^{(k)}/n_k-\E[(X^{(k)})^{\intercal}y^{(k)}/n_k]\}-\{(X^{(0)})^{\intercal}y^{(0)}/n_0-\E[(X^{(0)})^{\intercal}y^{(0)}/n_0]\}}_{E^{(k)}},
\end{align*}
Then for any $0<\delta<1$,
\begin{align*}
\min_{k\in \mA^c}\|\widehat{\Delta}^{(k)}_{H_k}\|_2^2\geq (1-\delta)\min_{k\in \mA^c}\sum_{j\in H_k}\{e_j^T(\Sig^{(k)}w^{(k)}-\Sig^{(0)}\beta)\}^2-\frac{(1-\delta)}{\delta}\max_{k\in \mA^c}\|E^{(k)}_{H_k}\|_2^2.
\end{align*}
As $E^{(k)}_{j}$ is sub-exponential for $1\leq j\leq p$, it is easy to show that
\begin{align*}
    \max_{k\in \mA^c}\|E^{(k)}_{H_k}\|_2^2&\leq \max_{k\in \mA^c}|H_k| \max_{j\in H_k,k\in \mA^c}|E^{(k)}_j|^2,
\end{align*}
where
\begin{align*}
\P\left(\max_{j\in H_k,k\in \mA^c}|E^{(k)}_j|^2\geq x\right)&\leq \max_{k\in \mA^c}|H_k| K\max_{j\in H_k,k\in \mA^c}\P\left(|E^{(k)}_j|>\sqrt{x}\right)\\
&\leq 2t_*K\exp\{-c_1\min\{\frac{n_kx}{c_2},c_3\sqrt{x}\}\}.
\end{align*}
Hence, if $\log (t_*K)\leq c_4\sqrt{n_*}$ for some small enough $c_4$, we have
\[
  \P\left( \max_{k\in \mA^c}\|E^{(k)}_{H_k}\|_2^2\geq ct_*\frac{\log (t_*K)}{n_*}\right)\leq 1/(t_*K).
\]
  Notice that $\log(t_*K)\leq \log n_*+\log K=O(\log p)$. Under Condition \ref{cond4}(b), we have
  \[
    \max_{k\in \mA^c}\|E^{(k)}_{H_k}\|_2^2=o_P(1)\min_{k\in \mA^c}\sum_{j\in H_k}\{e_j^T(\Sig^{(k)}w^{(k)}-\Sig^{(0)}\beta)\}^2
  \]
Therefore, for  any constant $0<\delta<1$,
  \begin{align}
  \label{eq3-sec4-lem1}
  \min_{k\in \mA^c}\|\widehat{\Delta}^{(k)}_{\hT_k}\|_2^2\geq (1-\delta-o_P(1))\min_{k\in \mA^c}\sum_{j\in H_k}\{e_j^T(\Sig^{(k)}w^{(k)}-\Sig^{(0)}\beta)\}^2.
  \end{align}
  
For $k\in\mA$, we have
\[
  \max_{k\in \mA}\|\widehat{\Delta}^{(k)}_{\hT_k}\|_2^2\leq (1+\delta)\max_{k\in \mA}\|\Sig^{(k)}w^{(k)}-\Sig^{(0)}\beta\|_2^2+\frac{(1+\delta)}{\delta}\max_{k\in \mA}\max_{|T|=t_*}\|E^{(k)}_{T}\|_2^2.
\]
For the second part, we can similarly show that for some large enough constant $c$
\begin{align*}
\P\left( \max_{k\in \mA}\max_{|T|=t_*}\|E^{(k)}_{T}\|_2^2\geq c\frac{t_*\log p}{n_*}\right)\leq 1/p
\end{align*}
if $\log p\leq  c_1\sqrt{n_*}$ for some small enough $c_1$.
Hence, by Condition \ref{cond4}(b),
\begin{align}
\label{eq4-sec4-lem1}
\max_{k\in \mA}\|\widehat{\Delta}^{(k)}_{\hT_k}\|_2^2\leq (1+\delta)\max_{k\in \mA}\|\Sig^{(k)}w^{(k)}-\Sig^{(0)}\beta\|_2^2+o_P(1)\min_{k\in \mA^c} \sum_{j\in H_k}\{e_j^T(\Sig^{(k)}w^{(k)}-\Sig^{(0)}\beta)\}^2.
\end{align}
Therefore, by the definition of $\mA^o$,
\begin{align*}
&\P(\max_{k\in \mA^o}\|\widehat{\Delta}^{(k)}_{\hT_k}\|_2^2\geq \min_{k\in \mA^c}\|\widehat{\Delta}^{(k)}_{\hT_k}\|_2^2)\\
&\leq \P\left( (1+\delta)\max_{k\in \mA^o}\|\Sig^{(k)}w^{(k)}-\Sig^{(0)}\beta\|_2^2\geq (1-\delta-o(1))\min_{k\in \mA^c}\sum_{j\in H_k}\{e_j^T(\Sig^{(k)}w^{(k)}-\Sig^{(0)}\beta)\}^2\right)+o(1)\\
&=o(1)
\end{align*}

\end{proof}

\section{Proofs for theorems in Section \ref{sec4}}
\label{ap-ms}
Let $H_k$ denote the support of $\delta^{(k)}$ for $k\in\mA_0$.
\subsection{Proof of Theorem \ref{thm1-l0}}
 The upper bound $s\log p/n_0$ the convergence rate when $\mathcal{A}_0$ is empty. When $\mathcal{A}_0$ is not empty, $s\log p/n_0$ can be trivially achieved since $\mathcal{A}_0$ is nonempty meaning that using some auxiliary samples should be no worse than only using the primary sample. We focus on proving other terms.

\begin{lemma}
\label{lem1-pf1}
Under the conditions of Theorem \ref{thm1-l0}, with probability at least $1-\exp(-c_1\log p)-\exp(-c_2n_{\mA_0})$, it holds that for any $k \in \mA_0$, there exists a sufficiently large constant $c$ such that
\begin{align*}
&\|\hat{\delta}^{(k)}-\delta^{(k)}\|_2^2\leq ch\left(\frac{\log p}{n_0\wedge n_k}\right)\\
&\|\hat{\delta}^{(k)}-\delta^{(k)}\|_1\leq ch\left(\frac{\log p}{n_0\wedge n_k}\right).
\end{align*}
\end{lemma}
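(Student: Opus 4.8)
The plan is to recognize Step~1 of Algorithm~\ref{algo3} as a Lasso for the $\ell_0$-sparse target $\delta^{(k)}$ in a ``pseudo-regression'' whose population optimality condition is the moment equation (\ref{eq-m1}). Write $\hat b^{(k)}=(X^{(k)})^{\intercal}y^{(k)}/n_k-(X^{(0)})^{\intercal}y^{(0)}/n_0$, so that $\hat\delta^{(k)}=\argmin_{\delta}\{\tfrac12\delta^{\intercal}\widehat{\Sig}^{\mA_0}\delta-\delta^{\intercal}\hat b^{(k)}+\lam_k\|\delta\|_1\}$ and, under $\Sig^{(k)}=\Sig^{(0)}=\Sig$, the moment condition reads $\E[\hat b^{(k)}]=\Sig\delta^{(k)}$. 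The optimality of $\hat\delta^{(k)}$ gives the basic inequality; completing the square in $\hat v:=\hat\delta^{(k)}-\delta^{(k)}$ yields
\[
\tfrac12\hat v^{\intercal}\widehat{\Sig}^{\mA_0}\hat v\le \hat v^{\intercal}\big(\hat b^{(k)}-\widehat{\Sig}^{\mA_0}\delta^{(k)}\big)+\lam_k\|\delta^{(k)}\|_1-\lam_k\|\hat\delta^{(k)}\|_1.
\]

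The crux is to control the effective noise $\xi^{(k)}:=\hat b^{(k)}-\widehat{\Sig}^{\mA_0}\delta^{(k)}$ in $\ell_\infty$. I split it as $\xi^{(k)}=(\hat b^{(k)}-\Sig\delta^{(k)})-(\widehat{\Sig}^{\mA_0}-\Sig)\delta^{(k)}$. The first piece is a centered average of the sub-exponential products $x^{(k)}_iy^{(k)}_i$ and $x^{(0)}_iy^{(0)}_i$; Bernstein's inequality with a union bound over the $p$ coordinates and over $k\in\mA_0$ gives $\|\hat b^{(k)}-\Sig\delta^{(k)}\|_\infty\lesssim(\|y^{(k)}\|_2/n_k+\|y^{(0)}\|_2/n_0)\sqrt{\log p}\asymp\sqrt{\log p/(n_0\wedge n_k)}$ with probability at least $1-\exp(-c_1\log p)$. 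The second piece is a deviation of the pooled Gram matrix applied to the fixed sparse $\delta^{(k)}$: since $\widehat{\Sig}^{\mA_0}$ concentrates at the faster rate $1/\sqrt{n_{\mA_0}+n_0}$ and $\|\Sig^{1/2}\delta^{(k)}\|_2$ is bounded, this term is of strictly smaller order and is absorbed. Hence the prescribed choice $\lam_k\ge c_1(\|y^{(k)}\|_2/n_k+\|y^{(0)}\|_2/n_0)\sqrt{\log p}$ ensures $\lam_k\ge 2\|\xi^{(k)}\|_\infty$ on this event.

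On that event the usual argument applies: one obtains the cone condition $\|\hat v_{H_k^c}\|_1\le 3\|\hat v_{H_k}\|_1$ with $H_k=\supp(\delta^{(k)})$, $|H_k|\le h$, and then invokes a restricted eigenvalue bound $\hat v^{\intercal}\widehat{\Sig}^{\mA_0}\hat v\ge\phi_0\|\hat v\|_2^2$ for the pooled matrix. This is exactly Lemma~\ref{tlem1} applied to $\widehat{\Sig}^{\mA_0}$, holding with probability $1-\exp(-c_2 n_{\mA_0})$ precisely under the sample-size conditions (\ref{ms-ss}) (the bounds $s\log p/(n_{\mA_0}+n_0)=o(1)$ and $h\log p/n_0=o((\log p/(n_0+n_{\mA_0}))^{1/4})$ keep the cone radius inside the RE region). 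Standard Lasso algebra then gives the stated $\ell_2$ bound $\|\hat v\|_2^2\lesssim h\lam_k^2/\phi_0\asymp h\log p/(n_0\wedge n_k)$, and via the cone relation $\|\hat v\|_1\le 4\|\hat v_{H_k}\|_1\le 4\sqrt h\,\|\hat v\|_2$ the companion $\ell_1$ bound of order $h\lam_k$. A union bound over $k\in\mA_0$ finishes the proof.

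The main obstacle is the mismatch between the pooled Gram matrix $\widehat{\Sig}^{\mA_0}$, whose fluctuations are of scale $1/\sqrt{n_{\mA_0}+n_0}$, and the single-study linear term $\hat b^{(k)}$, whose fluctuations are of scale $1/\sqrt{n_0\wedge n_k}$. Showing that the cross term $(\widehat{\Sig}^{\mA_0}-\Sig)\delta^{(k)}$ is genuinely lower order---so that $\lam_k$ is driven by the $n_0\wedge n_k$ rate rather than by the slower pooled rate---is the delicate bookkeeping step, and it is exactly where the boundedness of $\|\Sig^{1/2}\delta^{(k)}\|_2$ and the sample-size conditions (\ref{ms-ss}) are used.
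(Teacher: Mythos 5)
Your proposal is correct and follows essentially the same route as the paper: the same basic inequality for the pseudo-Lasso, the same control of the effective noise $\hat b^{(k)}-\widehat{\Sig}^{\mA_0}\delta^{(k)}$ in $\ell_\infty$ via sub-exponential concentration so that $\lambda_k$ is driven by the $\sqrt{\log p/(n_0\wedge n_k)}$ scale, and the same cone-plus-restricted-eigenvalue argument on the pooled Gram matrix (the paper invokes the cone-based RE of Zhou (2009) rather than the $\ell_1$-ball version of Lemma~\ref{tlem1}, but the two are interchangeable here under (\ref{ms-ss})). Note that the $\ell_1$ rate you obtain, of order $h\lambda_k\asymp h\sqrt{\log p/(n_0\wedge n_k)}$, coincides with what the paper's own display (\ref{re-deltak}) actually establishes, rather than with the $h\log p/(n_0\wedge n_k)$ written in the lemma statement, which appears to be a typo.
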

\begin{proof}[Proof of Lemma \ref{lem1-pf1}]
 In the event that 
\begin{align*}
   E_0=&\left\{ \|\widehat{\Sig}^{(0)}\beta-\widehat{\Sig}^{(k)}w^{k}-\widehat{\Sig}^{\mA_0}\delta^{(k)}+(X^{(k)})^{\intercal}\eps^{(k)}/n_k-(X^{(0)})^{\intercal}\eps^{(0)}/n_0\|_{\infty}\leq \frac{\lam_k}{2},~\forall k \in \mA_0\right.\\
& \left.\quad \min_{k \in \mA_0}\inf_{0\neq 3\|u_{H_k}\|_1\geq \|u_{H_k^c}\|_1}\frac{u^{\intercal}\widehat{\Sig}^{\mA_0}u}{\|u_{H_k}\|_2^2}\geq \phi_0>0 \right\},
\end{align*}
it is easy to show that
\begin{align}
\label{ora-ineq1}
  \frac{1}{2}(\hat{\delta}^{(k)}-\delta^{(k)})^{\intercal}\widehat{\Sig}^{\mA_0}(\hat{\delta}^{(k)}-\delta^{(k)})\leq \lam_k\|\delta^{(k)}\|_1-\lam_k\|\hat{\delta}^{(k)}\|_1+\frac{\lam_k}{2}\|\hat{\delta}^{(k)}-\delta^{(k)}\|_1.
\end{align}
Since 
\[
   \|\delta^{(k)}_{H_k}\|_1-\|\hat{\delta}^{(k)}_{H_k}\|_1\leq \|(\hat{\delta}^{(k)}-\delta^{(k)})_{H_k}\|_1~\text{and}~ \|\delta^{(k)}_{H^c_k}\|_1-\|\hat{\delta}^{(k)}_{H^c_k}\|_1=- \|(\hat{\delta}^{(k)}-\delta^{(k)})_{H_k^c}\|_1,
\]
we arrive at
\[
  \frac{1}{2}(\hat{\delta}^{(k)}-\delta^{(k)})^{\intercal}\widehat{\Sig}^{\mA_0}(\hat{\delta}^{(k)}-\delta^{(k)})\leq \frac{3\lam_k}{2}\|(\hat{\delta}^{(k)}-\delta^{(k)})_{H_k}\|_1.
\]
Standard arguments lead to  
\begin{align}
\frac{1}{n_k}\|X^{(k)}(\hat{\delta}^{(k)}-\delta^{(k)})\|_2^2\leq \frac{c_1h\lam_k^2}{\phi_0}\nonumber\\
   \|\hat{\delta}^{(k)}-\delta^{(k)}\|_1\leq \frac{c_2h\lam_k}{\phi_0}.\label{re-deltak}
\end{align}

It is left to verify $\P(E_0)\rightarrow 1$. Since $X^{(k)}$ are Gaussian matrices with positive definite covariance matrix, it follows from Theorem 1.6 in \citet{Zhou09} that for $n_{\mA_0}+n_0\gg (s\log p+\log K)$, 
\begin{align*}
    &\P\left(\min_{k \in \mA}\inf_{0\neq 3\|u_{H_k}\|_1\geq \|u_{H_k^c}\|_1}\frac{u^{\intercal}\widehat{\Sig}^{\mA_0}u}{\|u_{H_k}\|_2^2}\geq \Lambda_{\min}(\Sig)/2\right)\\
    &\leq K\exp(-c_1(n_{\mA_0}+n_0))=\exp(-c_2(n_{\mA_0}+n_0)).
\end{align*}
Using the sub-Guassian property of $\eps^{(k)},\eps^{(0)}$ and the sub-exponential property of $\widehat{\Sig}^{(k)}_{j,.}\beta$ and $\widehat{\Sig}^{(0)}_{j,.}\beta$, we can show that for
for \begin{align*}
   \lam_k&\geq  c_1 \sqrt{\frac{\E[(y^{(0)}_1)^2]\log p}{n_0}}+c_1\sqrt{\frac{\E[(y^{(k)}_1)^2]\log p}{n_k}}\leq c_2\sqrt{\frac{\log p}{n_0\wedge n_k}}
\end{align*}
with large enough $c_2>0$, it holds that
\[
  \P(E_0)\geq 1-\exp(-c_1n_{\mA_0})-\exp(-c_2\log p). 
\]
\end{proof}

\begin{proof}[Proof of Theorem \ref{thm1-l0}]
Let $\hat{u}^{\mA_0}=\hat{\beta}(\mA_0)-\beta$.
One can show that
\begin{align}
\label{eq11-pf}
\frac{1}{4(n_{\mA_0}+n_0)}\|X^{(k)}\hat{u}^{\mA_0}\|_2^2\leq \frac{3\lam_{\beta}}{2}\|\hat{u}^{\mA_0}_S\|_1-\frac{\lam_{\beta}}{2}\|\hat{u}^{\mA_0}_{S^c}\|_1+\frac{1}{n_{\mA_0}+n_0}\sum_{k\in \mA_0}\|X^{(k)}(\hat{\delta}^{(k)}-\delta^{(k)})\|_2^2
\end{align}
for $\lam_{\beta}\geq c_1\sqrt{\log p/(n_{\mA_0}+n_0)}$ with large enough constant $c_1$.
By Lemma \ref{lem1-pf1}, we have
\begin{align*}
   \frac{1}{n_{\mA_0}+n_0}\sum_{k\in \mA_0}\|X^{(k)}(\hat{\delta}^{(k)}-\delta^{(k)})\|_2^2&=O_P\left(\sum_{k\in\mA_0:n_k\geq n_0}\frac{n_kh\log p/n_0}{n_{\mA_0}+n_0}+\sum_{k\in\mA_0:n_k< n_0}\frac{h\log p}{n_{\mA_0}+n_0}\right)\\
   &=O_P\left(\frac{h\log p}{n_0}\right),
\end{align*}
where the last step is due to $n_{\mA_0}+n_0\gtrsim |\mA_0|n_0$. By a similar proof of Theorem \ref{thm0-l1}, one can show that
\begin{align*}
\ \frac{1}{n_{\mA_0}+n_0}\sum_{k\in \mA_0}\|X^{(k)}(\hat{\beta}(\mA_0)-\beta)\|_2^2=O_P\left(\frac{s\log p}{n_{\mA_0}+n_0} + \frac{h\log p}{n_0}\right)
\end{align*}
if $s\log p/(n_{\mA_0}+n_0)=o(1)$. For the estimation error, we layout the key steps for the proof. 

If $\frac{3\lam_{\beta}}{2}\|\hat{u}^{\mA_0}_S\|_1\geq \frac{1}{n_{\mA_0}+n_0}\sum_{k\in \mA_0}\|X^{(k)}(\hat{\delta}^{(k)}-\delta^{(k)})\|_2^2$, then under the restricted values conditions, we can show
\[
  \|\hat{\beta}(\mA_0)-\beta)\|_2^2=O_P(\frac{s\log p}{n_{\mA_0}+n_0}).
\]
If $\frac{3\lam_{\beta}}{2}\|\hat{u}^{\mA_0}_S\|_1\leq \frac{1}{n_{\mA_0}+n_0}\sum_{k\in \mA_0}\|X^{(k)}(\hat{\delta}^{(k)}-\delta^{(k)})\|_2^2$, we have $\|\hat{u}^{\mA_0}\|_1\in\mathcal{B}_1(h\log p/n_0/\lam_{\beta})$.
 Under the sample size condition $h\log p/n_0=o((\log p/(n_0+n_{\mA_0}))^{1/4})$, the restricted eigenvalue condition is guaranteed by Lemma \ref{tlem1} and
 \[
   \|\hat{\beta}(\mA_0)-\beta)\|_2^2=O_P(h\log p/n_0).
 \]

\end{proof}

\subsection{Minimax optimal rates for $q\in(0,1)$}
We first prove the minimax lower bound of estimation error in $\Theta_q(s,h)$.
\begin{theoremA}[Minimax lower bound for $q\in(0,1)$]
\label{thm-mini3}
Assume that Condition \ref{cond1} and Condition \ref{cond2} hold true. Suppose that $\max\{h^q(\log p/n_0)^{1/2-q/4}, s\log p/(n_{\mA_0}+n_0)\}=o(1)$.
For any fixed $q\in (0,1)$, there exist some large enough constants $c_1$ and $c_2$,
\begin{align*}
\P\left(\inf_{\hat{\beta}}\sup_{\Theta_q(s,h)} \|\hat{\beta}-\beta\|_2^2\geq c_1\frac{s\log p}{n_{\mA_q}+n_0}+ c_2h^2\wedge h^q\left(\frac{\log p}{n_0}\right)^{1-q/2}\wedge\frac{s\log p}{n_0}\right)\geq \frac{1}{2}.
\end{align*}
\end{theoremA}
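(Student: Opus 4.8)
The plan is to lower-bound the minimax risk by the risks over two carefully chosen sub-families of $\Theta_q(s,h)$, one matched to each term on the right-hand side. Since the two terms are of the same order as their maximum, it suffices to establish each lower bound separately, each holding with probability at least $1/2$, and then combine. This mirrors the two-term reductions already used for $q=1$ and $q=0$ in the proofs of Theorem \ref{thm2-low} and Theorem \ref{thm-mini2}.

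First, for the term $s\log p/(n_{\mA_q}+n_0)$, I would restrict attention to the sub-family in which $\delta^{(k)}=0$ (equivalently $w^{(k)}=\beta$) for every $k\in\mA_q$. This family lies inside $\Theta_q(s,h)$ because $\|\delta^{(k)}\|_q=0\leq h$, and on it all $n_{\mA_q}+n_0$ observations are i.i.d.\ draws from the target model sharing a common $s$-sparse coefficient $\beta\in\mathcal{B}_0(s)$. The lower bound $s\log p/(n_{\mA_q}+n_0)$ then follows verbatim from the sparse-regression minimax argument of \citet{Raskutti11}, exactly as invoked in the first part of the proof of Theorem \ref{thm-mini2}, now with effective sample size $n_{\mA_q}+n_0$.

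Second, for the term $h^2\wedge h^q(\log p/n_0)^{1-q/2}\wedge s\log p/n_0$, I would take $w^{(k)}=0$ for all $1\leq k\leq K$, so that $\delta^{(k)}=\beta$ and the auxiliary responses carry no information about $\beta$; the problem collapses to estimating $\beta$ from the $n_0$ primary samples subject to $\beta\in\mathcal{B}_0(s)\cap\mathcal{B}_q(h)$. I would then follow the construction in Theorem 5.1 of \citet{RT11}, splitting into regimes according to $\bar m=\lfloor h^q(\log p/n_0)^{-q/2}\rfloor$: (i) if $\bar m\geq 1$ and $s\log p/n_0\leq h^q(\log p/n_0)^{1-q/2}$, use $s$-sparse vectors with entries of order $\sqrt{\log p/n_0}$, producing rate $s\log p/n_0$; (ii) if $\bar m\geq 1$ but $s\log p/n_0>h^q(\log p/n_0)^{1-q/2}$, use $\bar m/2$-sparse vectors of the same magnitude, which satisfy the $\ell_q$-budget and produce $\bar m\log p/n_0\asymp h^q(\log p/n_0)^{1-q/2}$; (iii) if $\bar m=0$, so $h<(\log p/n_0)^{1/2}$, use a single coordinate of magnitude $h$, producing $h^2$. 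In each regime the Varshamov--Gilbert packing and Fano argument of \citet{Raskutti11} apply directly to the $n_0$ target samples.

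The routine part is the Fano/KL machinery, which is imported wholesale from \citet{Raskutti11} and \citet{RT11}. The main obstacle is the bookkeeping across the three regimes of the second step: one must verify that each constructed packing set genuinely lies in $\mathcal{B}_0(s)\cap\mathcal{B}_q(h)$ --- in particular checking $\|\beta\|_q\leq h$ for the $\bar m$-sparse construction, where $\bar m$ is defined precisely so that the $\ell_q$-budget is saturated at signal level $\sqrt{\log p/n_0}$ --- and that the stated sample-size condition $h^q(\log p/n_0)^{1/2-q/4}=o(1)$ supplies the separation needed for Fano to yield a constant lower bound on the error probability.
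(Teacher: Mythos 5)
Your proposal is correct and follows essentially the same route as the paper's proof: the first term is obtained by setting $\delta^{(k)}=0$ for $k\in\mA_q$ and invoking the sparse-regression lower bound of \citet{Raskutti11} with effective sample size $n_{\mA_q}+n_0$, and the second term is obtained by setting $w^{(k)}=0$ and running the three-regime construction of Theorem 5.1 of \citet{RT11} indexed by $\bar m=\lfloor h^q(\log p/n_0)^{-q/2}\rfloor$, exactly as in the paper. The regime split, the choice of packing sets, and the verification that each set lies in $\mathcal{B}_0(s)\cap\mathcal{B}_q(h)$ all coincide with the paper's argument.
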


For the upper bound, we consider the following algorithm.

\begin{algorithm}[H]
\NoCaptionOfAlgo
 \SetKwInOut{Input}{Input}
    \SetKwInOut{Output}{Output}
\SetAlgoLined
 \Input{Primary data $(X^{(0)},y^{(0)})$ and informative auxiliary samples $\{X^{(k)},y^{(k)}\}_{k\in\mA_q}$}
 \Output{$\hat{\beta}(\mA_q)$}

\underline{Step 1}. 
Estimate each individual $\delta^{(k)}$ via
\begin{align}
\label{eq-hdelta-q}
   \hat{\delta}^{(k)}=\argmin_{\delta\in\R^p}\left\{\frac{1}{2}\delta^{\intercal}\widehat{\Sig}^{\mA_q}\delta-\delta^{\intercal}[(X^{(k)})^{\intercal}y^{(k)}/n_k-(X^{(0)})^{\intercal}y^{(0)}/n_0]+\lam_k\|\delta\|_q\right\},
\end{align}
where $\widehat{\Sig}^{\mA_q}=\sum_{k\in \mA_q\cup\{0\}}(X^{(k)})^{\intercal}X^{(k)}/(n_{\mA_q}+n_0)$ and $\lam_k>0$ are tuning parameters.

\underline{Step 2}. Compute
\begin{align}
\label{eq-hbeta-init}
\hat{\beta}(\mA_q)&=\argmin_{b\in\R^p} \left\{\frac{1}{2(n_{\mA_q}+n_0)}\sum_{k \in \mA_q\cup\{0\}}\|y^{(k)}-X^{(k)}\hat{\delta}^{(k)}-X^{(k)}b\|_2^2+\lam_{\beta}\|b\|_1\right\}\;
\end{align}
for $\lam_{\beta}=c_1\sqrt{\log p/(n_0+n_{\mA_q})}$ with some constant $c_1$.
 \caption{\textbf{The Oracle Trans-Lasso algorithm($\ell_q$)}}
\end{algorithm}

\begin{theoremA}[Achievability of upper bound  for $q\in (0,1)$]
\label{thm2-up}
Assume that Condition \ref{cond1} and Condition \ref{cond2} hold true. Suppose that 
\[
  h(\log p/n_0)^{1/-q/2}=o\left((\frac{\log p}{n_0+n_{\mA_q}})^{1/4}\right),~ \frac{s\log p}{n_{\mA_q}+n_0}=o(1),~ \text{and} ~ n_{\mA_q}\gtrsim |\mA_q|n_0.
  \]
We take $\lam_k\geq c_1(\|y^{(k)}\|_2/n_k+ \|y^{(0)}\|_2/n_0)\sqrt{\log p}$ and $\lam_{\beta}=c_2\sqrt{\log p/(n_0+n_{\mA_0})}$ for sufficiently large constants $c_1$ and $c_2$. Then
\begin{align*}
&\sup_{\beta\in\Theta_q(s,h)} \frac{\sum\limits_{k\in\mA_q\cup\{0\}}\|X^{(k)}(\hat{\beta}(\mA_q)-\beta)\|_2^2 }{n_0+n_{\mA_q}}\vee \|\hat{\beta}(\mA_q)-\beta\|_2^2\\
&=O_P\left( \frac{s\log p}{n_{\mA_q}}+ h^2\wedge h^q\left(\frac{\log p}{n_0}\right)^{1-q/2}\wedge \frac{s\log p}{n_0}\right).
\end{align*}
\end{theoremA}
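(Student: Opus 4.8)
The plan is to follow the same two-stage template used for the exactly-sparse case in Theorem \ref{thm1-l0}, replacing the $\ell_0$ bookkeeping by $\ell_q$-ball geometry. As in the opening of the proof of Theorem \ref{thm1-l0}, the term $s\log p/n_0$ in the stated bound is the primary-only Lasso rate and serves as a trivial ceiling (using the auxiliary samples can be arranged to be no worse than discarding them), so I focus on the remaining terms $s\log p/n_{\mA_q}$ and $h^2\wedge h^q(\log p/n_0)^{1-q/2}$. Writing $\hat u^{\mA_q}=\hat\beta(\mA_q)-\beta$, the Step-2 objective is a Lasso for $\beta$ on the pooled, bias-corrected responses $y^{(k)}-X^{(k)}\hat\delta^{(k)}$, so the argument reduces to (i) a per-study rate for the contrast estimates $\hat\delta^{(k)}$ and (ii) a pooled oracle inequality whose ``extra noise'' is the transfer error $T=(n_{\mA_q}+n_0)^{-1}\sum_{k\in\mA_q}\|X^{(k)}(\hat\delta^{(k)}-\delta^{(k)})\|_2^2$.

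The first and main step is the $\ell_q$ analogue of Lemma \ref{lem1-pf1}: for each $k\in\mA_q$,
\[
\|\hat\delta^{(k)}-\delta^{(k)}\|_2^2\vee \tfrac{1}{n_k}\|X^{(k)}(\hat\delta^{(k)}-\delta^{(k)})\|_2^2=O_P\Big(h^2\wedge h^q\big(\tfrac{\log p}{n_0\wedge n_k}\big)^{1-q/2}\Big),
\]
together with the companion bound $\|\hat\delta^{(k)}-\delta^{(k)}\|_1=O_P(h^q(\log p/(n_0\wedge n_k))^{1/2-q/2})$. I would obtain this from the basic inequality for the Step-1 program: the restricted eigenvalue of $\widehat\Sig^{\mA_q}$ is bounded below by Lemma \ref{tlem1} (using the large pooled size $n_{\mA_q}+n_0$), while the effective score $(X^{(k)})^\intercal y^{(k)}/n_k-(X^{(0)})^\intercal y^{(0)}/n_0-\Sig\delta^{(k)}$ concentrates at scale $\sqrt{\log p/(n_0\wedge n_k)}$, which fixes $\lam_k$. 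The genuinely new ingredient is the $\ell_q$ cone step: since $\delta^{(k)}\in\mathcal B_q(h)$ is not exactly sparse, I would threshold its coordinates at level $\lam_k$, bound the number of ``large'' coordinates by $h^q\lam_k^{-q}$ and the $\ell_1$-mass of the ``small'' ones by $h^q\lam_k^{1-q}$, and run the restricted-eigenvalue argument on the resulting cone; optimizing produces the $h^q\lam_k^{2-q}$ rate, while the crude bound $\|\delta^{(k)}\|_2^2\le\|\delta^{(k)}\|_q^2\le h^2$ supplies the $h^2$ cap. This thresholding/cone analysis, together with reconciling the mismatch between the restricted-eigenvalue scale $n_{\mA_q}+n_0$ and the noise scale $n_0\wedge n_k$, is where most of the work lies and is the main obstacle.

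With the per-study rate in hand, the transfer error is controlled exactly as in Theorem \ref{thm1-l0}: split $\mA_q$ according to $n_k\ge n_0$ and $n_k<n_0$, use $n_k(\log p/n_k)^{1-q/2}\le n_0(\log p/n_0)^{1-q/2}$ for the small-$n_k$ terms, and invoke $n_{\mA_q}\gtrsim|\mA_q|n_0$ to conclude $T=O_P\big(h^2\wedge h^q(\log p/n_0)^{1-q/2}\big)$. Finally I would write the Step-2 basic inequality in the form
\[
\tfrac{1}{4(n_{\mA_q}+n_0)}\!\sum_{k\in\mA_q\cup\{0\}}\!\|X^{(k)}\hat u^{\mA_q}\|_2^2\le \tfrac{3\lam_\beta}{2}\|\hat u^{\mA_q}_S\|_1-\tfrac{\lam_\beta}{2}\|\hat u^{\mA_q}_{S^c}\|_1+T,
\]
and split into the two cases of the proof of Theorem \ref{thm1-l0}. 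When $\tfrac{3}{2}\lam_\beta\|\hat u^{\mA_q}_S\|_1\ge T$ this is a standard $s$-sparse Lasso cone bound on $n_{\mA_q}+n_0$ samples, giving $\|\hat u^{\mA_q}\|_2^2=O_P(s\log p/(n_{\mA_q}+n_0))$, which is comparable to $s\log p/n_{\mA_q}$ since $n_{\mA_q}\gtrsim|\mA_q|n_0$; when $\tfrac{3}{2}\lam_\beta\|\hat u^{\mA_q}_S\|_1\le T$ we get $\|\hat u^{\mA_q}\|_1\lesssim T/\lam_\beta$, and the stated sample-size condition (the $\ell_q$ analogue of that in Theorem \ref{thm1-l0}, forcing $T\lam_\beta^{-1}(\log p/(n_{\mA_q}+n_0))^{1/4}=o(1)$) lets Lemma \ref{tlem1} furnish the restricted eigenvalue on this $\ell_1$-ball, yielding $\|\hat u^{\mA_q}\|_2^2=O_P(T)$. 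Combining the two cases with the trivial $s\log p/n_0$ ceiling delivers both the prediction and the estimation bounds in the statement.
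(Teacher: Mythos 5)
Your overall architecture matches the paper's: a per-study rate for $\hat\delta^{(k)}$ in Step 1, accumulation of the transfer error $T$ using $n_{\mA_q}\gtrsim|\mA_q|n_0$, and then the two-case Step-2 oracle inequality recycled verbatim from the proof of Theorem \ref{thm1-l0}. Where you genuinely diverge is the key Step-1 lemma. The paper exploits the fact that Step 1 of the $\ell_q$ algorithm penalizes $\lam_k\|\delta\|_q$ itself: the basic inequality then forces $\hat v^{(k)}=\hat\delta^{(k)}-\delta^{(k)}\in\mathcal{B}_q(3h)$, after which Lemma 5 of \citet{Raskutti11} (the $\ell_1$--$\ell_2$ interpolation over $\ell_q$ balls, $\|\hat v\|_1\le\sqrt{2h^q}\,\tau^{-q/2}\|\hat v\|_2+2h^q\tau^{1-q}$) turns $\|\hat v\|_2^2\lesssim\lam_k\|\hat v\|_1$ into a quadratic in $\|\hat v\|_2$ whose positive root gives $h^q\lam_k^{2-q}$; the $\wedge\,h^2$ cap falls out for free from $\|\hat v^{(k)}\|_2\le\|\hat v^{(k)}\|_q\le 3h$. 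You instead propose the classical weak-sparsity analysis: threshold the coordinates of $\delta^{(k)}$ at level $\lam_k$, control the cardinality of the large set by $h^q\lam_k^{-q}$ and the residual $\ell_1$ mass by $h^q\lam_k^{1-q}$, and run a restricted-eigenvalue cone argument. That route is standard and delivers the same $h^q\lam_k^{2-q}$ rate, and it has the advantage of applying to a convex ($\ell_1$-penalized) program. Two caveats, though. First, the estimator actually defined in the paper's Algorithm uses the nonconvex penalty $\lam_k\|\delta\|_q$, and the thresholding-cone decomposition is tailored to an $\ell_1$ penalty: the $\ell_q$ basic inequality produces differences of $\|\cdot\|_q$, not $\|\cdot\|_1$, so your argument proves achievability of the rate for a modified Step-1 program rather than for the stated one. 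Second, your justification of the $\wedge\,h^2$ cap via $\|\delta^{(k)}\|_2^2\le h^2$ bounds the true contrast, not the estimation error $\hat v^{(k)}$; with an $\ell_1$ penalty you do not automatically get $\hat v^{(k)}\in\mathcal{B}_q(Ch)$, so this term needs a separate argument (e.g., a comparison with the zero candidate in the small-$h$ regime), whereas in the paper it is an immediate consequence of the $\ell_q$-ball membership. Neither caveat affects the final rate, but both would need to be addressed to make your proof self-contained for the theorem as stated.
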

\begin{proof}[Proof of Theorem \ref{thm2-up}]
Let $\hat{v}^{(k)}=\hat{\delta}^{(k)}-\delta^{(k)}$. Consider the event
\begin{align*}
   E_q=&\left\{ \|\widehat{\Sig}^{(0)}\beta-\widehat{\Sig}^{(k)}w^{k}-\widehat{\Sig}^{\mA_q}\delta^{(k)}+(X^{(k)})^{\intercal}\eps^{(k)}/n_k-(X^{(0)})^{\intercal}\eps^{(0)}/n_0\|_{\infty}\leq \frac{\lam_k}{2},~\forall k \in \mA_q\right.\\
& \left.\quad \min_{k \in \mA}\inf_{0\neq u\in \mathcal{B}_q(3h)}\frac{u^{\intercal}\widehat{\Sig}^{\mA_q}u}{\|u\|_2^2}\geq \phi_0>0 \right\},
\end{align*}
We have
\begin{align*}
\frac{1}{2}(\hat{v}^{(k)})^{\intercal}\widehat{\Sig}^{A_0}\hat{v}^{(k)}&\leq \frac{\lam_k}{2}\|\hat{v}^{(k)}\|_1 + \lam_k\|\delta^{(k)}\|_q-\lam_k\|\hat{\delta}^{(k)}\|_q.
\end{align*}

Since $\|\hat{\delta}^{(k)}\|_q\geq \|\hat{v}^{(k)}\|_q-\|\delta^{(k)}\|_q$, we have $\hat{v}^{(k)}\in B_q(3h)$ and
\[
  \|\hat{v}^{(k)}\|_2^2\leq \phi_0  (\hat{v}^{(k)})^{\intercal}\widehat{\Sig}^{\mA_q}\hat{v}^{(k)}\leq 2\lam_k\|\hat{v}^{(k)}\|_1.
\]
First, $\|\hat{v}^{(k)}\|_2\leq \|\hat{v}^{(k)}\|_q\leq 3h$ for any $q\in(0,1)$.
By Lemma 5 of \citet{Raskutti11} (Notice that $B_q(R_q)$ in \citet{Raskutti11} is defined such that $\|v\|_q^q\leq R_q$), we know that for any $\tau>0$,
\[
   \|\hat{v}^{(k)}\|_1\leq \sqrt{2h^q}\tau^{-q/2}\|\hat{v}^{(k)}\|_2+2h^q\tau^{1-q}.
\]
Let $\tau=2\lam_k/\phi_0$, we obtain
\[
    \|\hat{v}^{(k)}\|_2^2\leq \sqrt{2h^q}\tau^{1-q/2}\|\hat{v}^{(k)}\|_2+2h\tau^{2-q},
\]
which is a quadratic constraint on $\|\hat{v}^{(k)}\|_2$. We can solve and find the positive root, which gives
\begin{align*}
     \|\hat{v}^{(k)}\|_2^2&\leq 8h^q\tau^{2-q}\leq 8h^q (2\lam_k/\phi_0)^{2-q}&=Ch^q\left(\frac{\E[(y^{(0)}_1)^2]\log p}{n_0\phi_0^2}+\frac{\E[(y^{(k)}_1)^2]\log p}{n_k\phi_0^2}\right)^{1-q/2}\\
     &\leq  h^q \left(\frac{\log p}{n_0\wedge n_k}\right)^{1-q/2}.
\end{align*}
As a result,
\begin{align*}
    & \|\hat{v}^{(k)}\|_1\leq Ch^q \left(\frac{\E[(y^{(0)}_1)^2]\log p}{n_0\phi_0^2}+\frac{\E[(y^{(k)}_1)^2]\log p}{n_k\phi_0^2}\right)^{1/2-q/2}\\
     &\leq Ch^q \left(\frac{\log p}{n_0\wedge n_k}\right)^{1/2-q/2}\wedge h^2.
\end{align*}
The rest of the proof follows from the proof of Theorem \ref{thm1-l0}.

\end{proof}

\section{More results on simulation}
\label{sec-simu-app}
We report the numerical results on the estimated sparse indices (\ref{cond-agg2}). Specifically, we report $\widehat{C}$, which is the empirical probability of occurring $\{\widehat{R}^{(k)}~\text{is among the first }~|\mA|~\text{smallest},~k\in \mA\}$. If $\widehat{C}=1$, then $\max_{k\in \mA}\widehat{R}^{(k)}\leq \min_{k\in \mA^c}\widehat{R}^{(k)}$. Hence, $\widehat{C}$ close to 1 is favorable. We note that the case where $\mA=\emptyset$ or $\mA=\{1,\dots,K\}$ are trivial. Hence, we only consider other cases in Table \ref{table-rank}.

\begin{table}
\caption{\label{table-rank}The proportion of the occurring $\max_{k\in \mA}\widehat{R}^{(k)}\leq \max_{k\in\mA^c}\widehat{R}^{(k)}$ in all the settings considered in Section \ref{sec-simu}. }
\centering
\begin{tabular}{|c|c|c|c|c|c|c|c|}
\hline
\multirow{2}{*}{$h$} & \multirow{2}{*}{$|\mA|$} & \multicolumn{2}{c|}{Identity} &  \multicolumn{2}{c|}{Homogeneous} &  \multicolumn{2}{c|}{Heterogeneous} \\
\cline{3-8}
 & &(i) & (ii) &(i) & (ii) &(i) & (ii)\\
\hline
2 & 4 &1.00&1.00&0.99& 0.98& 0.88 & 1.00\\
 & 8  & 1.00&1.00&0.99& 0.98& 0.91 & 1.00\\
 & 12  & 1.00&1.00&1.00&0.99& 0.98 & 1.00\\
 & 16  & 1.00&1.00&1.00&0.99& 0.99 & 1.00\\
 \hline
 6 & 4 & 0.92&1.00&0.98&0.98& 0.94 & 1.00 \\
 & 8  &0.90&1.00&0.97&0.97& 0.95 & 1.00\\
 & 12 &0.95& 1.00&0.98&0.99& 0.96 & 1.00\\
 & 16 & 0.93&1.00&0.99&0.99& 1.00 &1.00\\
\hline
 12 & 4 &0.84&1.00&0.93&0.98& 0.50 & 1.00 \\
 & 8 & 0.87&1.00&0.99& 0.98&0.74 & 1.00\\
 & 12 & 0.88&1.00&0.94&0.98&0.94 & 1.00\\
 & 16 & 1.00&1.00&0.98&0.99&0.89 & 1.00\\
\hline
\end{tabular}
\end{table}

 \section{More results on data application}

We report the list  of genes analyzed in Section \ref{sec-data} and some basic summary statistics.

\begin{table}
\caption{\label{ap-tab2}The list of genes analyzed in Section \ref{sec-data} and the number of auxiliary samples (Aux. studies), the average primary sample sizes (Avg. pri. ss), and the average size of all auxiliary samples (Avg. aux. ss).}
\centering
{\small
\begin{tabular}{|c|c|c|c|c|c|}
  \hline
 & Target genes & Aux. studies & Avg. pri. ss & Avg. aux. ss \\ 
 \hline
1 & ENSG00000155304 & 47 & 177 & 14837 \\ 
  2 & ENSG00000154721 & 47 & 177 & 14837 \\ 
  3 & ENSG00000154734 & 47 & 177 & 14837 \\ 
  4 & ENSG00000171189 & 45 & 177 & 14146 \\ 
  5 & ENSG00000156299 & 47 & 177 & 14837 \\ 
  6 & ENSG00000159228 & 47 & 177 & 14837 \\ 
  7 & ENSG00000142197 & 47 & 177 & 14837 \\ 
  8 & ENSG00000159261 & 23 & 182 & 8094 \\ 
  9 & ENSG00000157557 & 47 & 177 & 14837 \\ 
  10 & ENSG00000182093 & 47 & 177 & 14837 \\ 
  11 & ENSG00000185437 & 47 & 177 & 14837 \\ 
  12 & ENSG00000183036 & 47 & 177 & 14837 \\ 
  13 & ENSG00000157601 & 47 & 177 & 14837 \\ 
  14 & ENSG00000160180 & 47 & 177 & 14837 \\ 
  15 & ENSG00000160181 & 23 & 237 & 6939 \\ 
  16 & ENSG00000160226 & 47 & 177 & 14837 \\ 
  17 & ENSG00000142185 & 46 & 177 & 14354 \\ 
  18 & ENSG00000197381 & 47 & 177 & 14837 \\ 
  19 & ENSG00000182871 & 47 & 177 & 14837 \\ 
  20 & ENSG00000142173 & 47 & 177 & 14837 \\ 
  21 & ENSG00000160299 & 47 & 177 & 14837 \\ 
  22 & ENSG00000160307 & 47 & 177 & 14837 \\ 
  23 & ENSG00000160182 & 20 & 237 & 5543 \\ 
  24 & ENSG00000157542 & 19 & 177 & 4009 \\ 
  25 & ENSG00000156284 & 14 & 237 & 4775 \\ 
   \hline
\end{tabular}
}
\end{table}
	
\begin{figure}[H]
 \centering
 \includegraphics[height=7cm, width=0.95\textwidth]{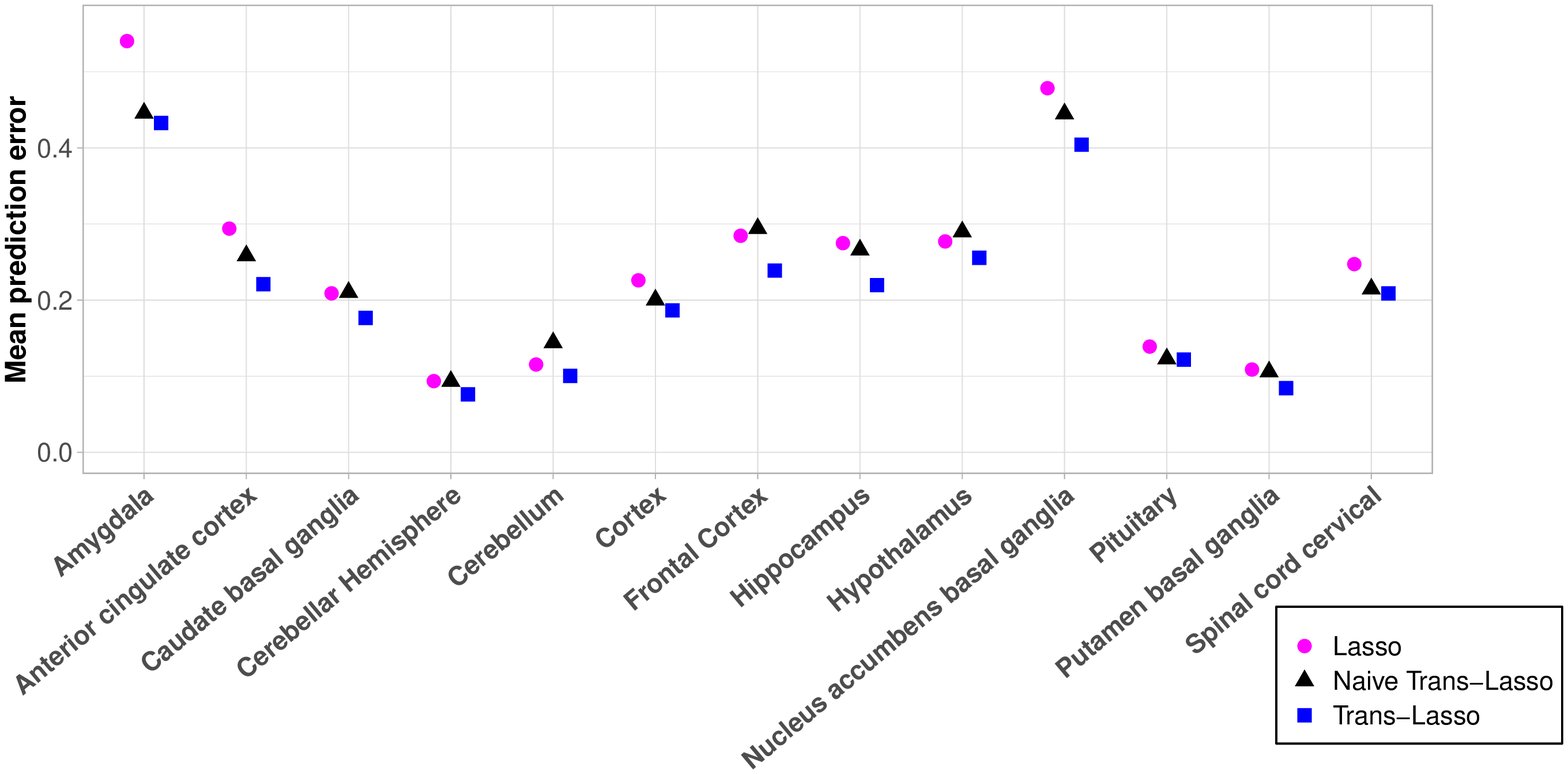}
 \caption{ \label{fig-data-3}Average prediction error of the Lasso, Naive-Trans-Lasso and Trans-Lasso via 5-fold cross validation for gene \texttt{JAM2} in multiple tissues.}
 \end{figure}

 \begin{figure}[H]
 \centering
 \makebox{\includegraphics[height=8cm, width=0.99\textwidth]{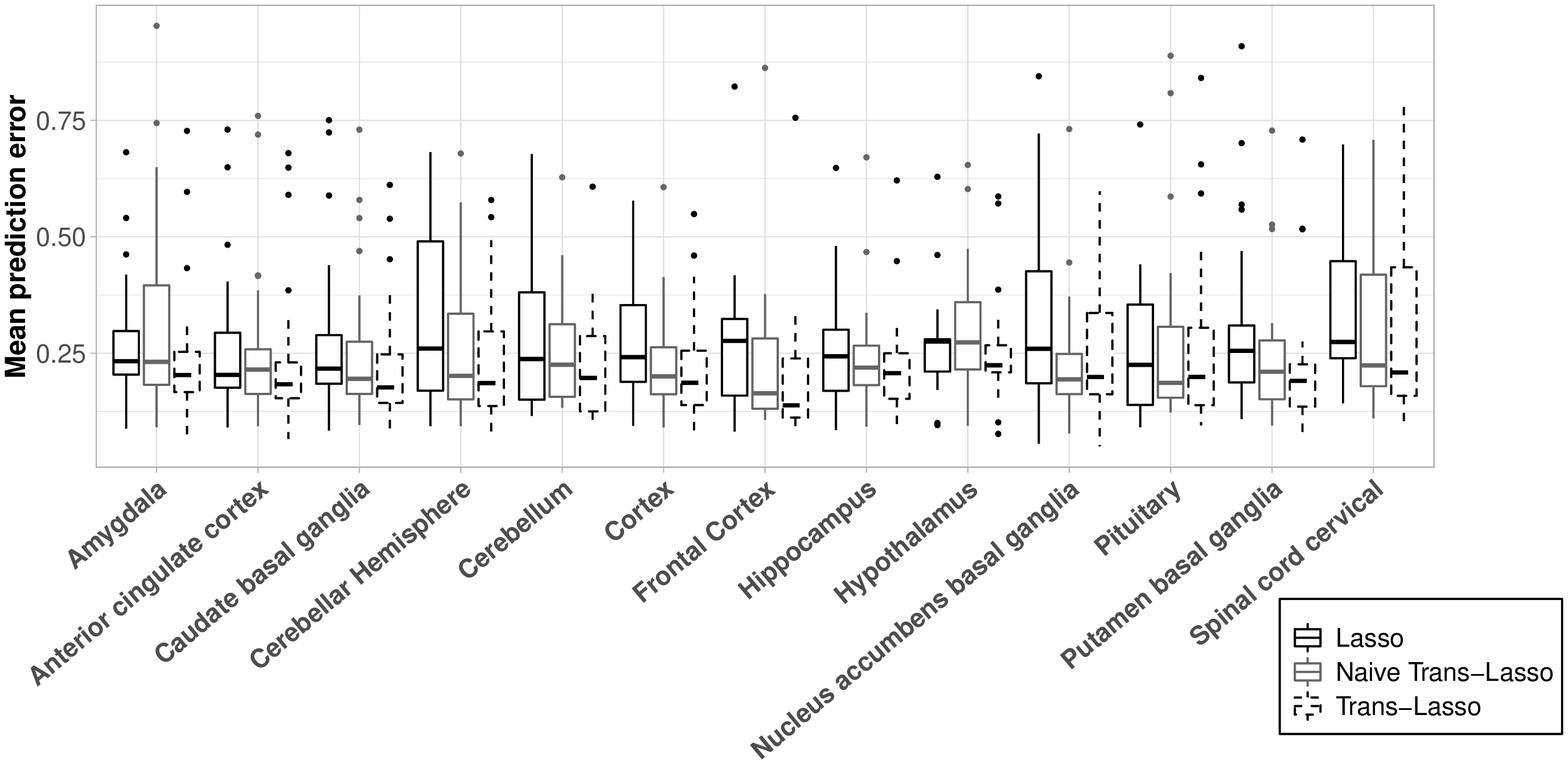}}
 \caption{ \label{fig-data-4}The overall prediction performance of the Lasso, Naive-Trans-Lasso, and Trans-Lasso for the 25 genes  on Chromosome 21 and in Module\underline{ }137, in multiple target tissues.}
 \end{figure}
\end{document}